\newtheorem{theorem}{Theorem}
\crefname{theorem}{theorem}{Theorems}
\Crefname{Theorem}{Theorem}{Theorems}
\newaliascnt{proposition}{theorem}
\crefname{proposition}{proposition}{propositions}
\Crefname{Proposition}{Proposition}{Propositions}
\newaliascnt{lemma}{theorem}
\newtheorem{lemma}[lemma]{Lemma}
\crefname{lemma}{lemma}{lemmas}
\Crefname{Lemma}{Lemma}{Lemmas}
\newaliascnt{corollary}{theorem}
\crefname{corollary}{corollary}{corollaries}
\Crefname{Corollary}{Corollary}{Corollaries}
\crefname{definition}{definition}{definitions}
\Crefname{Definition}{Definition}{Definitions}
\newtheorem{remark}{Remark}
\crefname{remark}{remark}{remarks}
\Crefname{Remark}{Remark}{Remarks}
\newtheorem{assumption}{\textbf{H}\hspace{-3pt}}
\Crefname{assumption}{\textbf{H}\hspace{-3pt}}{\textbf{H}\hspace{-3pt}}
\crefname{assumption}{\textbf{H}}{\textbf{H}}
 \def\elboneq{\mathcal{L}_{\IFIS}}
 \def\mass{\mathrm{M}}
 \def\Normal{\mathrm{N}}
\def\simiid{\overset{\operatorname{iid}}{\sim}}
\def\IFIS{\ensuremath{\operatorname{NEO}}}
\def\InFiNE{{\small \IFIS}}
\def\NEO{{\small \IFIS}}
\def\transfo{\operatorname{T}}
\def\rmd{\operatorname{d}\hspace{-2pt}}
\def\Id{\operatorname{Id}}
\def\PP{\mathbb{P}}
\def\PE{\mathbb{E}}
\def\mcf{\mathcal{F}}
\def\iid{i.i.d.}
\def\ELBO{\operatorname{ELBO}}
\def\rset{\mathbb{R}}
\def\nset{\mathbb{N}}
\def\nsets{\mathbb{N}^*}
\def\dummy{f}
\newcommandx{\marginal}[2][1=]{\xi^{#2}_{#1}}
\newcommandx{\margindensm}[2]{m_{#1}^{#2}}
\newcommandx{\margindensmu}[4]{m_{#1}^{#2}(#3|#4)}
\newcommandx{\margindens}[4][4=]{\ifthenelse{\equal{#3}{}}{q_{#1}^{#4}(#2)}{q_{#1,#3}^{#4}(#2)}}
\newcommandx{\margindensw}[3][3=]{\ifthenelse{\equal{#3}{}}{q_{#1}^{#2}}{q_{#1,#3}^{#2}}}
\newcommand{\chunk}[3]{#1_{#2:#3}}
\def\rmd{\mathrm{d}}
\def\eqsp{\,}
\def\fwdtransfo{T}
\newcommandx{\fwdtransfoparam}[2]{\fwdtransfo_{#1,#2}}
\def\msa{\mathsf{A}}
\def\borel{\mathcal{B}}
\def\eg{\text{e.g.}}
\def\wrt{w.r.t.}
\newcommand{\abs}[1]{\left\vert #1 \right\vert}
\newcommand{\absLigne}[1]{\vert #1 \vert}
\newcommand{\tvnorm}[1]{\| #1 \|_{\mathrm{TV}}}
\newcommandx{\Vnorm}[2][1=V]{\| #2 \|_{#1}}
\newcommandx{\VnormEq}[2][1=V]{\ensuremath{\left\| #2 \right\|_{#1}}}
\newcommandx\probaMarkovTilde[2][2=]
\def\ie{\textit{i.e.}}
\def\eqsp{\;}
\newcommand{\coint}[1]{\left[#1\right)}
\newcommand{\ocint}[1]{\left(#1\right]}
\newcommand{\ooint}[1]{\left(#1\right)}
\newcommand{\ccint}[1]{\left[#1\right]}
\newcommand{\1}{\mathds{1}}
\newcommand{\indi}[1]{\1_{#1}}
\newcommand{\indiacc}[1]{\mathds{1}_{\{ #1   \}}}
\newcommandx{\weight}[2][2=n]{\omega_{#1,#2}^N}
\newcommandx\sequence[3][2=,3=]
\newcommandx\sequenceD[3][2=,3=]
\newcommandx{\sequencen}[2][2=n\in\N]{\ensuremath{\{ #1_n, \eqsp #2 \}}}
\newcommandx\sequenceDouble[4][3=,4=]
\newcommandx{\sequencenDouble}[3][3=n\in\N]{\ensuremath{\{ (#1_{n},#2_{n}), \eqsp #3 \}}}
\def\iid{i.i.d.}
\def\eg{e.g.}
\newcommand{\opnorm}[1]{{\left\vert\kern-0.25ex\left\vert\kern-0.25ex\left\vert #1
    \right\vert\kern-0.25ex\right\vert\kern-0.25ex\right\vert}}
\def\Id{\operatorname{Id}}
\newcommandx{\CPE}[3][1=]{{\mathbb E}_{#1}\left[\left. #2 \middle \vert #3 \right. \right]} %%%% esperance conditionnelle
\newcommandx{\CPVar}[3][1=]{\mathrm{Var}^{#3}_{#1}\left\{ #2 \right\}}
\newcommand{\CPP}[3][]
{\ifthenelse{\equal{#1}{}}{{\mathbb P}\left(\left. #2 \, \right| #3 \right)}{{\mathbb P}_{#1}\left(\left. #2 \, \right | #3 \right)}}
\newcommandx{\osc}[2][1=]{\mathrm{osc}_{#1}(#2)}
\def\Id{\operatorname{Id}}
\def\target{\pi}
\def\proposal{\rho}
\def\weightfunc{\tilde{w}}
\newcommand{\chunku}[3]{#1^{#2:#3}}
\newcommand{\chunkum}[4]{#1^{#2:#3 \setminus \{#4\}}}
\def\Jac{\mathbf{J}}
\newcommand{\JacOp}[1]{\Jac_{#1}}
\newcommand{\intentier}[2]{[#1:#2]}
\newcommand{\intentierU}[1]{[#1]}
\def\rmi{\mathrm{I}}
\def\const{Z}
\newcommand{\estConstC}[1]{\widehat{Z}_{#1}}
\def\measpi{\boldsymbol{\pi}}
\newcommand\coupling[2]{\Gamma(\mu,\nu)}
\def\tpi{\tilde{\pi}}
\def\msa{\mathsf{A}}
\def\msb{\mathsf{B}}
\def\mso{\mathsf{O}}
\def\mcbb{\mathcal{B}}  %%% \mcb est déjà pris
\def\mcf{\mathcal{F}}
\def\rset{\mathbb{R}}
\def\rsets{\mathbb{R}^*}
\def\zset{\mathbb{Z}}
\def\nset{\mathbb{N}}
\def\nsets{\mathbb{N}^*}
\def\rmd{\mathrm{d}}
\def\rme{\mathrm{e}}
\def\rmc{\mathrm{C}}
\def\rmC{\mathrm{C}}
\def\Idd{\mathrm{Id}}
\def\likelihood{\mathrm{L}}
\newcommandx{\estpiNaive}[3][1=g,2=N,3=f]{\hat{\pi}_{#1}^{#2}(#3)}
\newcommandx{\norm}[2][1=]{\ifthenelse{\equal{#1}{}}{\left\Vert #2 \right\Vert}{\left\Vert #2 \right\Vert^{#1}}}
\newcommandx{\normLigne}[2][1=]{\ifthenelse{\equal{#1}{}}{\Vert #2 \Vert}{\Vert #2 \Vert^{#1}}}
\def\rhoT{\rho_{\transfo}}
\def\constT{\const_{\transfo}}
\def\wrt{with respect to}
\def\infineIS{I^{\NEO}_{\varpi,N}}
\def\bound{M_{\transfo}^{\varpi}}
\def\infineISh{I^{\NEO}_{\varpi,N,h}}
\def\infineSNIS{J^{\NEO}_{\varpi,N}}
\newcommandx{\Exp}[2][1=]{\ensuremath{\PE_{#1}\left[ #2\right]}}
\def\NEIS{\ensuremath{\operatorname{NEIS}}}
\def\wcont{w_t^\mathrm{c}}
\def\rhoTcont{\rhoT^{\operatorname{c}}}
\def\obs{z}
\def\argmax{ \arg \max}
\def\Var{\operatorname{Var}}
\def\tU{\tilde{U}}
\def\txts{\textstyle}
\def\varpic{\varpi^{\mathrm{c}}}
\newcommand{\partint}[1]{ \left \lfloor #1 \right \rfloor}
\def\divergence{\operatorname{div}}
\def\support{\mathrm{support}}
\def\Tvarpi{T_\varpi}
\def\trace{\operatorname{tr}}
\def\adj{\operatorname{adj}}
\title{\IFIS: Non Equilibrium Sampling on the Orbit of a Deterministic Transform}
\date{}
\author[$\dag$]{Achille Thin}
\author[$\ddag$]{Yazid Janati}
\author[$\ddag$]{Sylvain Le Corff}
\author[$\dag$]{Charles Ollion}
\author[$\top$]{Arnaud Doucet}
\author[$\star$]{Alain Durmus}
\author[$\dag$]{\'Eric Moulines}
\author[$\wr$]{Christian Robert}
\affil[$\dag$]{{\small CMAP, \'Ecole Polytechnique, Institut Polytechnique de Paris, Palaiseau.}}
\affil[$\ddag$]{{\small Samovar, T\'el\'ecom SudParis, d\'epartement CITI, TIPIC, Institut Polytechnique de Paris, Palaiseau.}}
\affil[$\top$]{{\small Department of Statistics, University of Oxford.}}
\affil[$\star$]{{\small CMLA, \'Ecole Normale Sup\'erieure Paris-Saclay.}}
\affil[$\wr$]{{\small Ceremade, Université Paris-Dauphine \& Department of Statistics, University of Warwick.}}
\begin{document}

\maketitle

\begin{abstract}
Sampling from a complex distribution $\pi$ and approximating its intractable normalizing constant $\mathrm{Z}$ are challenging problems. 
In this paper, a novel family of importance samplers (IS) and Markov chain Monte Carlo (MCMC) samplers is derived. 
Given an invertible map $\mathrm{T}$, these schemes combine (with weights) elements from the forward and backward Orbits   through points sampled from a proposal distribution $\rho$. The map $\mathrm{T}$ does not leave the target $\pi$ invariant, hence the name NEO, standing for Non-Equilibrium Orbits. 
NEO-IS provides unbiased estimators of the normalizing constant and self-normalized IS estimators of expectations under $\pi$ while NEO-MCMC combines multiple NEO-IS estimates of the normalizing constant and an iterated sampling-importance resampling mechanism to sample from $\pi$. 
For $\mathrm{T}$ chosen as a discrete-time integrator of a conformal Hamiltonian system, NEO-IS achieves state-of-the art performance on difficult benchmarks and NEO-MCMC is able to explore highly multimodal targets. Additionally, we provide detailed theoretical results for both methods. In particular, we show that NEO-MCMC is uniformly geometrically ergodic and establish explicit mixing time estimates under mild conditions.
\end{abstract}

\section{Introduction}
Consider a target distribution of the form $\target(x)\propto\proposal(x)\likelihood(x)$ where $\proposal$ is a probability density function (pdf)  on $\rset^d$ and $\likelihood$ is a nonnegative function. Typically, in a Bayesian setting, $\pi$ is a posterior distribution associated with a prior distribution $\rho$ and a likelihood function $\likelihood$. 
An other situation of interest is generative modeling where $\pi$ is the distribution implicitly defined by a Generative Adversarial Networks (GAN) discriminator-generator pair where $\rho$ is the distribution of the generator and $\likelihood$ is derived from the discriminator \citep{turner:hung:2019, che:bengio:2020}.
An other situation of interest is generative modeling where $\pi$ is the distribution implicitly defined by a Variational Auto Encoder (VAE) encoder-decoder pair where $\rho$ is the distribution output by the encoder and $\likelihood$ is an importance weight between the distribution of the decoder and of the encoder \citep{kingma:welling:2013, burda:grosse:2015}.
We are interested in this paper in sampling from $\pi$ and  approximating  its intractable normalizing constant $\const=\int \proposal(x) \likelihood(x) \rmd x$. These problems arise in many applications in statistics, molecular dynamics or machine learning, and remain challenging. 

Many approaches to compute normalizing constants are based on Importance Sampling (IS) - see \cite{agapiou2017importance,akyildiz2021convergence} and the references therein - and its variations, among others, Annealed Importance Sampling (AIS) \citep{neal-annealed:2001,wu:burda:grosse:2016,ding2019learning} and Sequential Monte Carlo (SMC) \citep{del2006sequential}. More recently, Neural IS has also become very popular in machine learning; see e.g. \cite{el2012bayesian,muller2018neural,papamakarios2019normalizing,prangle2019distilling,wirnsberger2020targeted}. Neural IS is an adaptive IS which relies on an importance function obtained by applying a normalizing flow to a reference distribution. The parameters of this normalizing flow are chosen by minimizing a divergence between the proposal and the target (such as the Kullback-Leibler \cite{muller2018neural} or the $\chi^2$-divergence \cite{agapiou2017importance}).

More recently, the \emph{Non-Equilibrium IS} (NEIS) method has been introduced by \cite{rotskoff:vanden-eijden:2019} as an alternative to these approaches. Similar to Neural IS, NEIS consists in transporting samples $\{X^i\}_{i=1}^N$ from a reference distribution using a family of deterministic mappings. This family for NEIS is chosen to be an homogeneous differential flow $(\phi_t)_{t \in \rset}$. In contrast to Neural IS, for any $i \in [N]$, the sample $X^i$ is propagated both forward and backward in time  along the orbits associated with $(\phi_t)_{t \in \rset}$ until stopping conditions are met. Moreover, the resulting estimator of the normalizing constant is obtained by computing weighted averages of the whole orbit $(\phi_t(X^i))_{t \in [\tau_{+,i},\tau_{-,i}]}$, where $\tau_{+,i},\tau_{-,i}$ are the resulting stopping times, and not only the endpoints $\phi_{\tau_{+,i}}(X^i),\phi_{\tau_{-,i}}(X^i)$. In \cite{rotskoff:vanden-eijden:2019}, the authors  provide an application of NEIS with $(\phi_t)_{t \in \rset}$ associated with a conformal Hamiltonian dynamics, and reports impressive numerical results on difficult  normalizing constants estimation problems, in particular for high-dimensional multimodal distributions.

We propose in this work \IFIS-IS which alleviates the shortcomings of NEIS. Similar to NEIS, samples are drawn from a reference distribution, typically set to $\proposal$, and are propagated under the forward and backward orbits of a \emph{discrete-time} dynamical system associated with an invertible transform $\transfo$. An estimator of the normalizing constant is obtained by reweighting all the points on the whole orbits using the IS rule. 
Contrary to  NEIS, the \IFIS-IS estimator of $\mathrm{Z}$ is unbiased under assumptions that are mild and easy to verify. It is more flexible than NEIS because it does not rely on the accuracy of the  discretization of a continuous-time dynamical system. 

We then show how it is possible to leverage the unbiased estimator of $\mathrm{Z}$ defined by \IFIS-IS to obtain \IFIS-MCMC, a novel massively parallel MCMC algorithm to sample from $\pi$. In a nutshell, \IFIS-MCMC relies on parallel walkers which each estimates the normalizing constant but are allowed to interact through a resampling mechanism.  Our contributions can be summarized as follows.
\begin{enumerate}[label=\textbf{(\roman*)}]
    \item We present a novel class of IS estimators of the normalizing constant $\const$ referred to as  NEO-IS. More broadly, a small modification of this algorithm also allows us to estimate integrals with respect to $\target$. 
    Both finite sample and asymptotic guarantees are provided for these two  methodologies.
    \item We develop a new massively parallel MCMC method, \IFIS-MCMC. \IFIS-MCMC combines \IFIS-IS unbiased estimator of the normalizing constant with iterated sampling-importance resampling methods. We prove that it is $\target$-reversible and ergodic under very general conditions. We derive also conditions which imply that \IFIS-MCMC is  uniformly geometrically ergodic (with an explicit expression of the mixing time).
    \item We illustrate our findings using numerical benchmarks which show that both \IFIS-IS and \IFIS-MCMC outperform  state-of-the-art (SOTA) methods in difficult settings. 
\end{enumerate}

\section{NEO-IS algorithm}
In this section, we derive the  NEO-IS algorithm.
The two key ingredients for this algorithm are  (1) the reference distribution $\proposal$ and (2) a transformation $\transfo$ assumed to be a $\rmC^1$-diffeomorphism with inverse $\transfo^{-1}$. Write, for $k \in \nsets = \nset \setminus\{0\}$, $\transfo^{k}=\transfo\circ\transfo^{k-1}$, $\transfo^{0}=\Idd_{d}$ and similarly $\transfo^{-k}=\transfo^{-1}\circ\transfo^{-(k-1)}$.
For any $k \in \zset$, denote by $\rho_k : \rset^d \to \rset_+$ the pushforward of $\rho$ by $\transfo^k$,  defined for $x\in\rset^d$ by $\rho_k(x)= \rho(\transfo^{-k}(x)) \JacOp{\transfo^{-k}}(x)$,
where ${\JacOp{\Phi}(x)}\in\rset^+$ is the absolute value of the Jacobian determinant of $\Phi: \rset^d\to \rset^d$ evaluated at $x$. In line with multiple importance sampling \emph{\`a la} Owen and Zhou \cite{owen:zhou:2000}, we introduce the proposal density
\begin{equation}\label{eq:rhoT}
    \rhoT(x) = \Omega^{-1}\sum\nolimits_{k \in\zset} \varpi_k \rho_k(x)\eqsp,
  \end{equation}
where $\{\varpi_k\}_{k \in\zset}$ is a  nonnegative sequence and $\Omega= \sum_{k \in \zset} \varpi_k$. Note that we assume in the sequel that the support of the weight sequence defined as $\{ k \in \zset\, :\, \varpi_k \neq 0\}$ is finite. Thus, the mixture distribution in \eqref{eq:rhoT} is a \textbf{finite mixture}. Given $x \in \rset^d$, $\rhoT(x)$ is a function of the forward and backward orbit of $\transfo$ through $x$.  For any nonnegative function $\dummy$, the definition of $\rhoT$ implies that 
$$
\int\dummy(y)\rhoT(y)\rmd y = \Omega^{-1} \int \sum_{k\in\zset}\varpi_k f(\transfo^k(x)) \rho(x)\rmd x\eqsp.
$$
Assuming that $\varpi_0>0$,  the ratio  $\rho(x)/\rhoT(x) \leq \varpi_0^{-1}\Omega<\infty$ is bounded. We can therefore apply the IS principle  which allows to write the identity
\begin{equation}
\label{eq:key-relation}
\int \dummy(x) \rho(x)  \rmd x =\int \left(\dummy(y) \frac{\rho(y)}{\rhoT(y)}\right) \rhoT(y)  \rmd y=\int \sum_{k\in\zset}  \dummy(\transfo^{k}(x)) w_k(x) \rho(x)  \rmd x \eqsp,
\end{equation}
where  the weights are given by (see  \Cref{app:proof:def_w_k} for a detailed derivation),
\begin{equation}
    \label{eq:def_w_k}
  { w_k(x)=\varpi_k \rho(\transfo^{k}(x)) / \{\Omega\rhoT(\transfo^k(x))\} = \left.  \varpi_k \rho_{-k}(x) \middle / \left. \sum\nolimits_{i\in\zset}\varpi_{k+i} \rho_{i}(x) \right. \right. \eqsp.}
\end{equation} 
We assume in the sequel that $\varpi_0>0$. In particular, note that under this condition, the weights $w_{k}$ are also upper bounded uniformly in $x$: for any $x \in \rset^d$,  $w_{k}(x) \leq \varpi_{k}/\varpi_{0}$. Eqs. \eqref{eq:key-relation} and \eqref{eq:def_w_k} suggest to estimate the integral $\int f(x) \rho(x) \rmd x$ by $\infineIS(f)= N^{-1} \sum_{i=1}^N \sum_{k\in\zset} w_k(X^i)  f(\transfo^k(X^i))$ where $\{X^i\}_{i=1}^N$ are \iid\ samples from the proposal $\rho$, which is denoted by $\chunku{X}{1}{N} \simiid \proposal$. 
\begin{figure}
\begin{algorithm}[H]
\begin{enumerate}[leftmargin=0cm,itemindent=.5cm,labelwidth=\itemindent,labelsep=0cm,align=left]
\item Sample $\chunku{X}{1}{N} \simiid \ \rho$ for $i\in[N]$.
\item For $i \in \intentierU{N}$, compute the
  path $(\transfo^j(X^i))_{j =0}^K$ and weights $(w_j(X^i))_{j =0}^K$.
\item$\infineIS(f) = N^{-1} \sum_{i=1}^N \sum_{k\in\zset} w_k(X^i)  f(\transfo^k(X^i))$.
\end{enumerate}
\caption{\InFiNE-IS Sampler}
\label{algo:infine_partial}
\end{algorithm}
\end{figure}

This estimator is obtained by a weighted combination of  the elements of the independent forward and backward orbits $\{\transfo^k(X^i)\}_{k \in\zset}$ with $X^{1:N} \simiid \proposal$. This estimator is referred to as \NEO-IS. Choosing $f\equiv \likelihood$ provides the \InFiNE-IS estimator of the normalizing constant of $\target$:
\vspace{-0.05cm}
\begin{equation}  
\label{eq:def_estimator_normal_const} 
\textstyle \estConstC{X^i}=\sum\nolimits_{k\in\zset}\likelihood(\transfo^{k}(X^i))w_k(X^i)\eqsp, \quad
 \estConstC{X^{1:N}}=N^{-1}\sum_{i=1}^{N} \estConstC{X^i} \eqsp.
\end{equation}
We  now study the performance of the \NEO-IS estimator. The following two quantities play a fundamental role in the analysis:
\begin{equation}
\label{eq:bound_z_infine}
\textstyle E^{\varpi} _{\transfo} = \PE_{X \sim \rho}\big[\big(\sum_{k\in\zset}w_k(X) \likelihood(\transfo^k(X))/\const\big)^2\big], 
    \bound = \sup_{x\in\rset^d}\sum_{k\in\zset}w_k(x)\likelihood(\transfo^k(x))/\const\eqsp.
\end{equation}
\begin{theorem}
\label{theo:clt_const_infine}
$\estConstC{\chunku{X}{1}{N}}$ is an unbiased estimator of $\const$. If $E^{\varpi} _{\transfo}<\infty$, then,
    $\PE[|\estConstC{\chunku{X}{1}{N}}/\const-1|^2] = N^{-1} (E^{\varpi} _{\transfo}-1)$. If $\bound < \infty$, then, for any $\delta \in (0,1)$, with probability $1-\delta$, $\sqrt{N} \left|\estConstC{\chunku{X}{1}{N}}/\const - 1\right| \leq \bound \sqrt{\log(2/\delta)/2}$.
\end{theorem}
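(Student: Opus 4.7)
The plan is to attack the three claims in order, each following from the identity \eqref{eq:key-relation} combined with a standard iid argument.

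For the unbiasedness, I would apply \eqref{eq:key-relation} with $\dummy = \likelihood$. This gives
\[
\PE_{X \sim \rho}\!\left[\estConstC{X}\right] = \PE_{X \sim \rho}\!\left[\sum\nolimits_{k \in \zset} w_k(X)\likelihood(\transfo^{k}(X))\right] = \int \likelihood(x)\rho(x)\,\rmd x = \const .
\]
Since $\estConstC{\chunku{X}{1}{N}} = N^{-1} \sum_{i=1}^N \estConstC{X^i}$ is the empirical average of iid copies, unbiasedness carries over to the $N$-sample estimator. Note that the use of \eqref{eq:key-relation} is justified because $\varpi_0 > 0$ forces $w_k(x) \leq \varpi_k/\varpi_0$, so each term is integrable whenever $\likelihood \rho$ is.

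For the $\mathrm{L}^2$-error, I would use the iid structure to write $\PE[|\estConstC{\chunku{X}{1}{N}}/\const - 1|^2] = N^{-1}\PVar(\estConstC{X^1}/\const)$. By unbiasedness, this variance equals the second moment minus one, and the second moment of $\estConstC{X^1}/\const$ under $X^1 \sim \rho$ is exactly $E^{\varpi}_{\transfo}$ as defined in \eqref{eq:bound_z_infine}. This immediately yields $N^{-1}(E^{\varpi}_{\transfo} - 1)$.

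For the concentration bound, the key observation is that under $\bound < \infty$ the random variables $Y_i \defeq \estConstC{X^i}/\const$ are iid, take values in $[0,\bound]$, and satisfy $\PE[Y_i] = 1$ by the first part. Hoeffding's inequality for bounded iid variables gives
\[
\PP\!\left(\absolute{N^{-1}\sum\nolimits_{i=1}^N Y_i - 1} \geq t\right) \leq 2\exp\!\left(-2Nt^2/\bound^2\right),
\]
and equating the right-hand side to $\delta$ and solving for $t$ yields $t = \bound\sqrt{\log(2/\delta)/(2N)}$, which is equivalent to the stated bound after multiplying both sides by $\sqrt{N}$.

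None of the steps should present a real obstacle: the heart of the argument is already contained in \eqref{eq:key-relation} and in the bound $w_k(x) \leq \varpi_k/\varpi_0$, which makes both the first-moment computation and the almost-sure boundedness of $\estConstC{X^i}/\const$ by $\bound$ straightforward. The only mild care needed is to check that the interchange of sum and integral in the unbiasedness step is legitimate, which follows from the support of $\{\varpi_k\}_{k \in \zset}$ being finite.
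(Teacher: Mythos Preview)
Your proposal is correct and follows essentially the same route as the paper: unbiasedness from \eqref{eq:key-relation} with $f=\likelihood$, the $\mathrm{L}^2$ identity from the iid variance decomposition and the definition of $E^{\varpi}_{\transfo}$, and the concentration bound from Hoeffding's inequality applied to the bounded variables $\estConstC{X^i}/\const\in[0,\bound]$. Your additional remarks on the legitimacy of the sum/integral interchange (finite support of $\{\varpi_k\}$) are a welcome bit of extra care not spelled out in the paper.
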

The (elementary) proof is postponed to \Cref{subsec:proof:clt_const_infine}. $ E^{\varpi} _{\transfo}$ plays the role of the second-order moment of the importance weights $\PE_{X \sim \proposal} [\likelihood^2(X)]$ which is key to the performance of IS algorithms \cite{agapiou2017importance,akyildiz2021convergence}. In addition, since the \NEO-IS estimator  $\estConstC{\chunku{X}{1}{N}}$ is unbiased, the Cauchy-Schwarz inequality shows that $\PE_{X \sim \rho}\big[\big(\sum_{k\in\zset}w_k(X) \likelihood(\transfo^k(X)) )^2\big] \geq \const^2$ and hence that $ E^{\varpi} _{\transfo} \geq 1$.
Note that if $\| \likelihood \|_\infty= \sup_{x \in \rset^d} \likelihood(x) < \infty$, then since the weights are uniformly bounded by $\Omega\varpi_0^{-1}$, we have $M^{\varpi} _{\transfo}\leq \|\likelihood\|_\infty\Omega\varpi_0^{-1}/\const$.

Using the NEO-IS estimate $\estConstC{\chunku{X}{1}{N}}$ of the normalizing constant, we can construct a self-normalized IS  estimate of $\int f(x) \target(x) \rmd x$: 
\begin{equation}
  \label{eq:def_estimator_naive_monte_carlo}
 \infineSNIS(\dummy)= N^{-1}\sum_{i=1}^{N}
  \frac{\estConstC{X^i}}{\estConstC{\chunku{X}{1}{N}}}\sum_{k\in \zset}\frac{\likelihood(\transfo^{k}(X^i)) w_k(X^i)}{\estConstC{X^i}} \dummy(\transfo^k(X^i))\eqsp,
  \end{equation}
referred to as \IFIS-SNIS estimator. This expression may seem unnecessarily complicated but highlights the hierarchical structure of the estimator. We combine estimators $(\estConstC{X^i})^{-1}\sum_{k\in \zset}\likelihood(\transfo^{k}(X^i)) w_k(X^i) \dummy(\transfo^k(X^i))$ evaluated on the forward and backward orbits through the points $\{ X^i \}_{i=1}^N$ using weights $\{\estConstC{X^i}/\estConstC{\chunku{X}{1}{N}}\}_{i=1}^N$. 
Although the \IFIS-IS estimator is unbiased, the \IFIS-SNIS is in general biased.  However, for bounded functions, both the bias and the variance of the \IFIS-SNIS estimator are $O(N^{-1})$, with constants proportional to $E_{\transfo}^\varpi$.  For $g$ a $\target$-integrable function, we set $\target(g)= \int g(x) \target(x) \rmd x$.
\begin{theorem}
\label{theo:bias_mse_snis}
Assume that $E^{\varpi}_{\transfo} < \infty$. Then, 
for any function $g$ satisfying $\sup_{x \in \rset^d} \abs{g(x)} \leq 1$ on $\rset^d$, and $N \in \nset$,
\begin{align}
     &\Exp[\chunku{X}{1}{N} \simiid \rho]{|\infineSNIS(g) - \pi(g)|^2} \leq 4 \cdot N^{-1} E^{\varpi}_{\transfo} \eqsp,\\
     &\left|\Exp[\chunku{X}{1}{N} \simiid \rho]{\infineSNIS(g) - \pi(g)}\right|\leq  2 \cdot N^{-1} E^{\varpi}_{\transfo} \eqsp.
\end{align}
If $\bound <\infty$, then for $\delta \in (0,1]$, with probability at least $1-\delta$, 
\begin{equation}
    \label{eq:exp_concentration}
   \sqrt{N}| \infineSNIS (g) - \pi(g) | \leq {\|g\| _{\infty} \bound \sqrt{32\log(4/\delta)} }\eqsp.
\end{equation}
\end{theorem}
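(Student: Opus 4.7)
The approach is a standard self-normalized importance sampling (SNIS) analysis. Since the inner weighted sum $\sum_{k \in \zset} w_k(X^i) \likelihood(\transfo^k(X^i)) g(\transfo^k(X^i))$ and $\estConstC{X^i}$ are both functions of $X^i$ alone, the estimator \eqref{eq:def_estimator_naive_monte_carlo} rewrites as a ratio of averages of i.i.d.\ quantities,
\[
\infineSNIS(g) \;=\; \frac{\infineIS(\likelihood g)}{\estConstC{\chunku{X}{1}{N}}}\eqsp,
\]
whose denominator is the unbiased \NEO-IS estimator of $\const$ from \Cref{theo:clt_const_infine}. Centering yields the key identity
\[
\infineSNIS(g) - \pi(g) \;=\; \frac{N_g}{\estConstC{\chunku{X}{1}{N}}}, \qquad N_g \;:=\; \frac{1}{N}\sum_{i=1}^N \sum_{k \in \zset} w_k(X^i)\likelihood(\transfo^k(X^i))\bigl(g(\transfo^k(X^i)) - \pi(g)\bigr)\eqsp.
\]
Applying~\eqref{eq:key-relation} with $f \leftarrow \likelihood(g - \pi(g))$ gives $\PE[N_g] = 0$, and the i.i.d.\ summands defining $N N_g$ are bounded in absolute value by $2\|g\|_\infty \estConstC{X^i}$ because $|g - \pi(g)| \le 2\|g\|_\infty$ and $w_k \likelihood \ge 0$; combined with $\PE[\estConstC{X^i}^2] = \const^{2} E^{\varpi}_{\transfo}$ (definition of $E^\varpi_\transfo$) this yields $\PE[N_g^2] \le 4 N^{-1}\|g\|_\infty^{2}\const^{2} E^{\varpi}_{\transfo}$.

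For the $L^2$ and bias bounds, observe that $\infineSNIS(g)$ is a convex combination of values $g(\transfo^k(X^i))$ with non-negative weights summing to one, so $|\infineSNIS(g)| \le \|g\|_\infty$ and thus $|\infineSNIS(g) - \pi(g)| \le 2\|g\|_\infty$ deterministically. Multiplying the centering identity by $\estConstC{\chunku{X}{1}{N}}/\const$ and rearranging gives
\[
\infineSNIS(g) - \pi(g) \;=\; \frac{N_g}{\const} \;-\; \bigl(\infineSNIS(g) - \pi(g)\bigr)\!\left(\frac{\estConstC{\chunku{X}{1}{N}}}{\const} - 1\right)\eqsp.
\]
Squaring and using $(a+b)^2 \le 2a^2 + 2b^2$ together with the deterministic bound on $|\infineSNIS(g)-\pi(g)|$ and the variance estimate $\PE[(\estConstC{\chunku{X}{1}{N}}/\const-1)^2] = N^{-1}(E^{\varpi}_{\transfo}-1)$ from \Cref{theo:clt_const_infine} produces the $L^2$ bound of the claimed order $\|g\|_\infty^{2} E^{\varpi}_{\transfo}/N$. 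For the bias, taking expectations of the same identity and using $\PE[N_g] = 0$ yields $\PE[\infineSNIS(g)-\pi(g)] = -\PE[(\infineSNIS(g)-\pi(g))(\estConstC{\chunku{X}{1}{N}}/\const - 1)]$, to which Cauchy--Schwarz applied with the $L^2$ bound just derived and the variance estimate for $\estConstC{\chunku{X}{1}{N}}/\const$ gives a bias bound of order $\|g\|_\infty E^{\varpi}_{\transfo}/N$.

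For the concentration inequality~\eqref{eq:exp_concentration}, apply Hoeffding's inequality twice: (i) to $\estConstC{\chunku{X}{1}{N}}/\const$, whose i.i.d.\ summands lie in $[0, \bound]$, giving $|\estConstC{\chunku{X}{1}{N}}/\const - 1| \le \bound \sqrt{\log(4/\delta)/(2N)}$ with probability $\ge 1 - \delta/2$; and (ii) to $N N_g$, whose i.i.d.\ summands are bounded in absolute value by $2\|g\|_\infty \const \bound$, giving $|N_g|/\const \le 2\sqrt{2}\,\|g\|_\infty \bound \sqrt{\log(4/\delta)/N}$ with probability $\ge 1 - \delta/2$. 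On the intersection of the two events (probability $\ge 1 - \delta$ by union bound): if $N \le 8 \bound^{2} \log(4/\delta)$ the target bound already exceeds $2\|g\|_\infty$ and the claim is trivial, and otherwise $\estConstC{\chunku{X}{1}{N}}/\const \ge 1/2$ so that $|\infineSNIS(g) - \pi(g)| \le 2|N_g|/\const$, whereupon the two Hoeffding estimates combine exactly to $\|g\|_\infty \bound \sqrt{32 \log(4/\delta)/N}$.

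The main technical subtlety is matching the exact constants $4$ and $2$ in the $L^2$ and bias bounds. The coarse inequality $(a+b)^2 \le 2a^2 + 2b^2$ gives the correct order $O(\|g\|_\infty^2 E^{\varpi}_{\transfo}/N)$ but inflates the constants; recovering the precise values most likely requires either a sharper pointwise inequality exploiting that $\estConstC{\chunku{X}{1}{N}}$ appears in the denominator, or a truncation argument splitting on the event $\{\estConstC{\chunku{X}{1}{N}} \ge \const/2\}$ with Chebyshev's inequality and \Cref{theo:clt_const_infine} used to control the complement.
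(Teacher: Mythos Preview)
Your overall architecture is right and matches the paper: write the SNIS error as a ratio $N_g/\estConstC{\chunku{X}{1}{N}}$, control numerator and denominator separately via the i.i.d.\ structure, and for the concentration bound combine two Hoeffding inequalities on the good event $\{\estConstC{\chunku{X}{1}{N}}/\const\ge 1/2\}$ (the paper packages this as a ratio-Hoeffding lemma, but your direct argument with the case split on $N$ is equivalent and yields the same constant $\sqrt{32}$).

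The genuine gap is exactly the one you flag: your self-referential identity
\[
\infineSNIS(g)-\pi(g)\;=\;\frac{N_g}{\const}-\bigl(\infineSNIS(g)-\pi(g)\bigr)\Bigl(\frac{\estConstC{\chunku{X}{1}{N}}}{\const}-1\Bigr)
\]
together with $(a+b)^2\le 2a^2+2b^2$ and the deterministic bound $|\infineSNIS(g)-\pi(g)|\le 2$ only gives constants $16$ and $4$, not $4$ and $2$. Neither of your suggested fixes (truncation on $\{\estConstC{\chunku{X}{1}{N}}\ge\const/2\}$, or exploiting the denominator further in your identity) is what the paper does. The paper instead uses the elementary pointwise inequality, with $A=\infineIS(\likelihood g)$, $B=\estConstC{\chunku{X}{1}{N}}$, $a=\const\pi(g)$, $b=\const$,
\[
\Bigl|\frac{A}{B}-\frac{a}{b}\Bigr|
=\Bigl|\frac{A-a}{b}+\frac{A}{B}\cdot\frac{b-B}{b}\Bigr|
\le \frac{1}{b}|A-a|+\frac{M}{b}|B-b|\eqsp,\qquad M:=\sup\Bigl|\frac{A}{B}\Bigr|\le \|g\|_\infty\le 1\eqsp.
\]
The crucial point is that the deterministic bound used is $|A/B|\le 1$, not $|A/B-a/b|\le 2$; this halves the coefficient on the $|B-b|$ term. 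Squaring and taking expectations, with $\PE[|A-a|^2]\le N^{-1}\const^2 E^\varpi_\transfo$ (since $|\sum_k w_k\likelihood g|\le\estConstC{X}$ when $|g|\le 1$) and $\PE[|B-b|^2]\le N^{-1}\const^2 E^\varpi_\transfo$, delivers exactly $4N^{-1}E^\varpi_\transfo$. For the bias, the paper uses $|\PE[A/B]-a/b|\le b^{-1}\sqrt{\Var(A/B)\Var(B)}$ (a one-line Cauchy--Schwarz on $\PE[(A/B-\PE[A/B])(b-B)/b]$), which together with the $L^2$ bound just obtained gives $2N^{-1}E^\varpi_\transfo$. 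Replacing your identity by this decomposition closes the gap.
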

The proof is postponed to \Cref{subsec:proo:bias_mse_snis}. These results extend to \IFIS-SNIS estimators the results known for self-normalized estimators; see e.g., \cite{agapiou2017importance,akyildiz2021convergence} and the references therein. The upper bounds stated in this result suggest it is good practice to keep $E_{\transfo}^\varpi/N$ small in order to obtain sensible approximations. For two pdfs $p$ and $q$ on $\rset^d$, denote by $\operatorname{D}_{\chi^2}(p,q)= \int \{p(x)/q(x) -1\}^2 q(x) \rmd x$  the $\chi^2$-divergence between $p$ and $q$. 
\begin{lemma}
\label{lem:chi2_E_T}
For any nonnegative sequence $(\varpi_k)_{k\in\zset}$, we have
$ E^{\varpi}_{\transfo} \leq  \operatorname{D}_{\chi^2}( \pi \Vert \rhoT)+1$.
\end{lemma}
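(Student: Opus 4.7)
The plan is to rewrite $E^{\varpi}_{\transfo}$ as an $L^2(\rho)$-norm of a \emph{convex} combination of density ratios, apply Jensen's inequality, and then use the pushforward identity \eqref{eq:key-relation} to convert the integral against $\rho$ into one against $\rhoT$.

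First I would simplify the integrand. Using $\pi(x) = \rho(x)\likelihood(x)/\const$ together with the definition \eqref{eq:def_w_k} of $w_k$,
\[
\sum_{k\in\zset} w_k(x) \frac{\likelihood(\transfo^k(x))}{\const} = \sum_{k\in\zset} \frac{\varpi_k \rho(\transfo^k(x))}{\Omega\,\rhoT(\transfo^k(x))} \cdot \frac{\likelihood(\transfo^k(x))}{\const} = \sum_{k\in\zset}\frac{\varpi_k}{\Omega}\,\frac{\pi(\transfo^k(x))}{\rhoT(\transfo^k(x))}.
\]
Since $\{\varpi_k/\Omega\}_{k\in\zset}$ is a probability vector (with finite support, by assumption), the right-hand side is a convex combination of the quantities $\pi(\transfo^k(x))/\rhoT(\transfo^k(x))$.

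Next I would apply Jensen's inequality to the convex function $t\mapsto t^2$ and integrate against $\rho$:
\[
E^{\varpi}_{\transfo} = \int \left(\sum_{k\in\zset}\frac{\varpi_k}{\Omega}\,\frac{\pi(\transfo^k(x))}{\rhoT(\transfo^k(x))}\right)^{\!2}\rho(x)\,\rmd x \leq \int \sum_{k\in\zset}\frac{\varpi_k}{\Omega}\left(\frac{\pi(\transfo^k(x))}{\rhoT(\transfo^k(x))}\right)^{\!2}\rho(x)\,\rmd x.
\]
The right-hand side is precisely of the form to which the master identity \eqref{eq:key-relation} (equivalently, the definition \eqref{eq:rhoT} of $\rhoT$) applies with $f(y) = (\pi(y)/\rhoT(y))^2$. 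This yields
\[
\int \sum_{k\in\zset}\frac{\varpi_k}{\Omega}\,f(\transfo^k(x))\,\rho(x)\,\rmd x = \int f(y)\,\rhoT(y)\,\rmd y = \int \frac{\pi(y)^2}{\rhoT(y)}\,\rmd y.
\]

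Finally I would recognize $\int \pi(y)^2/\rhoT(y)\,\rmd y = \operatorname{D}_{\chi^2}(\pi\Vert\rhoT)+1$, which follows from expanding $(\pi/\rhoT-1)^2$ and using $\int \pi = \int \rhoT = 1$. Chaining the three displays gives the claim. The only non-routine aspect is the algebraic simplification in the first step, but this is just the elementary cancellation $\rho\cdot\likelihood = \const\,\pi$; the interchange of sum and integral is trivial because the weight sequence $\{\varpi_k\}$ has finite support, so no obstacle of substance arises.
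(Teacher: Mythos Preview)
Your proof is correct and follows essentially the same route as the paper's: rewrite the integrand as a convex combination $\sum_k(\varpi_k/\Omega)\,\pi(\transfo^k(x))/\rhoT(\transfo^k(x))$, apply Jensen, then pass to an integral against $\rhoT$ and identify $\int(\pi/\rhoT)^2\rhoT = \operatorname{D}_{\chi^2}(\pi\Vert\rhoT)+1$. The only cosmetic difference is that the paper carries out the third step via an explicit change of variables $y=\transfo^k(x)$ term-by-term, whereas you invoke the pushforward identity for $\rhoT$ directly; note that the relevant identity is the unnumbered display after~\eqref{eq:rhoT} (i.e.\ $\int f\,\rhoT = \Omega^{-1}\int\sum_k\varpi_k f(\transfo^k(x))\rho(x)\,\rmd x$) rather than~\eqref{eq:key-relation} itself, which involves the normalized weights $w_k$.
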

The proof is postponed to \Cref{sec:proof_lem:chi2_E_T}. \Cref{lem:chi2_E_T} suggests that  accurate sampling requires $N$ to scale linearly with the $\chi^2$-divergence between the target $\pi$ and the extended proposal $\rhoT$.
\begin{remark}\em
We can extend \IFIS\ to non homogeneous flows, replacing the family $\{\transfo^k\colon k\in\zset\}$ with a collection of mappings $\{\mathsf{T}_k\colon k\in\zset\}$.
This would allow us to consider further flexible classes of transformations such as normalizing flows; see \eg\ \cite{papamakarios2019normalizing}. The $\chi^2$-divergence $\operatorname{D}_{\chi^2}( \pi \Vert \rhoT)$ provides natural criteria for learning the transformation.   We leave this extension to future work.
\end{remark}

\paragraph{Conformal Hamiltonian transform}
\label{subsec:NISestimators}
The efficiency of $\InFiNE$ relies heavily on the choice of $\transfo$. Intuitively, a sensible choice of
$\transfo$ requires that  (i) $E_{\transfo}^\varpi$ is small, i.e. $\rhoT$ should be close to $\pi$ by \Cref{lem:chi2_E_T} (see \eqref{eq:bound_z_infine}), (ii) the inverse $\transfo^{-1}$ and the Jacobian of $\transfo$ are easy to compute. Following
\cite{rotskoff:vanden-eijden:2019}, we use for $\transfo$ a discretization of a conformal Hamiltonian dynamics.  Assume that $U(\cdot) = -\log \target(\cdot)$ is continuously differentiable. We consider the augmented distribution $\tilde\target(q,p) \propto \exp \{-U(q)-K(p)\}$
on  $\rset^{2d}$, where $q$ is the position, $p$ is the momentum, and $K(p)=p^T\mass^{-1} p/2$ is the kinetic energy, with $\mass$ a positive definite mass matrix. By construction, the marginal distribution of the momentum under $\tilde{\target}$ is the target pdf $\target(q)= \int \tilde{\target}(q,p) \rmd p$.  The conformal Hamiltonian ODE associated with $\tilde \target$ is defined by 
\begin{align}
  \label{eq:ODE_hamiltonian}
  &\rmd{q_t}/\rmd t =\nabla_{p} H(q_t,p_t) = \mass^{-1} p_t \eqsp, \\
  \nonumber
&\rmd {p}_t/\rmd t =-\nabla_{q} H(q_t,p_t)-\gamma p_t = -\nabla U(q_t) - \gamma p_t \eqsp,
\end{align}
where $H(q,p)= U(q)+ K(p)$, and $\gamma >0$ is a damping constant. Any solution $(q_t,p_t)_{t \geq 0}$ of \eqref{eq:ODE_hamiltonian} satisfies setting $H_t = H(q_t,p_t)$, $\rmd H_t/\rmd t   = - \gamma p_t^T\mass^{-1} p_t\leq 0$. Hence, all orbits converge to fixed points that satisfy $\nabla U(q)=0$ and $p=0$; see e.g. \cite{francca2019conformal,maddison2018hamiltonian}.

 \begin{figure*}[h!]
     \centering
    \includegraphics[width=0.\linewidth]{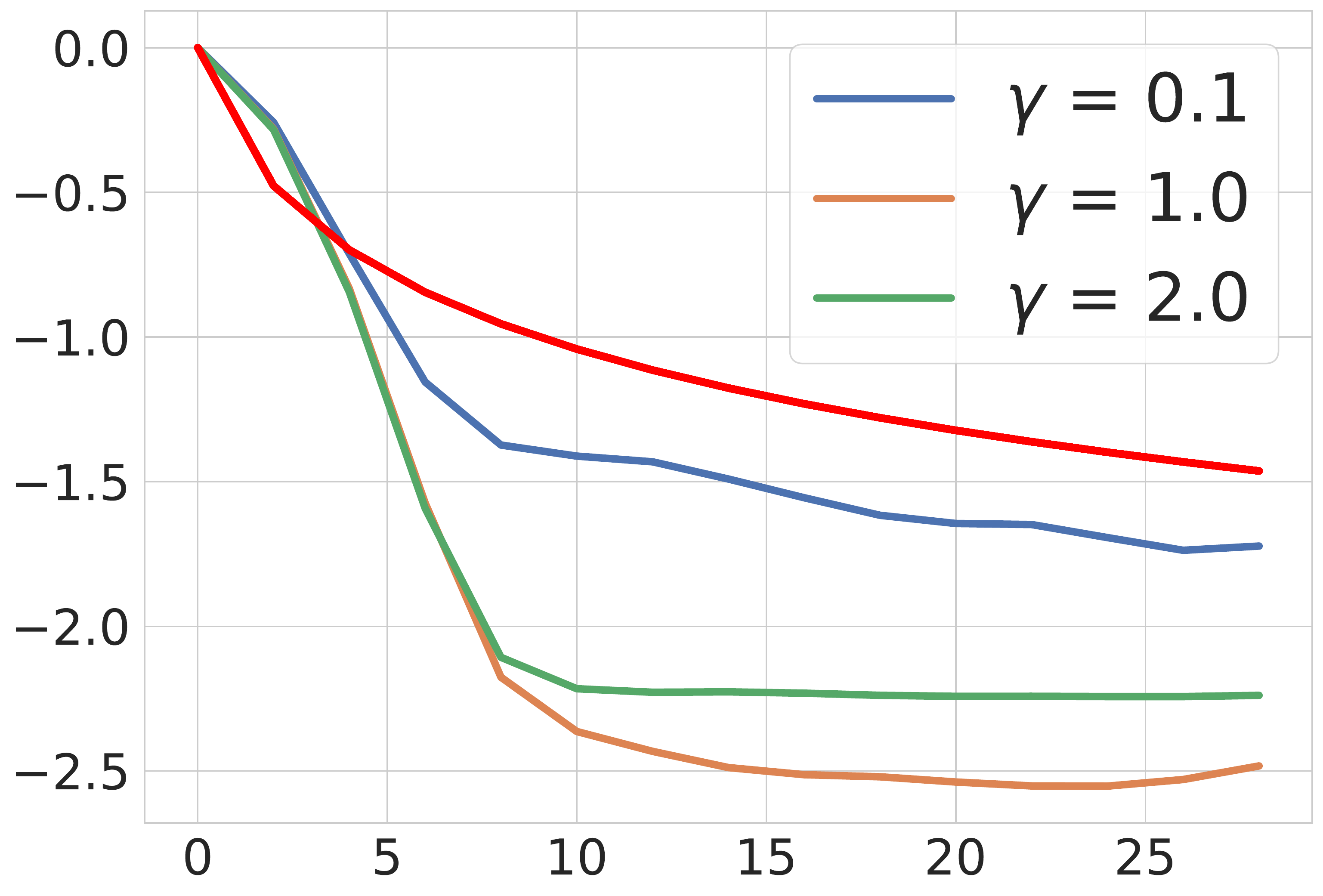}
   \includegraphics[width=0.24\linewidth]{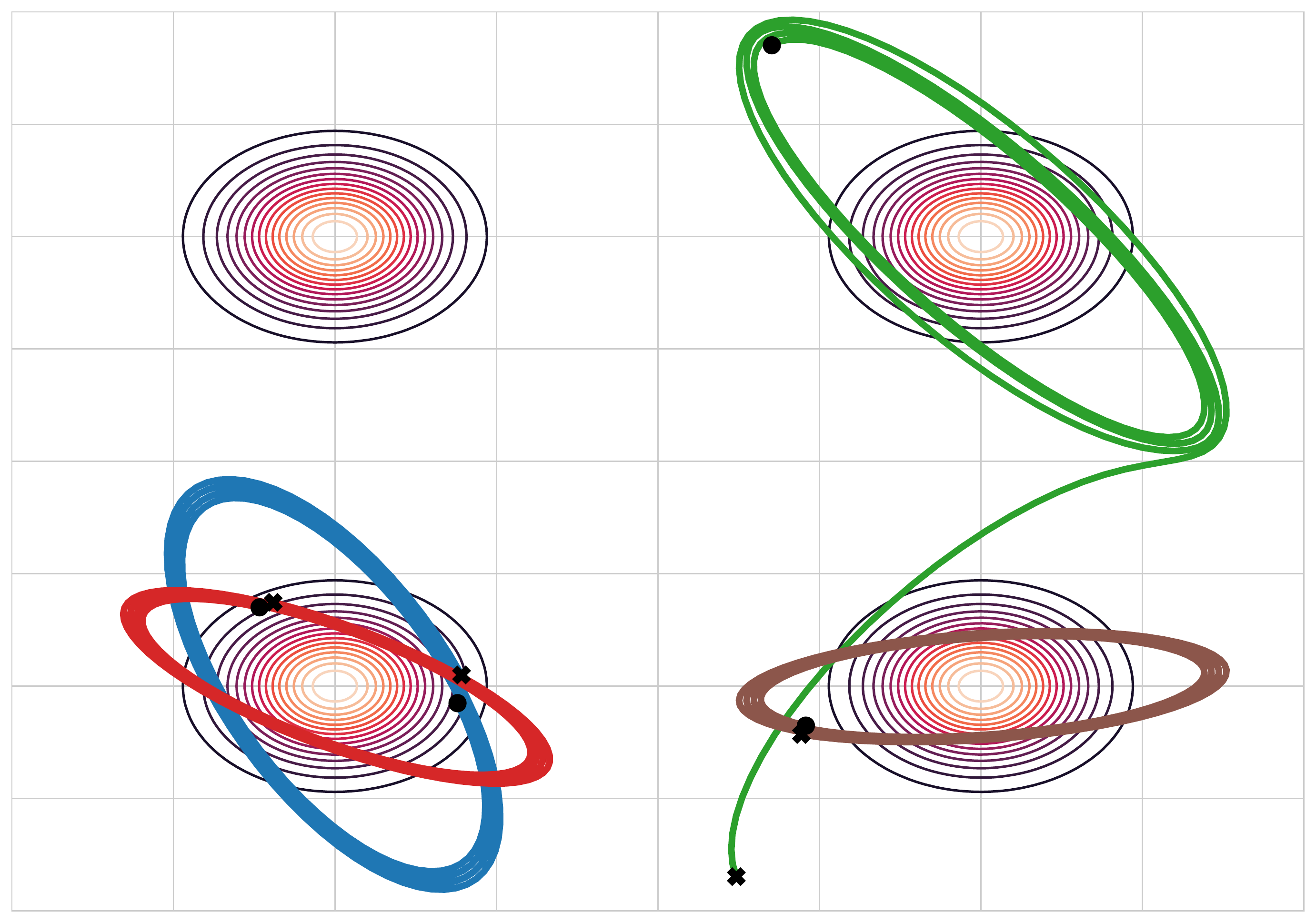} 
   \includegraphics[width=0.24\linewidth]{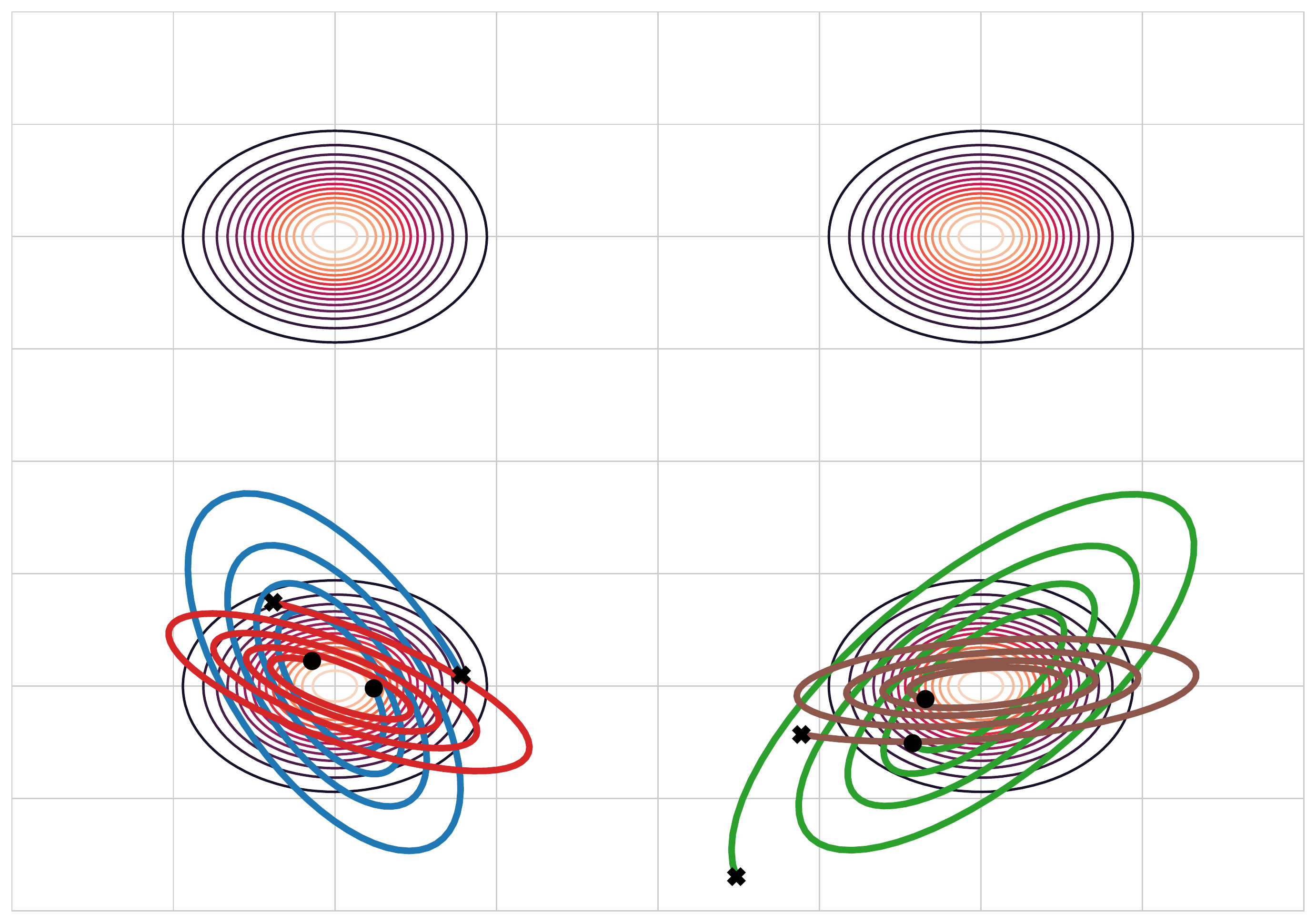}
   \includegraphics[width=0.24\linewidth]{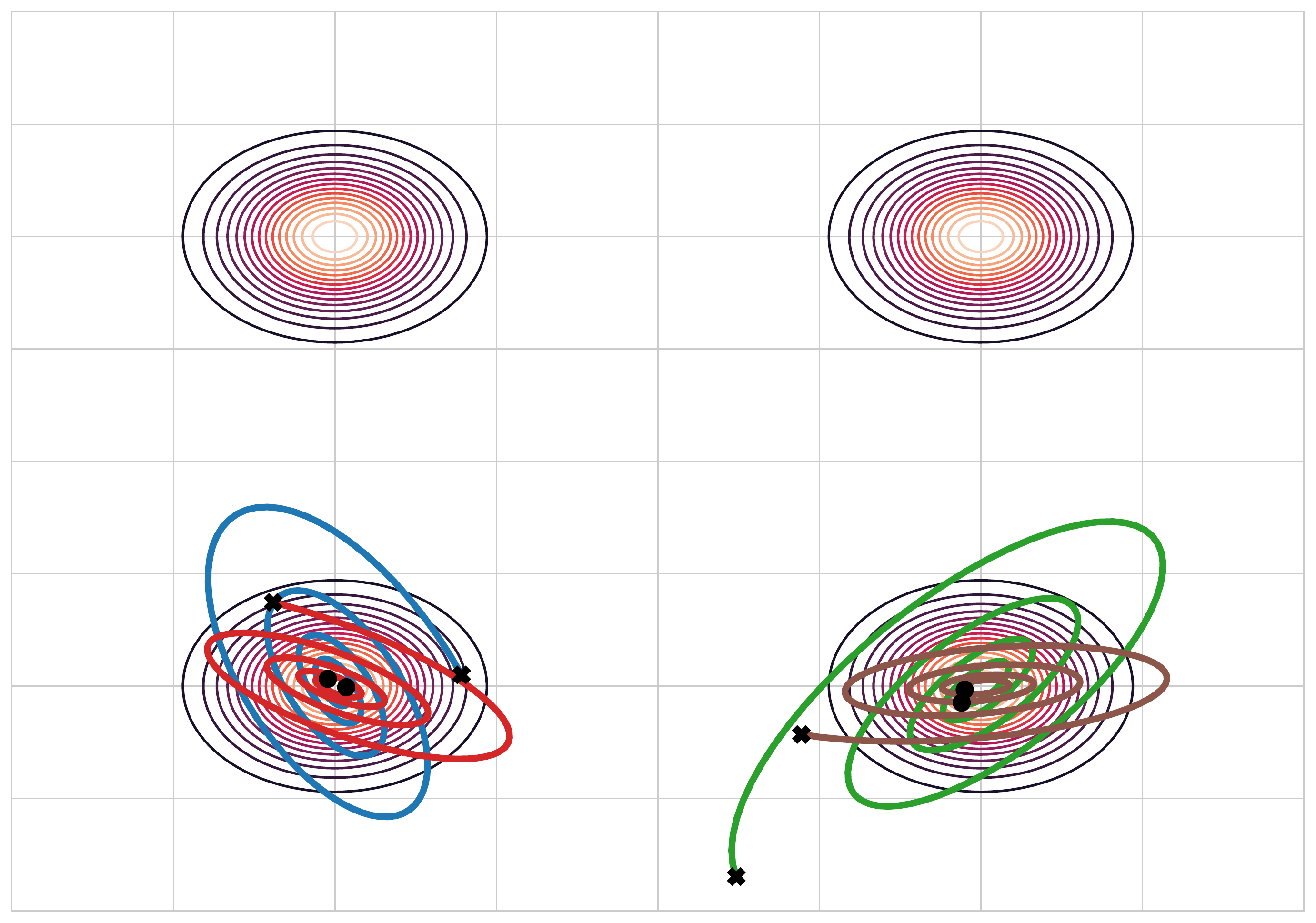}
     \caption{Left: $E^{\indi{[K]}} _{\transfo_h}(K)$ vs $E^{\operatorname{IS}}(K)$ (red) in $\log_{10}$-scale as a function of optimization step $K$.  Second left to right: Corresponding orbits for $\gamma = 0.1, 1, 2$.
     }
     \label{fig:orbit_E_T}
 \end{figure*}
In the applications below, we consider the conformal version of the symplectic Euler (SE) method of \eqref{eq:ODE_hamiltonian}, see \cite{francca2019conformal}.
This integrator can be constructed as a splitting of the two conformal and conservative parts of the system \eqref{eq:ODE_hamiltonian}. When composing a dissipative with a  symplectic operator, we set for all $(q,p) \in \rset^{2d}$, $\transfo_h(q,p)=
 (q+h\mass^{-1}\{ \rme^{-h\gamma} p -h \nabla U(q)\},
 \rme^{-h \gamma } p -h \nabla U(q))$,
where $h >0$ is a discretization stepsize.
This transformation can be connected with classical momentum optimization schemes, see \cite[Section 4]{francca2019conformal}.
By \cite[Section 3]{francca2019conformal}, for any $h >0$ $\transfo_h$ is a $\rmC^1$-diffeomorphism on $\rset^{2d}$ with Jacobian given by $\JacOp{\transfo_h}(q,p) = \rme^{-\gamma h d}$. In addition,  its inverse is
$  \transfo_h^{
-1}(q,p) = (q-h\mass^{-1} p,\rme^{\gamma h}\{p+h \nabla U(q-h\mass^{-1} p)\})$.
Therefore, the weight \eqref{eq:def_w_k} of the  \IFIS\  estimator is given by
$$
w_{k}(q,p) = \frac{ \varpi_k \tilde \rho(\transfo^k_h(q,p)) \rme^{-\gamma k h d} }{
    \sum_{j \in\zset}\varpi_{k+j}  \tilde \rho(\transfo^j_h(q,p)) \rme^{-\gamma j h d} }\eqsp,
$$
where $\tilde \rho(q,p) \propto \rho(q) \rme^{-K(p)}$.
\Cref{fig:orbit_E_T} displays for different values of $\gamma$, the bound  $E^{\indi{[0:K]}}_{\transfo_h}-1$ \eqref{eq:bound_z_infine} as a function of $K$ corresponding to the sequence of weights $(\varpi_k)_{k\in\zset} =  (\indi{[0:K]}(k))_{k\in\zset}$ (i.e. only the $K+1$ first elements of the forward orbits are used and are equally weighted). 
For comparison, we also present on the same plot, the bounds achieved  by averaging $K+1$ independent IS estimates, $E^{\operatorname{IS}}(K) -1  = (K+1)^{-1} \PE_{X \sim \proposal}[ \likelihood(X)^2]$. Interestingly, \Cref{fig:orbit_E_T} shows that there is a trade-off in the choice of $\gamma$ which controls the exploration of the state space by the Hamiltonian dynamics since the higher $\gamma$, the faster the orbits converge towards the modes.

\section{\IFIS-MCMC algorithm}
\label{sec:mcmc}
We now derive sampling methods based on the \IFIS-IS estimator. 
A  natural idea consists in adapting the Sampling Importance Resampling procedure (SIR) (see for example \cite{rubin1987comment,skare2003improved}) to the \InFiNE\ framework.
\begin{figure}%[17]{R}{0.55\textwidth}
%\vspace{-2.5em}
%\flushright
%\begin{minipage}{0.95\linewidth}
\begin{algorithm}[H]
At step $n \in \nsets$, given the conditioning orbit point $Y_{n-1}$.\\
\textbf{Step 1: Update the conditioning point} 
\begin{enumerate}
\item \label{algo:gibbs_partial_1} Set $X^{1}_{n} = Y_{n-1}$ and for any
  $i \in \{2,\dots,N\}$, sample $X_{n}^{i}\simiid \rho$. 
\item Sample the orbit index $I_{n}$  with probability proportional to
$(\estConstC{X_{n}^{i}})_{i \in [N]}$, \eqref{eq:def_estimator_normal_const}.
\item Set $Y_{n} = X^{I_{n}}_{n}$
\end{enumerate}
\textbf{Step 2: Output a sample}
\begin{enumerate}[resume] 
\item Sample index $K_{n}$ with probability proportional to $\{w_k(Y_{n})
\likelihood(\transfo^k(Y_{n}))/ \estConstC{Y_{n}}\}_{k\in\zset}$
\item Output $U_{n}= \transfo^{K_{n}}(Y_{n})$. 
\end{enumerate}
\caption{\InFiNE-MCMC Sampler}
\label{algo:gibbs_partial}
\end{algorithm}
%\end{minipage}
\end{figure}
The SIR method  to sample $\infineSNIS$ (see  \eqref{eq:def_estimator_naive_monte_carlo}) consists of 4 steps. 
\\(SIR-1) Draw independently $\chunku{X}{1}{N}\simiid\rho$ and compute the associated forward and backward orbits $\{ \transfo^k(X^i)\}_{k \in \zset}$ of the point.
\\(SIR-2) Compute the normalizing constants associated with each orbit $\{\estConstC{X^i} \}_{i=1}^N$.
\\(SIR-3) Sample  an orbit index $I_N \in[N]$ with probability $\{\estConstC{X^i}/{\sum_{j=1}^N \estConstC{X^j}}\}_{i=1}^N$.
\\(SIR-4) Draw the iteration index $K^N$ on the $I^N$-th orbit with probability $\{ \likelihood(\transfo^{k}(X^{I^N})) w_k(X^{I^N})/{\estConstC{X^{I^N}}} \}_{k \in \zset}$.

The resulting draw is denoted by $U^N= \transfo^{K^N}(X^{I^N})$. 
By construction, for any bounded function $f$, we get that $\CPE{f(U^N)}{\chunku{X}{1}{N},I^N}= \{ \estConstC{X^{I^N}} \}^{-1} \sum_{k \in \zset} w_k(X_{I^N})\likelihood(\transfo^k(X_{I^N}))$  which implies $\CPE{f(U^N)}{\chunku{X}{1}{N},I^N} =\infineSNIS(f)$ (see \eqref{eq:def_estimator_naive_monte_carlo}). Using \Cref{theo:bias_mse_snis}, we therefore obtain $|\PE[f(U^N)]- \int f(z) \target(z) \rmd z| \leq 10^{1/2}\|f\|_\infty E^{\varpi} _{\transfo} N^{-1}$, showing that the law of the random variable $\mu_N= \operatorname{Law}(U^N)$ converges in total variation to $\target$ as $N \to \infty$, 
\begin{equation}
\label{eq:convergence-tvnorm}
\tvnorm{\mu_N - \pi} = \sup_{\|f\|_\infty \leq 1} |\mu_N(f) - \pi(f)| \leq  10^{1/2} E^{\varpi} _{\transfo} N^{-1} \eqsp.
\end{equation}

Based on  \cite{andrieu2010particle}, we now derive the \InFiNE-MCMC procedure, which in a nutshell consists in iterating the SIR procedure while keeping a conditioning point (or equivalently, orbit); see \Cref{app:i-SIR}. The convergence of \IFIS-MCMC does not rely on letting $N \to \infty$: the \IFIS-MCMC works as soon as $N \geq 2$, although as we will see below the mixing time decreases as $N$ increases.

This procedure is summarized in \Cref{algo:gibbs_partial}. The \IFIS-MCMC procedure is an iterated algorithm which produces a sequence $\{(Y_n,U_n)\}_{n \in \nset}$ of points in $\rset^d$. The $n$-th iteration of the \IFIS-MCMC algorithm consists in two main steps: 1) updating the conditioning point $Y_{n-1} \to Y_n$ 2) sampling $U_n$ by selecting a point in the  orbit $\{ \transfo^k(Y_{n}) \}_{k \in \zset}$ of the conditioning point. Compared to SIR, only the generation of the points (step (SIR-1)) is modified: we set $X_n^1= Y_{n-1}$ (the \textbf{conditioning point}), and then draw $\chunku{X_n}{2:N} \simiid \proposal$. The sequence $\{Y_n\}_{n \in \nset}$ defined by \Cref{algo:gibbs_partial} is a Markov chain: 
$$
\CPP{Y_n \in \msa}{Y_{0:n-1}}= \CPP{Y_n \in \msa}{Y_{n-1}}=P(Y_n,\msa)\eqsp,
$$
where 
\begin{equation}
\label{eq:definition-P}
    P(y, \msa) = \int \updelta_{y}(\rmd x^1) \prod_{j=2}^N\rho(x^j)\rmd x^j \sum_{i=1}^N\frac{\estConstC{x^i}}{\sum_{j=1}^N \estConstC{x^j}}\indi{\msa}(x^i)\eqsp, \quad y \in \rset^d\eqsp, \msa \in \borel(\rset^d)\eqsp.
\end{equation}
Note that this Markov kernel describes the way, at stage $n+1$, the conditioning point $Y_{n+1}$ is selected given $Y_n$, which \textbf{depends only on} the estimator of the normalizing constants associated with each orbit, \textbf{but not} on the sample $U_n$ selected on the conditioning orbit. In addition, given the conditioning point $Y_n$ at the $n$-th iteration, the conditional distribution of the output sample $U_n$ is $\CPP{U_n \in \msb}{I_n, \chunku{X_n}{1}{N}}= \CPP{U_n \in \msb}{Y_n} = Q(Y_n,\msb)$  where 
\begin{equation}
\label{eq:definition-Q}
 Q(y, \msb) =  \sum_{k\in\zset} \frac{w_k(y)\likelihood(\transfo^k(y)) }{\estConstC{y}}\1_\msb(\transfo^k(y))\eqsp, \quad y \in \rset^d \eqsp, \msb \in \borel(\rset^d)\eqsp.
\end{equation}
With these notations, if the Markov chain is started at $Y_0= y$, then for any $n\in\nset$, the law of the $n$-th conditioning point is $\CPP{Y_n \in \msa}{Y_0=y}= P^n(y, \msa)$ and the law of the $n$-th sample  is $\CPP{U_n \in \msb}{Y_0}= P^nQ(y, \msb)$. Define $\tpi$ the pdf given, for $y\in\rset^d$, by
\begin{equation}
\tpi(y)= \frac{\rho(y)}{\const} \sum_{k\in\zset}w_k(y) \likelihood(\transfo^k(y)) = \frac{\rho(y)\estConstC{y}}{\const}\eqsp.    
\end{equation}
The following theorem shows that, for any initial condition $y \in \rset^d$, the distribution of the variable $Y_n$ converges in total variation to $\tpi$ and that the distribution of $U_n$ converges to $\target$.
\begin{theorem}
\label{lem:invariant_P}
The Markov kernel $P$ is reversible \wrt~the distribution $\tpi$,  ergodic and Harris positive, \ie, for all $y\in \rset^d$,
$\lim_{n\to\infty} \tvnorm{P^n(y, \cdot) - \tpi} = 0$. In addition,  $\pi = \tpi Q$ and 
$\lim_{n\to\infty} \tvnorm{P^n Q(y, \cdot) - \pi} = 0$.
Moreover, for any bounded function $g$ and any $y\in\rset^d$, $\lim_{n\to\infty}n^{-1}\sum_{i=0}^{n-1} g(U_i) = \pi(g)$, $\PP$-almost surely, where $\{U_i\}_{i \in \nset}$ is defined in \Cref{algo:gibbs_partial} with $Y_0 = y$. 
\end{theorem}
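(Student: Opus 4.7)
The plan is to proceed in four stages.

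First, I would establish reversibility of $P$ with respect to $\tpi$, which immediately gives invariance. Substituting $\tpi(y)=\rho(y)\estConstC{y}/\const$ and introducing the notation $x^1\defeq y$, the joint measure $\tpi(\rmd y)P(y,\rmd y')$ can be written as
\begin{equation*}
\const^{-1}\int\prod_{j=1}^N\rho(\rmd x^j)\sum_{i=1}^N\frac{\estConstC{x^1}\estConstC{x^i}}{\sum_{\ell=1}^N\estConstC{x^\ell}}\,\updelta_{x^1}(\rmd y)\,\updelta_{x^i}(\rmd y')\eqsp.
\end{equation*}
For each $i$ the transposition $x^1\leftrightarrow x^i$ leaves $\prod_j\rho(\rmd x^j)$ and the summand invariant while exchanging the Dirac masses in $y$ and $y'$; summing over $i$ shows that the whole joint measure is symmetric in $(y,y')$.

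Next, I would prove ergodicity of $P$. Retaining only the $i=2$ term in \eqref{eq:definition-P} yields the minorisation
\begin{equation*}
P(y,\msa)\geq\int\prod_{j=2}^N\rho(\rmd x^j)\,\frac{\estConstC{x^2}\indi{\msa}(x^2)}{\estConstC{y}+\sum_{j=2}^N\estConstC{x^j}}\eqsp.
\end{equation*}
On the full-$\tpi$-measure set $\{y:\estConstC{y}\in(0,\infty)\}$, the right-hand side is strictly positive whenever $\tpi(\msa)>0$, which gives $\tpi$-irreducibility. Aperiodicity is automatic since the conditioning point can be re-selected, so $P(y,\{y\})>0$. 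Combined with the invariance established above, the Harris ergodic theorem for reversible, $\phi$-irreducible, aperiodic kernels delivers positive Harris recurrence and $\tvnorm{P^n(y,\cdot)-\tpi}\to 0$ for every such $y$.

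For the identity $\pi=\tpi Q$ and the convergence of $P^nQ$, a direct computation using the definitions of $\tpi$ and $Q$ cancels the factor $\estConstC{y}$ and gives
\begin{equation*}
\tpi Q(\msb)=\const^{-1}\int\rho(y)\sum_{k\in\zset}w_k(y)\likelihood(\transfo^k(y))\indi{\msb}(\transfo^k(y))\rmd y=\pi(\msb)\eqsp,
\end{equation*}
where the second equality is the key identity \eqref{eq:key-relation} applied with $f(x)=\const^{-1}\likelihood(x)\indi{\msb}(x)$. Since $Q$ is a Markov kernel and therefore contractive in total variation, $\tvnorm{P^nQ(y,\cdot)-\pi}=\tvnorm{(P^n(y,\cdot)-\tpi)Q}\leq\tvnorm{P^n(y,\cdot)-\tpi}\to 0$.

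Finally, for the ergodic theorem on $\{U_n\}$, I would observe that $\{(Y_n,U_n)\}$ is itself a Markov chain with transition kernel $\bar P((y,u),\rmd(y',u'))=P(y,\rmd y')Q(y',\rmd u')$ and invariant distribution $\tpi(\rmd y)Q(y,\rmd u)$; positive Harris recurrence of $\{Y_n\}$ transfers to this augmented chain (the $U_n$'s are just one-step conditional draws), so the ergodic theorem applied to $(y,u)\mapsto g(u)$ gives $n^{-1}\sum_{i=0}^{n-1}g(U_i)\to\tpi Q(g)=\pi(g)$, $\PP$-almost surely. The main technical subtlety lies in the irreducibility step: one must verify that the pathological set $\{y:\estConstC{y}\notin(0,\infty)\}$ has $\tpi$-measure zero and is exited in one step from any admissible starting point, which rests on $\PE_{X\sim\rho}[\estConstC{X}]=\const<\infty$ (\Cref{theo:clt_const_infine}) and is bookkeeping rather than deep analysis.
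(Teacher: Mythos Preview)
Your reversibility computation, the identity $\pi=\tpi Q$, and the contraction bound for $P^nQ$ all match the paper's.

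There is a genuine gap in the Harris step. The implication ``reversible, $\phi$-irreducible, aperiodic with invariant probability $\Rightarrow$ positive Harris recurrent'' is false in general: irreducibility together with an invariant probability gives positive recurrence and hence total-variation convergence from $\tpi$-almost every starting point, but a $\tpi$-null set of exceptional points from which the chain never reaches the Harris component may remain (modify any Harris chain on a null set so that it is absorbed there; reversibility and $\phi$-irreducibility survive). To obtain convergence from \emph{every} $y\in\rset^d$, the paper shows that $P$ is a Metropolis-type kernel in the sense of \cite{tierney:1994}: isolating the event $i=1$ in \eqref{eq:definition-P} and integrating out $\chunku{x}{3}{N}$ in the remaining terms yields
\[
P(y,\msa)=\int_{\msa}\alpha(y,y')\rho(y')\,\rmd y'+\Bigl(1-\int\alpha(y,y')\rho(y')\,\rmd y'\Bigr)\updelta_y(\msa)\eqsp,
\]
after which \cite[Corollary~2]{tierney:1994} gives Harris recurrence directly. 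Your one-term minorisation does establish $\tpi$-irreducibility, but being $y$-dependent it does not by itself rule out null absorbing sets.

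Your route to the LLN for $\{U_n\}$ via the product chain is different from the paper's and can be completed, but ``Harris recurrence transfers to the augmented chain'' is not automatic and needs a conditional Borel--Cantelli argument: for a set $A$ of positive invariant measure, apply the Harris LLN for $\{Y_n\}$ to $y\mapsto Q(y,A_y)$ to get $\sum_n Q(Y_n,A_{Y_n})=\infty$ a.s., and then conclude $(Y_n,U_n)\in A$ infinitely often. The paper instead decomposes $g(U_i)=\{g(U_i)-Qg(Y_i)\}+Qg(Y_i)$, handles the second sum by the Harris LLN for $\{Y_n\}$, and the first by the bounded-increment martingale SLLN \cite[Theorem~2.18]{hallheydebook}; this avoids the product-space detour entirely.
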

The proof is postponed to \Cref{sec:supp:proof_mcmc}. 
\begin{comment}
\begin{proof}
To give a flavour of the proof, we establish here the invariance of $P$ \wrt\ $\tpi$.  
Note first that by symmetry, we can easily write
\begin{equation}
\label{eq:symmetrized_P}
   P(y, \msa) = N^{-1}\int\sum_{i=1}^N \updelta_{y}(\rmd x^i) \prod_{j=1,\\j\neq i}^N\rho(x^j)\rmd x^j \sum_{k=1}^N\frac{\estConstC{x^k}}{\sum_{j=1}^N \estConstC{x^j}}\indi{\msa}(x^k)\eqsp. 
\end{equation}
For any $\msa\in\borel(\rset^d)$, we obtain
\begin{align*}
     &\tpi P(\msa) = N^{-1}\int \rho(y)\estConstC{y}/\const\rmd y\sum_{i=1}^N \updelta_{y}(\rmd x^i) \prod_{j=1,j\neq i}^N\rho(x^j)\rmd x^j \sum_{k=1}^N\frac{\estConstC{x^k}}{\sum_{j=1}^N \estConstC{x^j}}\indi{\msa}(x^k)\\
    &=(N\const)^{-1}\int \prod_{j=1}^N\rho(x^j)\rmd x^j \sum_{k=1}^N{\estConstC{x^k}} \indi{\msa}(x^k) = (N\const)^{-1} \sum_{k=1}^N \int\rho(x^k)\estConstC{x^k}\indi{\msa}(x^k)\rmd x^k= \tpi(\msa)
\end{align*}
For any $\msb\in\borel(\rset^d)$, using the key identity \eqref{eq:key-relation}, we obtain
\begin{equation*}
    \int_{}\tpi(y)Q(y, \msb)\rmd y
    = \const^{-1}\int_{} \rho(y) \sum_{k\in\zset} {w_k(y)\likelihood(\transfo^k(y)) }\1_\msb(\transfo^k(y)) \rmd y = \pi(\msb)\eqsp,
\end{equation*}
which concludes the proof.
\end{proof}
\end{comment}
\begin{remark}\em
\label{rem:gibbs-interpretation}
We may provide another sampling procedure of $\{ Y_n \}_{n\in\nset}$.
Define the pdf on the extended space $[N] \times\rset^{d N}$  by  
$\check{\pi}(i, \chunku{x}{1}{N}) = N^{-1} \tpi(x^i) \prod_{j=1, j\neq i}^N \rho(x^j)$.
Consider a Gibbs sampler  targeting $\check{\pi}$ consisting in   (a) sampling $\chunkum{X_n}{1}{N}{I_{n-1}}|(I_{n-1}, X_{n-1})\sim\prod_{j\neq I_{n-1}}\rho(x^j)$,   (b) sampling  $I_n|\chunku{X_n}{1}{N} \sim \operatorname{Cat}( \{\estConstC{X_n^i} / \sum_{j=1}^N \estConstC{X_n^j} \}_{i=1}^N $ and (c) set $Y_n = X_n^{I_n}$.  This algorithm is a Gibbs sampler on $\check{\pi}$ and we easily verify that the distribution of $\{Y_n\}_{n\in\nset}$ is the same as \Cref{algo:gibbs_partial}.
\end{remark}
The next theorem provides non asymptotic quantitative bounds on the convergence in total variation. The main  interest of \InFiNE-MCMC algorithm is motivated empirically from observed behaviour: the mixing time of the corresponding Markov chain improves as $N$ increases. This behaviour is quantified theoretically in the next theorem. Moreover, this improvement is  obtained with little extra computational overhead, since  sampling $N$ points  from the proposal distribution $\proposal$, computing the forward and backward orbits of the points and evaluating the normalizing constants $\{\estConstC{X_n^i}\}_{i=1}^N$ can be performed in parallel.
\begin{theorem}
\label{theo:geom_ergodicity_infine}
Assume that $\bound<\infty$, see \eqref{eq:bound_z_infine}. Set  $\epsilon_N =  {(N -1)}/{(2\bound + N-2)}$ and $\kappa_N= 1 -\epsilon_N$.
Then, for any $y\in\rset^d$ and $k \in \nset$, $\tvnorm{P^{k}(y, \cdot) - \tpi}\leq \kappa_N^k$ and  $\tvnorm{P^{k}Q(y, \cdot) - \pi}\leq \kappa_N^k$.
\end{theorem}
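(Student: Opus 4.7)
The plan is to establish the Doeblin-type uniform minorization
\begin{equation*}
P(y, \cdot) \geq \epsilon_N\, \tpi(\cdot)\eqsp,\qquad y \in \rset^d\eqsp,
\end{equation*}
from which the stated bound $\tvnorm{P^k(y,\cdot) - \tpi} \leq \kappa_N^k$ follows by the classical coupling / Doeblin argument: writing $P(y,\cdot) = \epsilon_N \tpi(\cdot) + (1-\epsilon_N) R(y,\cdot)$ with $R$ a Markov kernel, and using the invariance $\tpi P = \tpi$ from \Cref{lem:invariant_P}, one obtains by induction $\tvnorm{P^k(y,\cdot)-\tpi} \leq (1-\epsilon_N)^k = \kappa_N^k$.

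The minorization itself I would prove in three steps. First, in \eqref{eq:definition-P} the term $i=1$ contributes $\estConstC{y}\indi{\msa}(y)/\sum_j \estConstC{x^j} \geq 0$, which can be discarded; exchangeability of $X^2,\ldots,X^N$ under $\prod_{j=2}^N \rho$ then collapses the remaining terms into
\begin{equation*}
P(y, \msa) \geq (N-1)\int \prod_{j=2}^N \rho(x^j)\rmd x^j\, \frac{\estConstC{x^2}}{\estConstC{y} + \estConstC{x^2} + \sum_{j=3}^N \estConstC{x^j}}\, \indi{\msa}(x^2)\eqsp.
\end{equation*}
Second, holding $y$ and $x^2$ fixed and integrating $X^3,\ldots,X^N \simiid \rho$, apply Jensen's inequality to the convex map $u \mapsto 1/u$; since $\PE_{X\sim\rho}[\estConstC{X}] = \const$ by the unbiasedness part of \Cref{theo:clt_const_infine}, this yields
\begin{equation*}
\int \prod_{j=3}^N \rho(x^j)\rmd x^j\, \frac{1}{\estConstC{y}+\estConstC{x^2}+\sum_{j=3}^N\estConstC{x^j}} \geq \frac{1}{\estConstC{y}+\estConstC{x^2}+(N-2)\const}\eqsp.
\end{equation*}
Third, use the uniform bounds $\estConstC{y}, \estConstC{x^2} \leq \bound\const$ (immediate from the definition of $\bound$ in \eqref{eq:bound_z_infine}) to upper bound this denominator by $(2\bound+N-2)\const$, giving
\begin{equation*}
P(y, \msa) \geq \frac{N-1}{(2\bound+N-2)\const}\int \rho(x^2)\estConstC{x^2}\indi{\msa}(x^2)\rmd x^2 = \epsilon_N\, \tpi(\msa)\eqsp,
\end{equation*}
since $\tpi(x) = \rho(x)\estConstC{x}/\const$.

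The $P^k Q$ estimate is then immediate: $Q$ is a Markov kernel so contracts total variation, and $\tpi Q = \pi$ by \Cref{lem:invariant_P}, whence $\tvnorm{P^k Q(y,\cdot) - \pi} = \tvnorm{P^k(y,\cdot)Q - \tpi Q} \leq \tvnorm{P^k(y,\cdot) - \tpi} \leq \kappa_N^k$. The delicate point is the order of operations in the minorization: applying the sup bound $\estConstC{x^j} \leq \bound\const$ to every term \emph{before} integrating would degrade the denominator to $N\bound\const$ and only yield the weaker rate $(N-1)/(N\bound)$. Using Jensen with unbiasedness first lets one replace each of the $N-2$ random contributions by its mean $\const$ rather than its supremum $\bound\const$, and only then pay the $\bound\const$ price on the two distinguished terms $\estConstC{y}$ and $\estConstC{x^2}$; this is what produces the sharp $2\bound + N - 2$ in the denominator of $\epsilon_N$.
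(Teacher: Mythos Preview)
Your proof is correct and follows essentially the same approach as the paper: establish the uniform Doeblin minorization $P(y,\cdot)\geq \epsilon_N\,\tpi(\cdot)$ by (i) discarding the $i=1$ term, (ii) applying Jensen's inequality to $u\mapsto 1/u$ to replace the $N-2$ integrated $\estConstC{x^j}$'s by their mean $\const$, (iii) bounding the two remaining terms $\estConstC{y},\estConstC{x^2}$ by $\bound\const$, and then invoking the standard Doeblin contraction; the $P^kQ$ bound follows from $\tpi Q=\pi$ and the non-expansiveness of $Q$ in total variation. The paper phrases step~(i) as a sum over $i=2,\dots,N$ rather than invoking exchangeability, and for the $P^kQ$ part writes $f=Qg$ with $\|f\|_\infty\leq 1$ rather than citing contractivity, but these are cosmetic differences.
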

Instead of sampling the new points $\chunku{X}{2}{N}_n$ independently from $\proposal$ (Step 1 in \Cref{algo:gibbs_partial}), it is possible to draw the proposals $\chunku{X_n}{1}{N}$ conditional to the current point $Y_{n-1}$; see \cite{so2006bayesian,craiu2007acceleration,shestopaloff:neal:2018,ruiz:titsias:doucet:2020} for related works. 
Following \cite{ruiz:titsias:doucet:2020}, we use a reversible Markov kernel  \wrt\ the proposal $\proposal$, i.e., such that $\proposal(x) m(x,x')= \proposal(x') m(x',x)$, assuming for simplicity that this kernel has density $m(x,x')$.  
If $\proposal= \Normal(0, \sigma^2\Id_d)$ , an appropriate choice is an autoregressive kernel $m(x,x')= \Normal(x';\alpha x,  \sigma^2(1-\alpha^2)\Id_d)$. More generally, we can use a Metropolis--Hastings kernel with invariant distribution $\proposal$.
In this case, for each $i \in [N]$, define for $i \in [N]$, 
\begin{equation}
\label{eq:defintion-r-i}
r_i(x^i,\chunkum{x}{1}{N}{i}) =
 \prod_{j=1}^{i-1} m(x^{j+1},x^j)\prod_{j=i+1}^N m(x^{j-1},x^j)\eqsp.
\end{equation}
Since $m$ is reversible \wrt\ $\proposal$,   for all $i,j\in[N]$, $\proposal(x^i) r_i(x^i,\chunkum{x}{1}{N}{i}) = \proposal(x^j) r_j(x^j,\chunkum{x}{1}{N}{j})$. The only modification in \Cref{algo:gibbs_partial} is Step 1, which is replaced by:  \emph{Draw $U_n \in[N]$ uniformly, set $X_{n}^{U_n}= Y_{n-1}$ and sample $\chunkum{X_{n}}{1}{N}{U_n} \sim r_{U_n}(X^{U_n}_n,\cdot)$}.
The validity of this procedure is established in \Cref{sec:supp:proof_mcmc}.

\section{Continuous-time version of \IFIS\ and \NEIS}
The \NEO\ framework takes up and extends \NEIS\ introduced in \cite{rotskoff:vanden-eijden:2019}.  \NEIS\ focuses on normalizing constant estimation and should be therefore compared with \NEO-IS. In \cite{rotskoff:vanden-eijden:2019}, the authors do not consider possible extensions of these ideas to sampling problems. Proofs of the statements and detailed technical conditions are postponed to  \Cref{sec:continuous-time-limits}.
We first consider how \NEO\ can be adapted to continuous-time dynamical system. 
Consider the Ordinary Differential Equation (ODE) $\dot{x}_t = b(x_t)\eqsp,$ where $b\colon\rset^d\to\rset^d$ is a smooth vector field. Denote by $(\phi_t)_{t\in\rset}$ the flow of this ODE (assumed to be well-behaved). Under appropriate regularity condition $ \JacOp{\phi_t}(x) = \exp(\int_0^t\nabla\cdot b(\phi_s(x))\rmd s)$; see \Cref{lem:jacobianflow}. Let  $\varpi: \rset \to \rset_+$ be a nonnegative smooth function with finite support, with $\Omega^c = \int_{-\infty}^{\infty} \varpi(t)\rmd t$.
The continuous-time counterpart of the proposal distribution \eqref{eq:rhoT} is $\rhoTcont(x) =(\Omega^c)^{-1} \int_{-\infty}^\infty \varpi(t) \rho(\phi_{-t}(x))\JacOp{\phi_{-t}}(x)\rmd t$, which is a continuous mixture of the pushforward of the proposal $\proposal$ by the flow of $( \phi_{s} )_{s \in\rset}$. 
Assuming for simplicity that $\proposal(x) > 0$ for all $x \in \rset^d$, then $\rhoTcont(x) > 0$ for all $x \in \rset^d$, and using again the IS formula, for any nonnegative function $f$, 
\begin{align}
\label{eq:infinecontinuous}
    &\int f(x) \rho(x) \rmd x = \int f(x) \frac{\rho(x)}{\rhoTcont(x)}\rhoTcont(x) \rmd x= \int \left[\int_{-\infty}^{\infty} \wcont(x) f(\phi_t(x))\rmd t\right]\rho(x) \rmd x\eqsp,\\
\label{eq:continuous_limit_infine_weights}
   &\wcont( x) = \left.{\varpi(t)\rho(\phi_t(x)) \JacOp{\phi_t}(x)}\middle/{\int_{-\infty}^{\infty}\varpi({s+t})\rho(\phi_s(x)) \JacOp{\phi_s}(x)\rmd s}\right.\eqsp.
\end{align}
These relations are the continuous-time counterparts of \eqref{eq:key-relation}. Eqs. \eqref{eq:infinecontinuous}-\eqref{eq:continuous_limit_infine_weights} define a version of \NEIS\  \cite{rotskoff:vanden-eijden:2019}, with a finite support weight function $\varpi$; see \Cref{sec:neis,sec:infine_stopping_times} for weight functions with infinite support. 
This identity is of theoretical interest but must be discretized to obtain a  computationally tractable estimator. For  $h>0$, denote by $\transfo_h$  an integrator with stepsize $h > 0$ of the ODE $\dot{x} = b(x)$. We may construct \InFiNE-IS and \InFiNE-SNIS estimators based on the transform $\transfo \leftarrow\transfo_h$ and weights $\varpi_{k} \leftarrow {\varpi(kh)}$. 
We might show that for any bounded function $f$ and for any $x\in\rset^d$, 
$\lim_{h\downarrow0}\sum_{k\in\zset} w_{k}(x) f(\transfo^k_h(x)) = \int_{-\infty}^\infty \wcont(x)f(\phi_t(x))\rmd t$, where we omitted here the dependency in $h$ of $w_k$.
Therefore, taking $h \downarrow 0^+$, the  \NEO-IS converges to the continuous-version \eqref{eq:infinecontinuous}-\eqref{eq:continuous_limit_infine_weights}. 
There is however an important difference between \IFIS\ and the  \NEIS\ method in \cite{rotskoff:vanden-eijden:2019}  which stems from the way \eqref{eq:infinecontinuous}-\eqref{eq:continuous_limit_infine_weights} are discretized. Compared to \NEIS, \NEO-IS using $\transfo \leftarrow\transfo_h$ and weights $\varpi_{k} \leftarrow {\varpi(kh)}$ is unbiased for any stepsize $h > 0$. \NEIS\ uses an approach inspired by the nested-sampling approach, which amounts to discretizing the integral in \eqref{eq:infinecontinuous} also in the state-variable $x$; see \cite{skilling2006nested,chopin:robert:2010}. This discretization is biased which prevents the use of this approach to develop MCMC sampling algorithm; see \Cref{sec:continuous-time-limits}.  
 
\section{Experiments and Applications}
\paragraph{Normalizing constant estimation}
\label{subsec:estim_constant}
The performance of \IFIS-IS is assessed on different normalizing constant  estimation benchmarks; see \cite{jia2020normalizing}. 
We focus on two challenging examples. Additional experiments and discussion on hyperparameter choice are given in the supplementary material, see \Cref{sup:sec:additional_xp}. 
\textbf{(1) Mixture of Gaussian \texttt{(MG25)}}: $\target(x)= P^{-1} \sum_{i=1}^{P} \Normal(x;\mu_{i,j},D_d)$, where $d\in\{10,20,40\}$, $D_d= \mathrm{diag}(0.01,0.01,0.1,\dots,0.1)$ and $\mu_{i,j}= [i,j,0,\ldots,0]^T$ with $i,j \in \{-2,\ldots, 2\}$.  \textbf{(2) Funnel distribution \texttt{(Fun)}} $\target(x)= \Normal(x_1;0,a^2)\prod_{i=1}^d \Normal(x_i;0,\rme^{2b x_1})$ with $d\in\{10,20,40\}$, $a=1$, and $b=0.5$. In both case, the proposal is $\rho= \Normal(0,\sigma^2_\rho \Id_d)$ with $\sigma^2_\rho=5$. 

 The $\IFIS$-IS estimator is compared with (i) the IS estimator using the proposal $\rho$, (ii) the   Adaptive Importance Sampling (AIS) estimator of \cite{tokdar2010importance} and (iii) the Neural Importance Sampling (NIS)\footnote{We used the implementation provided by: \url{https://github.com/ndeutschmann/zunis}}.  For \NEO-IS, we use $\varpi_k= \indi{[K]}(k)$  with $K=10$ (ten steps on the forward orbit), and conformal Hamiltonian dynamics  $\gamma = 1$, $M= 5 \cdot \Id_d$ for dimensions  $d=\{10,20\}$, and $\gamma = 2.5$ for $d=40$ (where $\gamma$ is the damping factor, $M$ the mass matrix,  $h$ is the stepsize of the integrator). The parameters of AIS are set to obtain a complexity comparable to \NEO-IS; see \Cref{sup:sec:additional_xp}. For NIS, we use the default parameters.
 In \texttt{Fun}, we set $\gamma = 0.2$, $K = 10$, $M = 5 \cdot \Id_d$, and $h=0.3$. The IS estimator was based on $5 \cdot 10^5$ samples, and NIS, \NEO-IS and AIS were computed with $5 \cdot 10^4$  samples.
 \Cref{fig:funnel_estimation} shows that \IFIS-IS consistently outperforms the competing methods. 
\begin{figure}[!ht]
    \centering
    \includegraphics[width=0.32\linewidth]{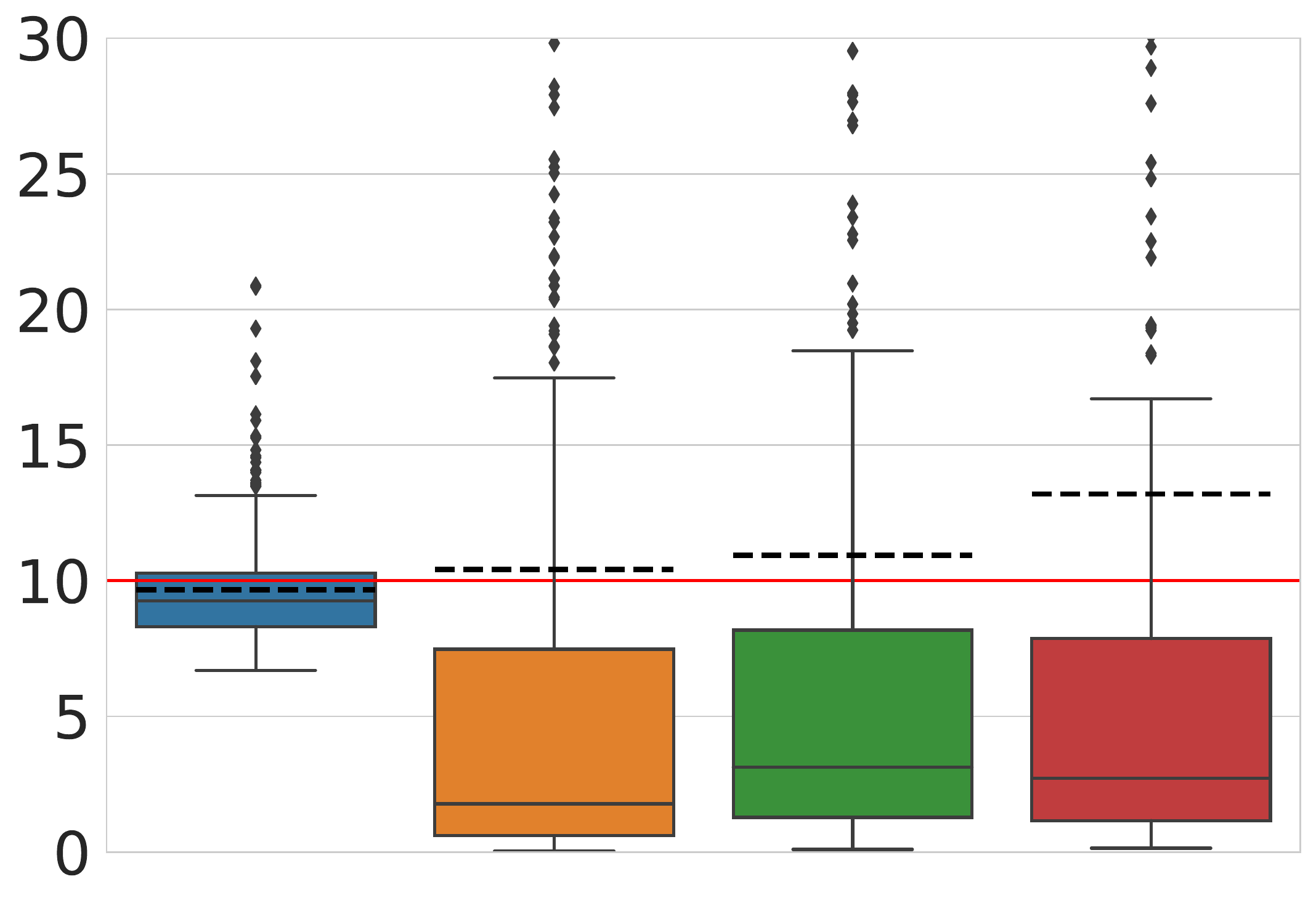}
    \includegraphics[width=0.32\linewidth]{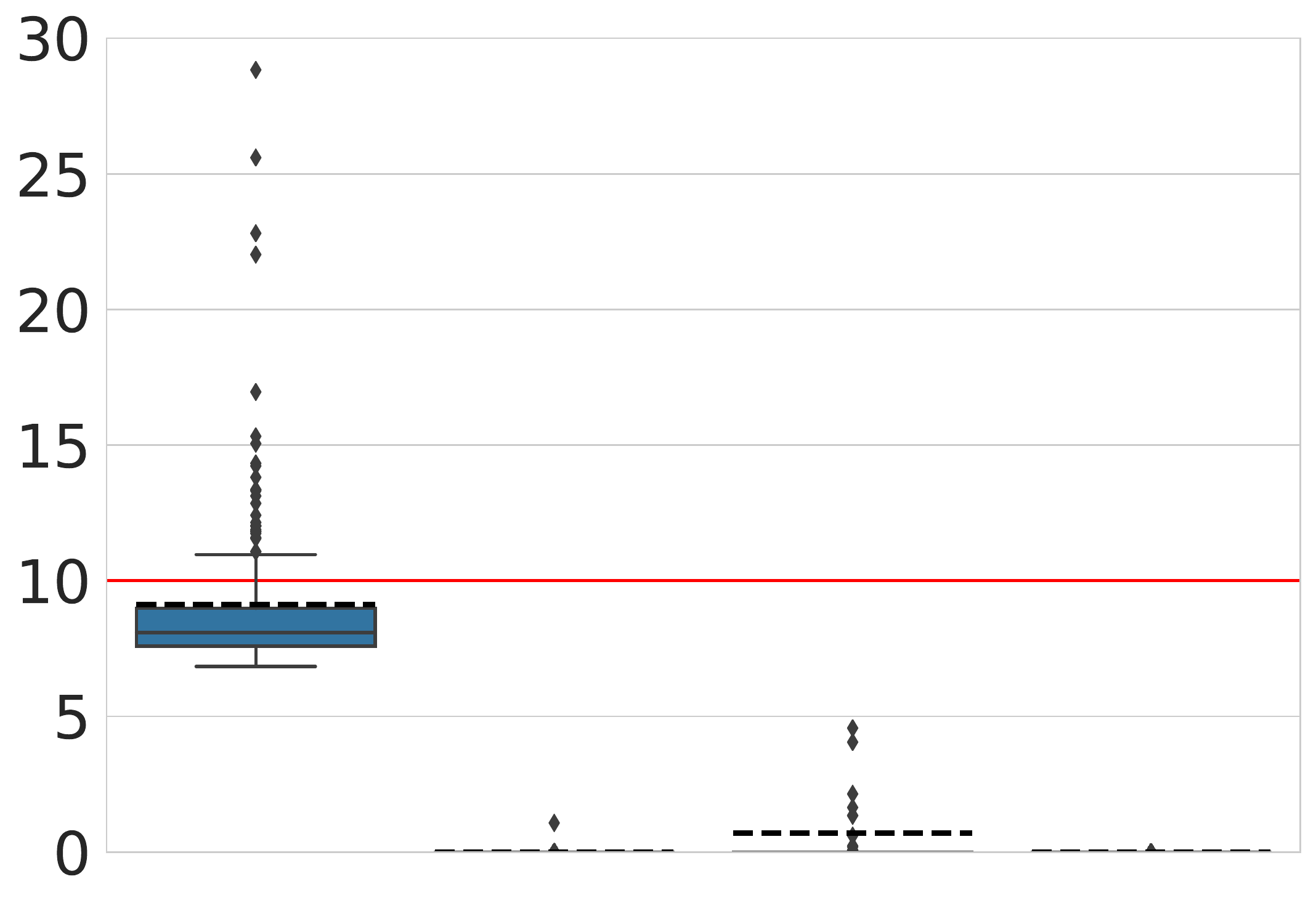}
    \includegraphics[width=0.32\linewidth]{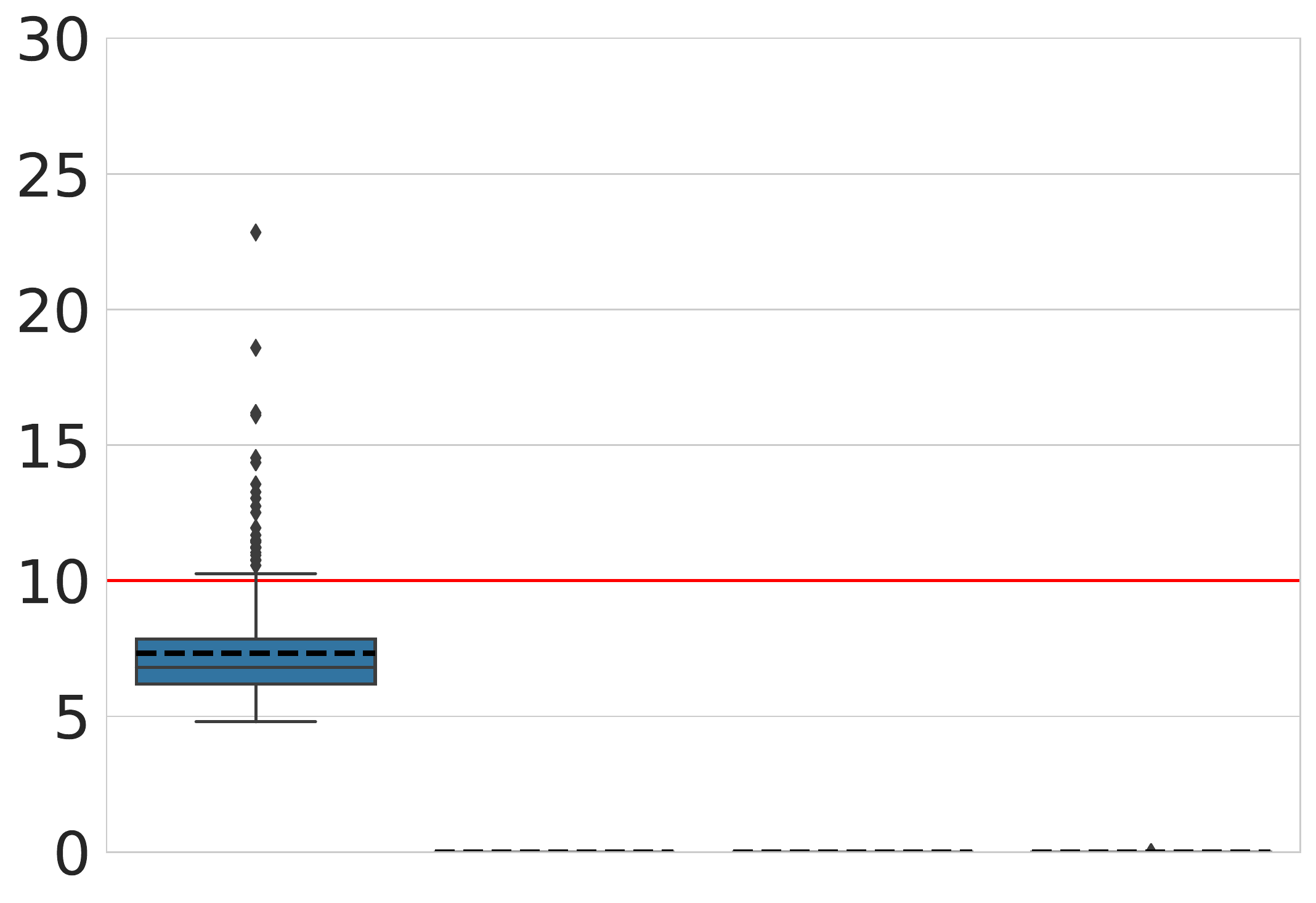}
\\
    \centering
    \includegraphics[width=0.32\linewidth]{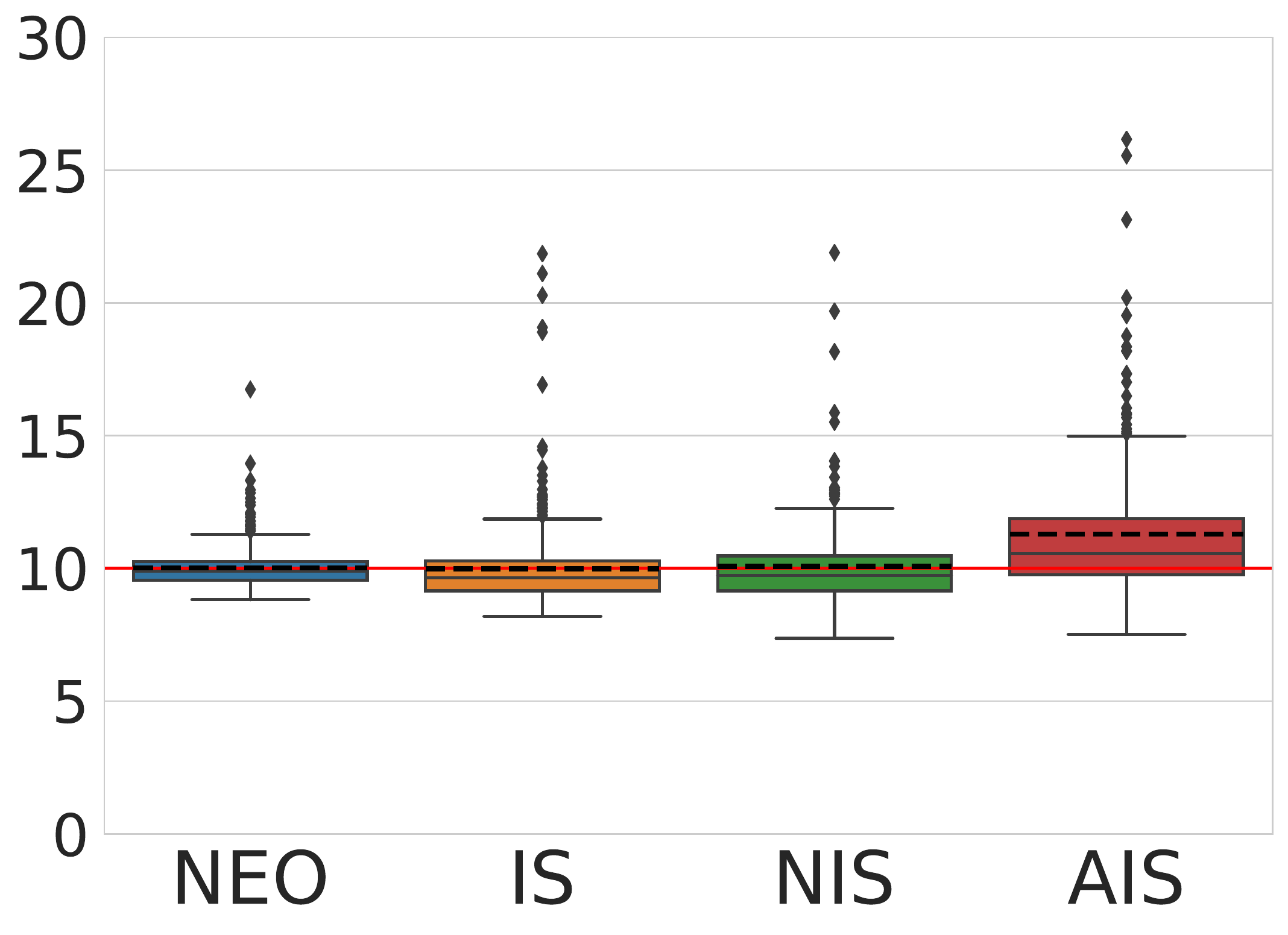}
    \includegraphics[width=0.32\linewidth]{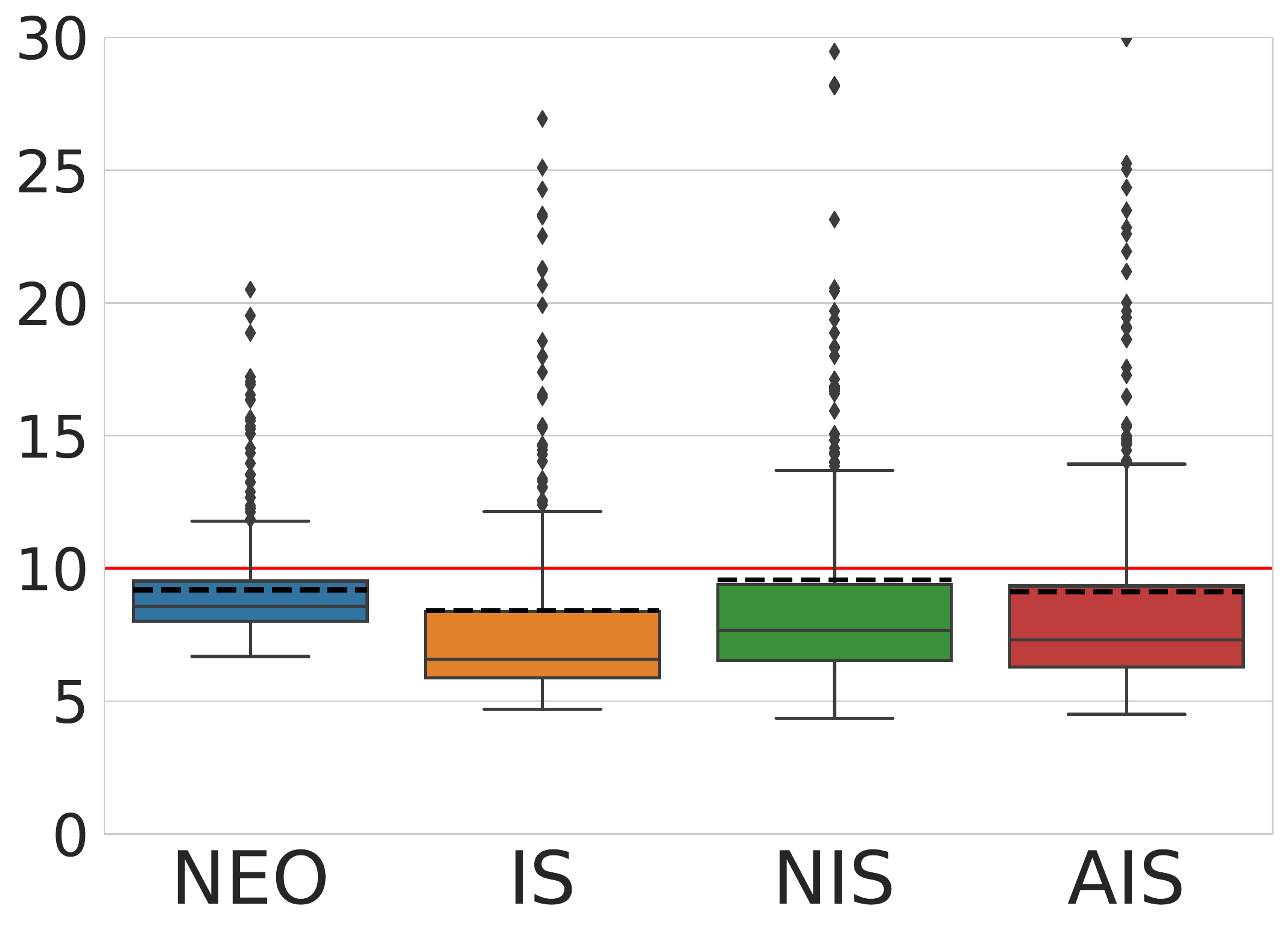}
    \includegraphics[width=0.32\linewidth]{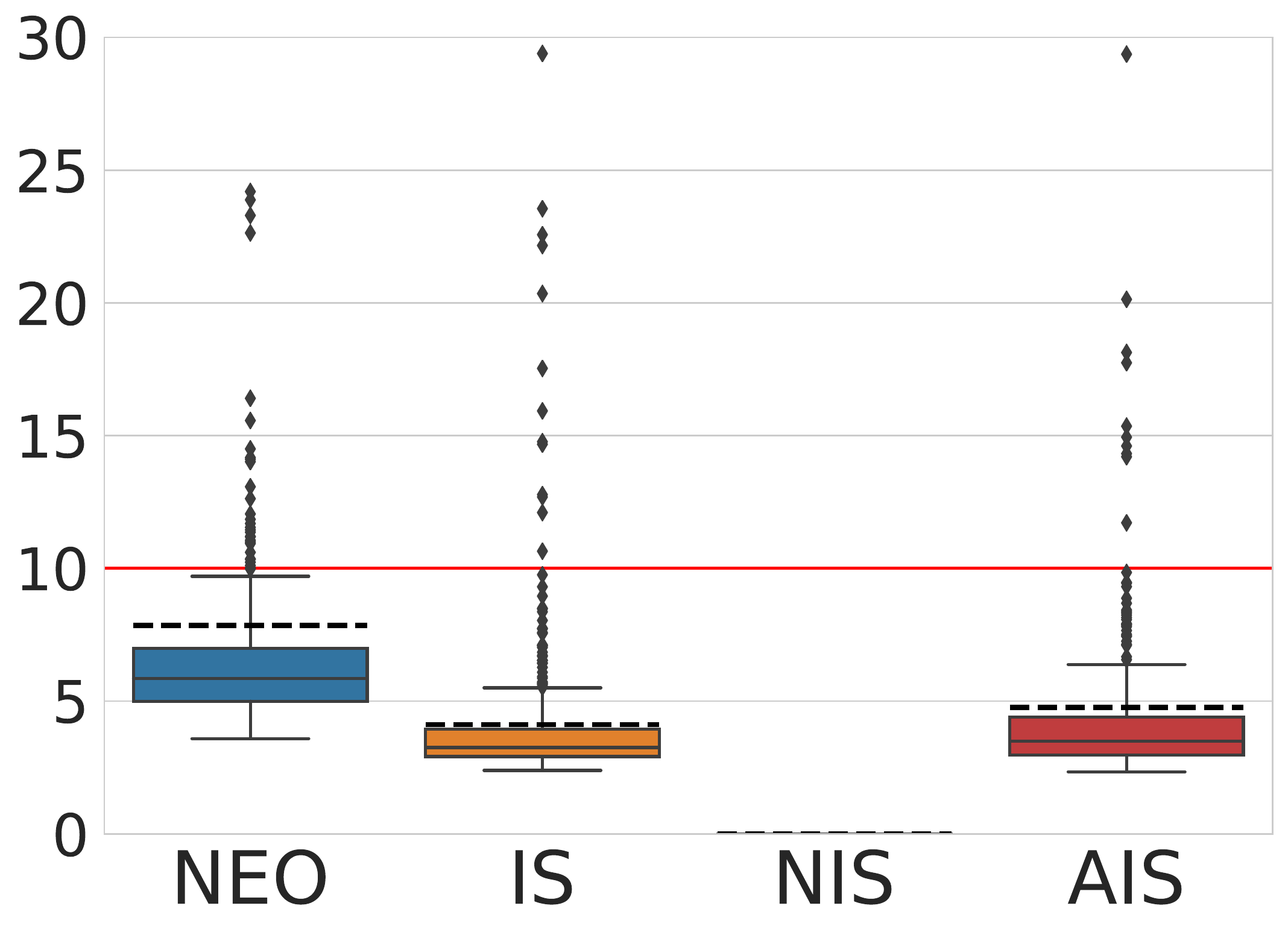}
    \caption{ Boxplots of 500 independent estimations of the normalizing constant in dimension $d=\{10, 20, 45\}$ (from left to right) for \texttt{MG25}  (top) and \texttt{Fun}  (bottom). The true value is given by the red line. The figure displays the median (solid lines), the interquartile range, and the mean (dashed lines) over the 500 runs.}
    \label{fig:funnel_estimation}
        \centering
        \end{figure}
\paragraph{Sampling}
\label{subsec:mcmc_exp}
\InFiNE-MCMC  is assessed  for the distributions \texttt{(MG25)} ($d=40$) and \texttt{Fun} ($d=20$).   \InFiNE-MCMC sampler is compared with (i) the No-U-Turn Sampler - Pyro library \cite{bingham2019pyro} - and (ii) i-SIR algorithm \cite{ruiz:titsias:doucet:2020}. The proposal distribution is $\proposal = \Normal(0,\sigma^2_\rho \Id_d)$ with $\sigma^2_\rho = 5$.
Dependent proposals are used (see \eqref{eq:defintion-r-i}) with $m(x,x')= \Normal(x';\alpha x, \sigma^2_\rho(1-\alpha^2)  \Id_d)$ with $\alpha=0.99$.
For NUTS, the default parameters are used. 
For i-SIR, we use the same number of proposals $N=10$,  proposal distribution and dependent proposal as for \IFIS-MCMC.  
To make a fair comparison, we use the same clock time for all three algorithms. The number of iterations for correlated i-SIR, \IFIS-MCMC, and NUTS are $n= 4\cdot 10^6$, $n= 4 \cdot 10^5$, and $n= 5 \cdot 10^5$, respectively.
\begin{figure}[!ht]
    \centering
      \includegraphics[width = .22\linewidth]{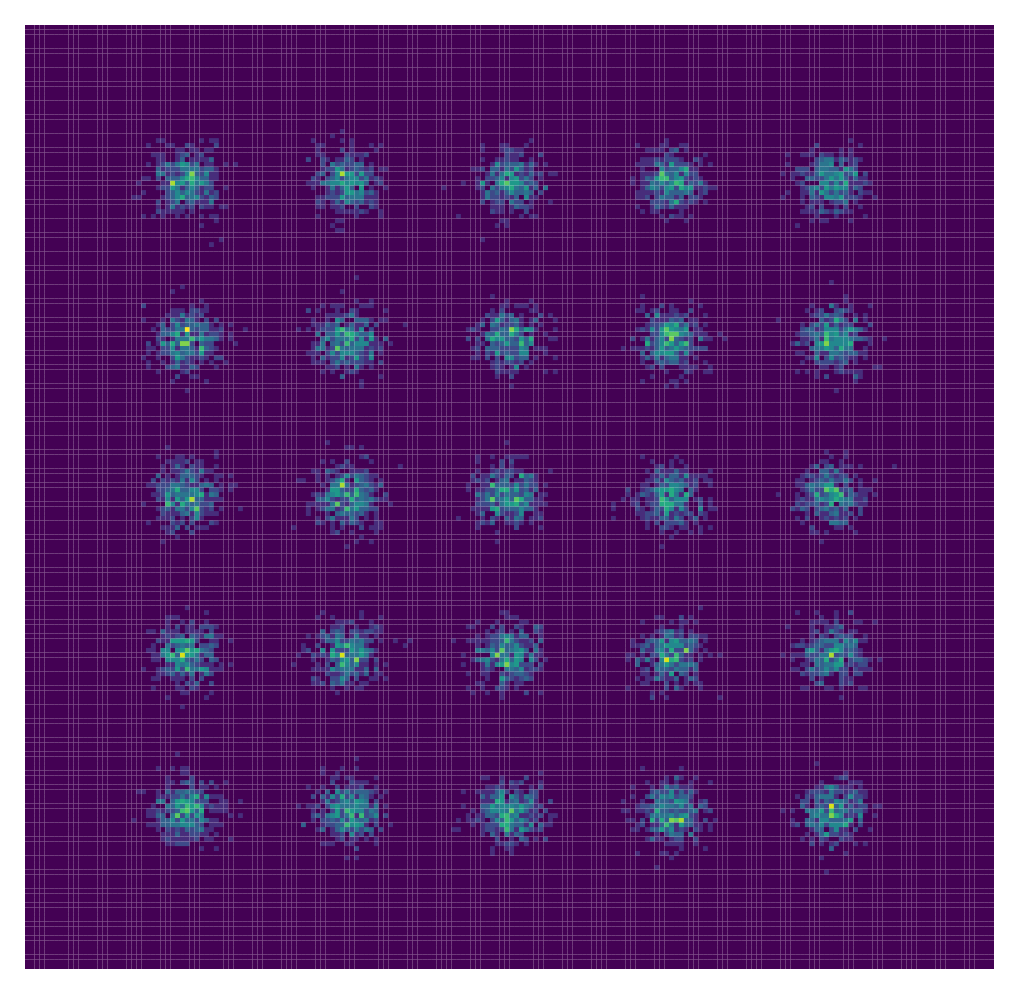}
      \includegraphics[width = .22\linewidth]{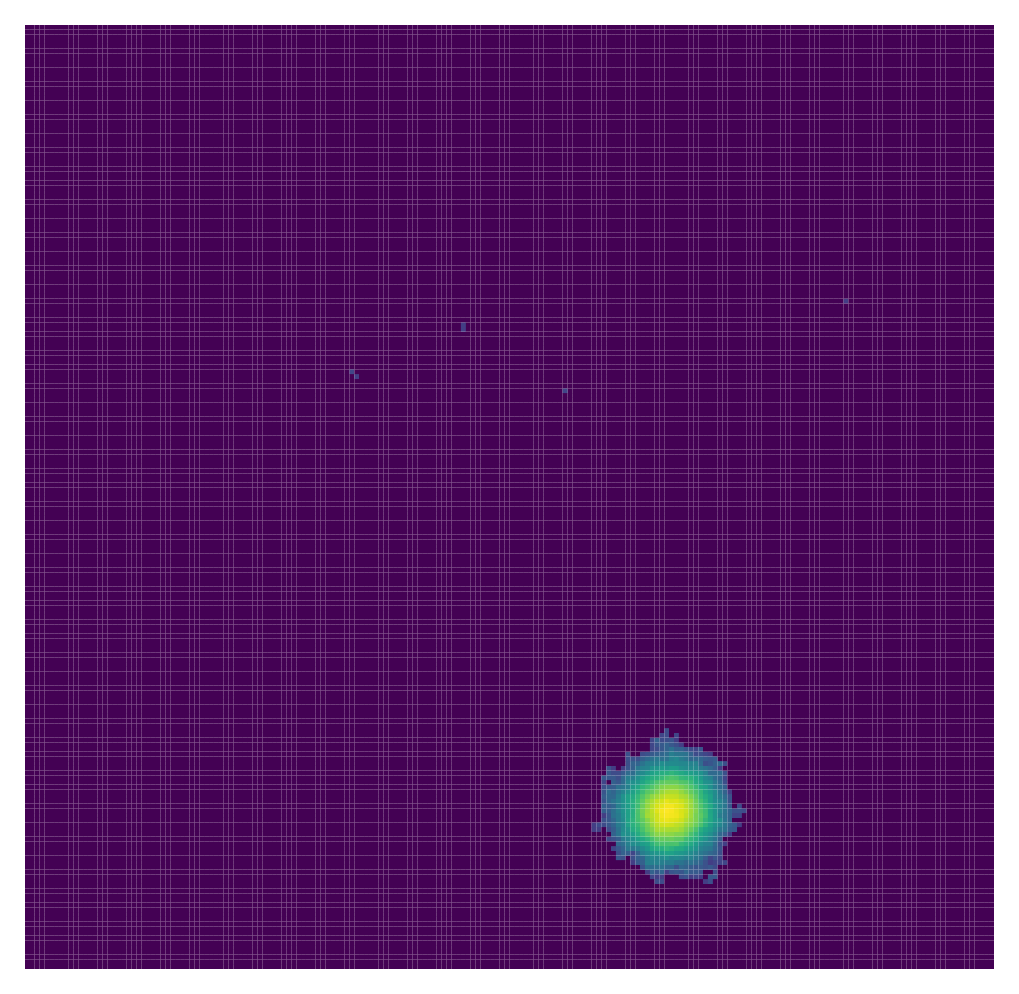} 
       \includegraphics[width = .22\linewidth]{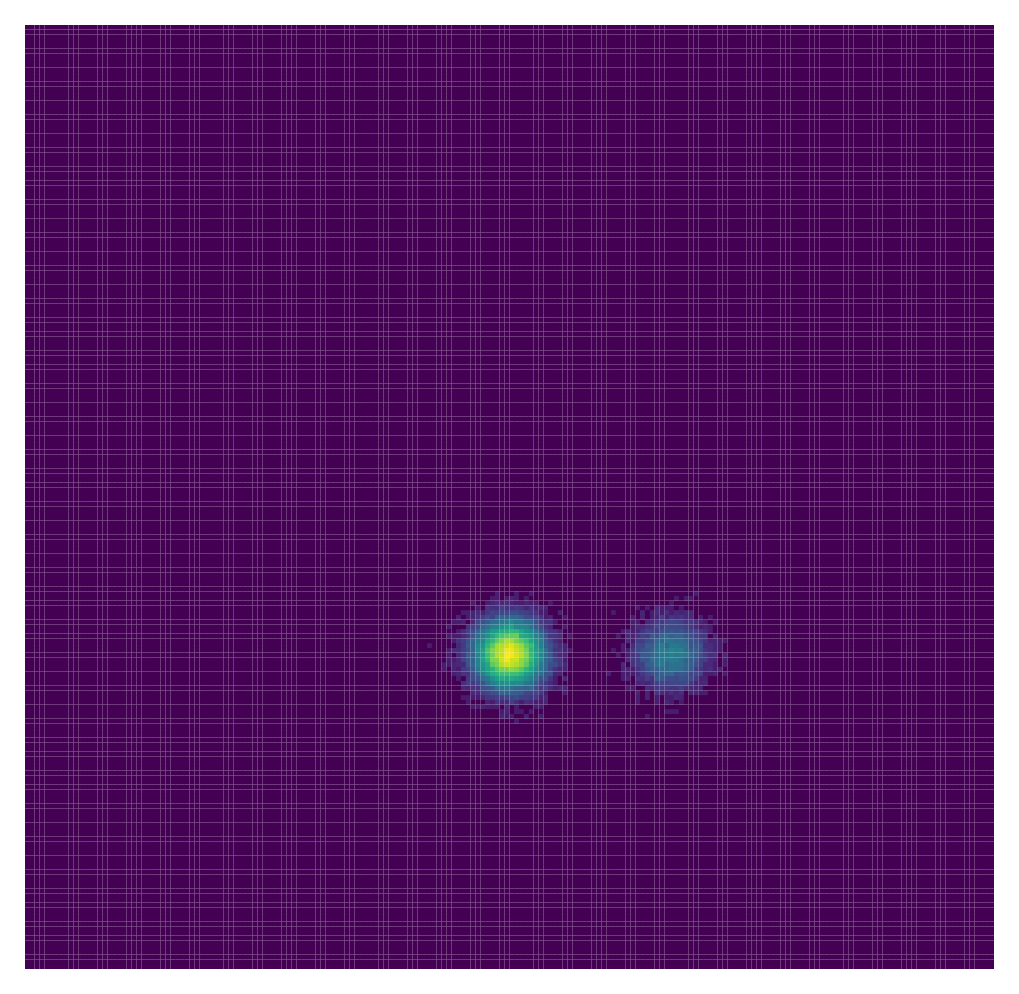}
       \includegraphics[width = .22\linewidth]{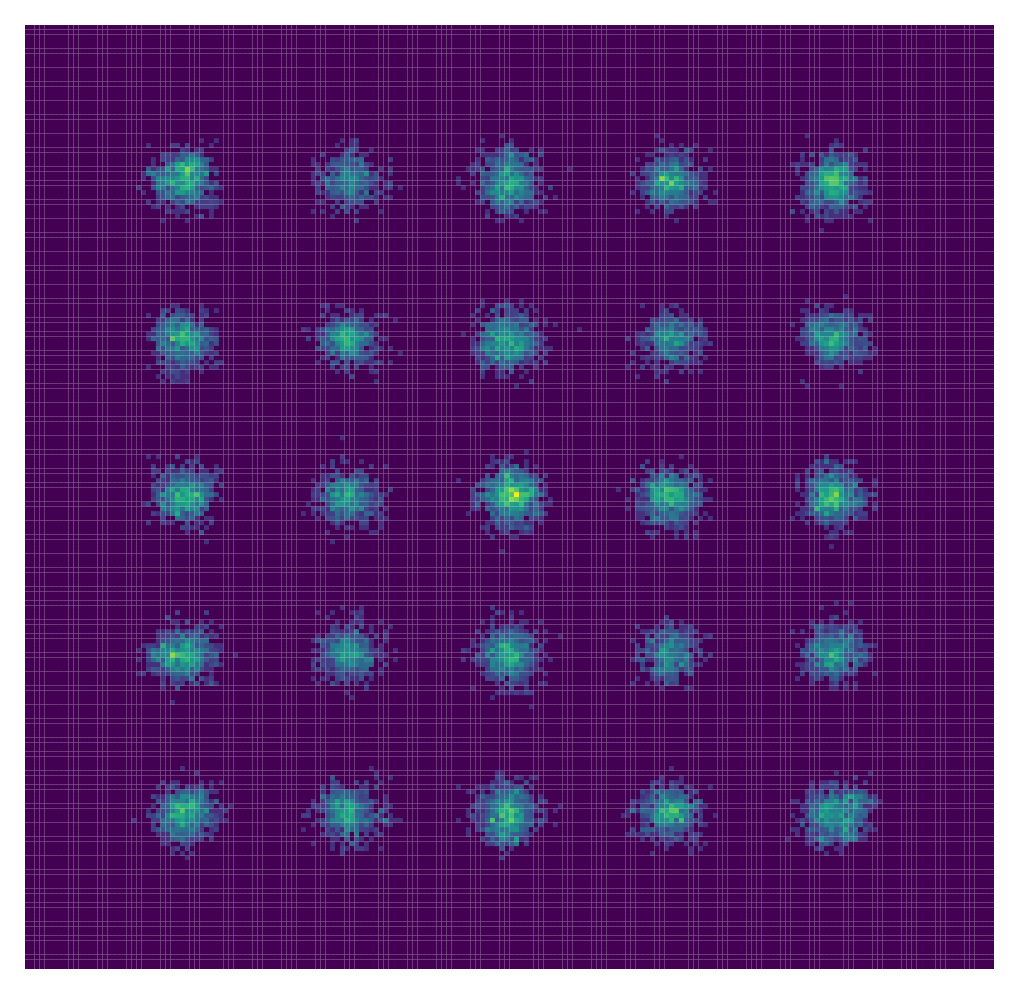}
    \\
    \centering
       \hspace{1pt}\includegraphics[width=.22\linewidth]{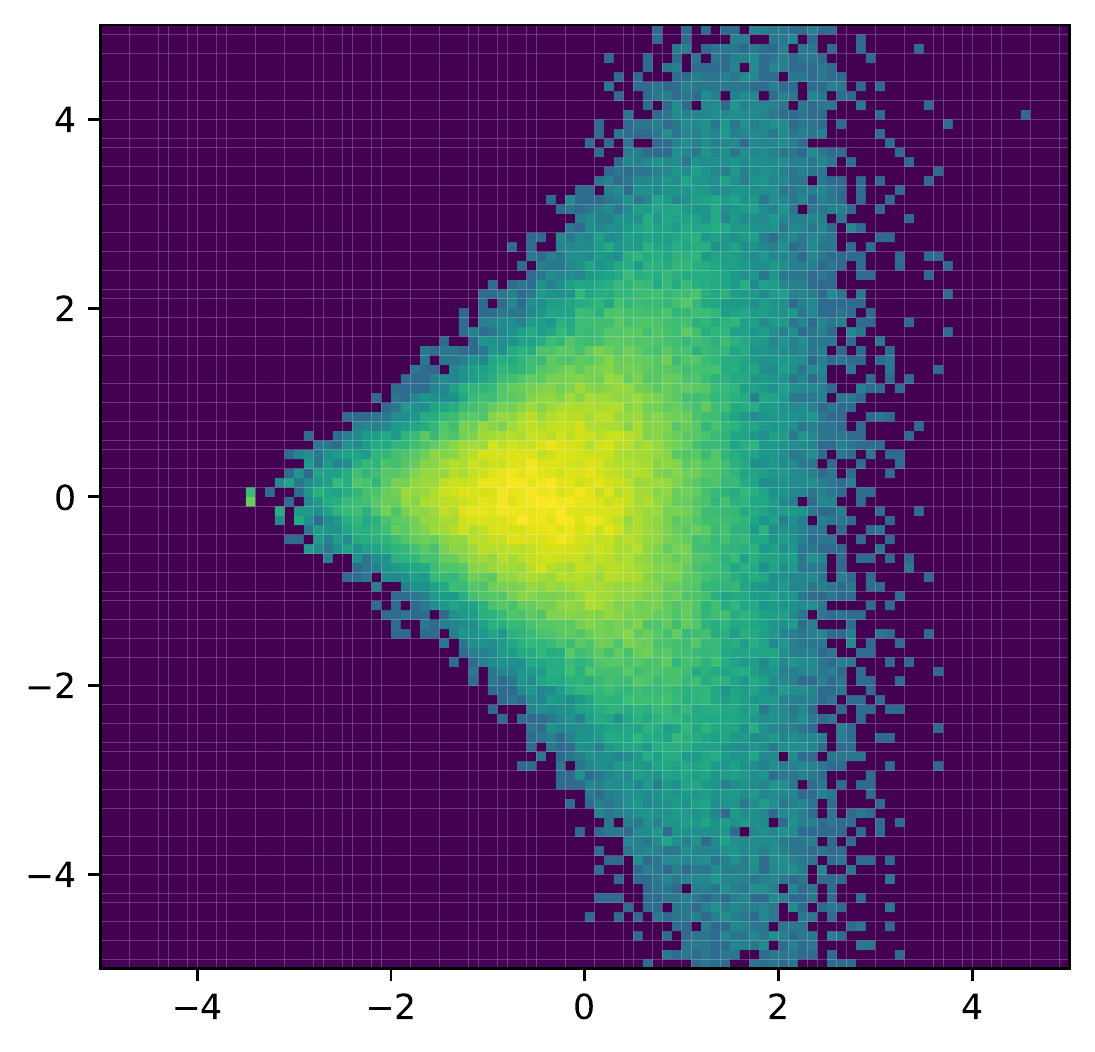}
         \hspace{1pt} \includegraphics[width=.22\linewidth]{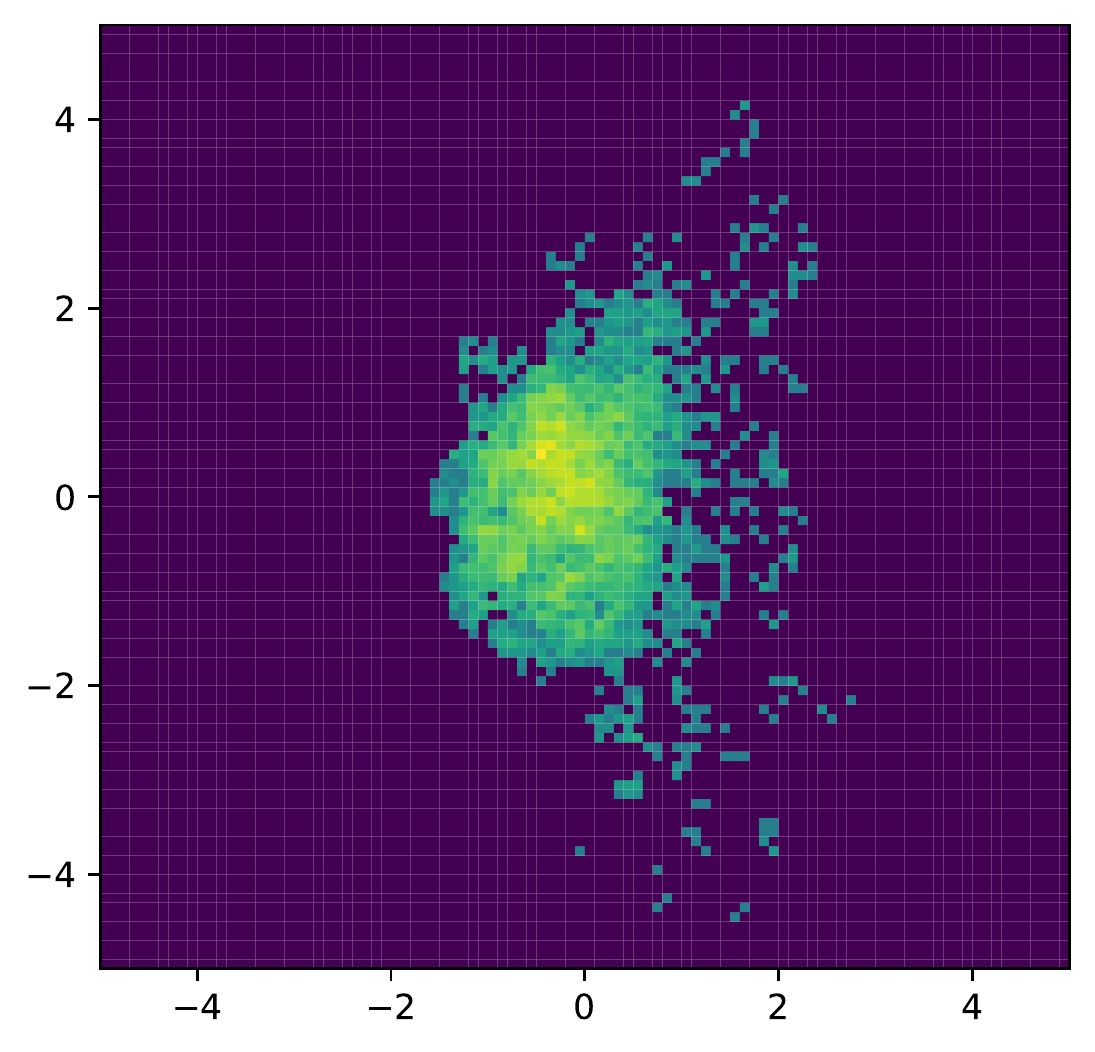}
         \hspace{1pt}   \includegraphics[width=.22\linewidth]{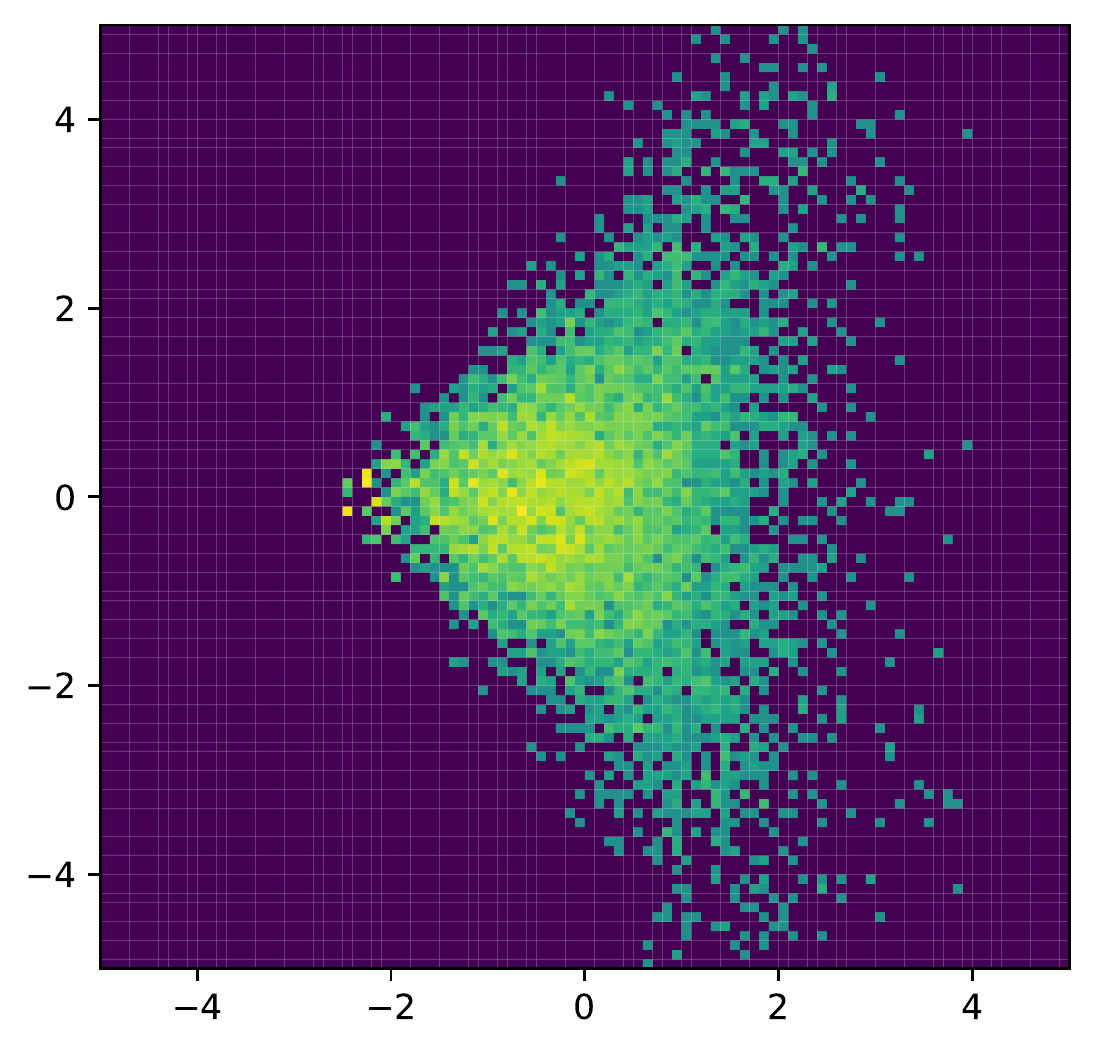}
             \hspace{1pt}
        \includegraphics[width=.22\linewidth]{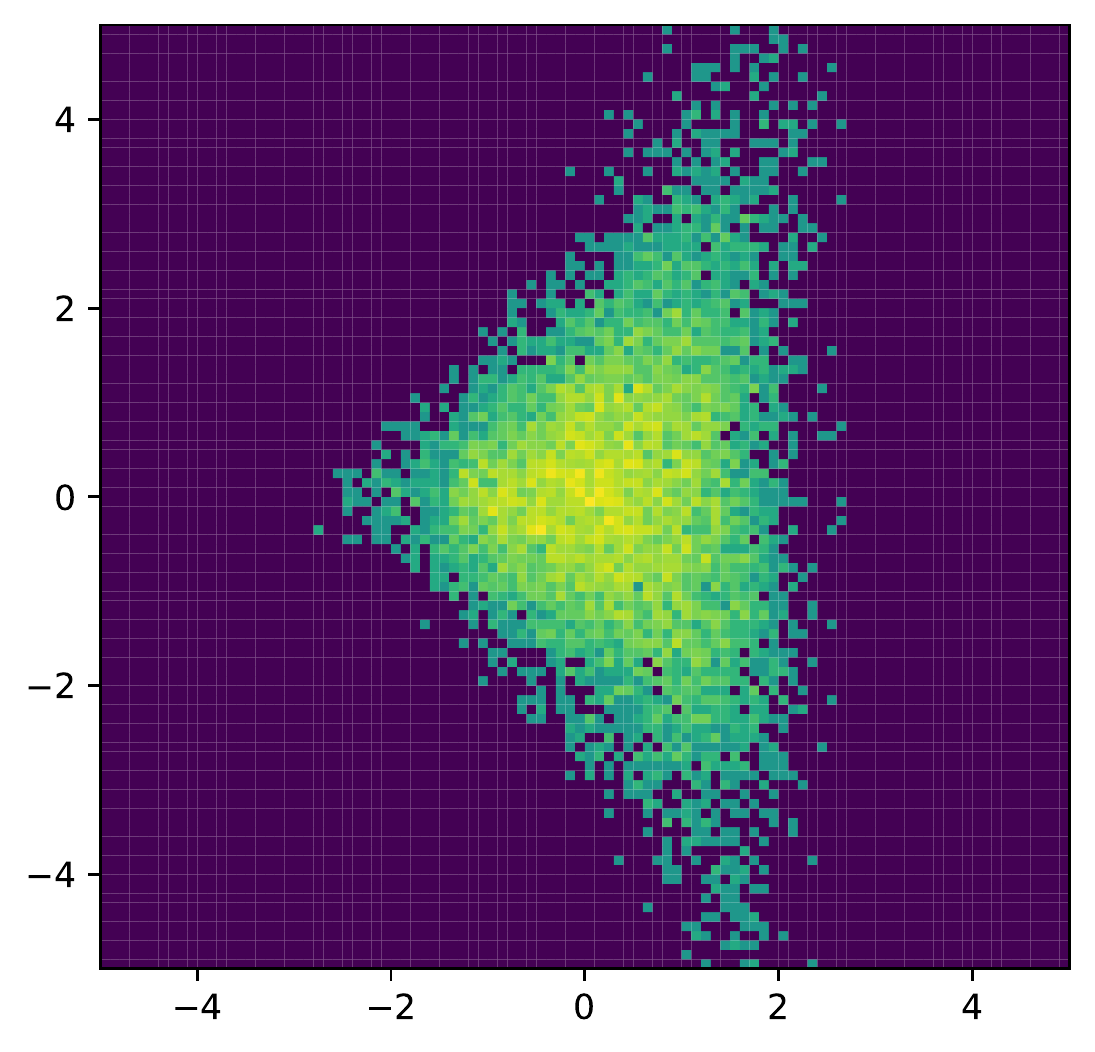}
     \caption{Empirical 2-D histogram of the samples of different algorithms targeting \texttt{MG25}  example (top) and  \texttt{Fun} (bottom).  From left to right: samples from the target distribution, correlated i-SIR, NUTS, \IFIS-MCMC.}
    \label{fig:samples_mcmc}
\end{figure}
\Cref{fig:samples_mcmc} displays the empirical two-dimensional histograms of the two first coordinates of samples from the ground truth, i-SIR, NUTS and \InFiNE-MCMC sampler. It is worthwhile to note that \NEO-MCMC algorithm performs much better for \texttt{MG25} which is a very challenging distribution, even for SOTA algorithm such as NUTS, which struggles to cross energy barriers between modes. For \texttt{Fun}, \NEO-MCMC performs favourably  \wrt\ NUTS, which is well adapted for this type of distributions.

\paragraph{Block Gibbs Inpainting with Deep Generative models and \IFIS-MCMC}
\label{subsec:gibbs_inpaint}
We apply \NEO-MCMC to the task of sampling the posterior of a deep latent variable model.
The model consists of a latent variable $x\sim \Normal(0, \Id_d)$ and a conditional distribution $p(\obs\mid x)$ which generates an image $\obs = (\obs^1,\dots, \obs^D) \in \rset^D$.  
Given a family of parametric \emph{decoders} $\{x\mapsto p_\theta(\obs\mid x)$, $\theta\in\Theta\}$, and a training set  $\mathcal{D}=\{\obs_i\}_{i=1}^{M}$, training involves finding the MLE $\theta^* = \argmax_{\theta \in \Theta} p_\theta(\mathcal{D})$.  As $p_\theta(\obs) =\int p_\theta(\obs\mid x)p(x)\rmd x$, the likelihood is intractable and to alleviate this problem, \cite{kingma:welling:2013} proposed to train jointly an approximate posterior $q_\phi(x|\obs)$ that maximizes a tractable lower-bound on the log-likelihood: $\ELBO(\obs,\theta,\phi) = \PE_{X \sim q_\phi( \cdot |\obs)}[\log p_\theta(\obs,X)/q_\phi(X|\obs)] \leq p_\theta(\obs)$, where $q_\phi(x\mid \obs)$ is a tractable conditional distribution with parameters $\phi \in \Phi$. It is assumed in the sequel that conditional to the latent variable $x$, the coordinates are independent, \ie\ $p_{\theta}(\obs \mid x)= \prod_{i=1}^{D} p_\theta(\obs^i | x)$.

It is possible to train VAE with the \NEO\ algorithm using the unbiased estimate of the normalizing constant to construct an ELBO. This approach is described in the supplement \Cref{sec:NEO-VAE}. It is assumed here that the VAE has been trained and we are only interested in the sampling problem. In our experiment, we use a VAE trained on CelebA dataset \footnote{See  \url{https://github.com/YannDubs/disentangling-vae/tree/master/results/betaH_celeba}} \cite{liu2018large}.  
We consider the Block Gibbs inpainting task introduced in \citep[Section~5.2.2]{levy:hoffman:sohl}. 
 We in-paint the bottom of an image using Block Gibbs sampling. Given an image $\obs$, denote by $[\obs^t, \obs^b] $ the top and the bottom half pixels. A two-stage Gibbs sampler amounts to (a) sampling $p_{\theta^*}(x | \obs^{t}, \obs^b)$ and (b) sampling $p_{\theta^*}(\obs^b|x, \obs^t)= p_{\theta^*}(\obs^b | x)$ (since $\obs^b$ and $\obs^t$ are independent conditional on $x$). 
 Starting from an image $\obs_0$, we sample at each step $x_k\sim p_{\theta^*}(\obs\mid \obs_k)$ and then $\tilde{\obs}_k\sim p_{\theta^*}(\obs\mid x_k$).   We then set $\obs_{k+1}=  ( {\obs}^t_*,\tilde{\obs}^b_{k+1})$. Stage (b) is elementary but stage (a) is challenging. We use the following decomposition  of $p_{\theta^*}(x\mid \obs) = q^\beta_{\phi^*}(x\mid \obs) p_{\theta^*}(x,\obs)/(q^\beta_{\phi^*}(x\mid \obs)p_{\theta^*}(\obs))$ with $\beta \in \ooint{0,1}$. We identify the \emph{proposal} $\proposal(x)\propto  q^\beta_{\phi^*}(x\mid \obs)$ ,  the \emph{likelihood} $\likelihood\leftarrow  p_{\theta^*}(x,\obs)/q^\beta_{\phi^*}(x\mid \obs)$ and the normalizing constant $\const = p_{\theta^*}(\obs)$ and apply i-SIR and \NEO-MCMC sampler for stage (a). 
 We compare  different algorithms in stage (a): i-SIR, HMC and \NEO-MCMC, with the same computational complexity ($N=10$, $K=12$, $\gamma = 0.2$ for \NEO-MCMC, $N=120$ for i-SIR, and HMC is run with $K=20$ leap-frog steps). For each algorithm,  10 steps are performed.
 \Cref{fig:gibbs_inpainting} displays the evolution of the resulting Markov chains. The samples  clearly illustrate that \NEO-MCMC\ mixes better than i-SIR and HMC. More are showcased in the supplementary.
 \begin{figure}
     \centering
     \includegraphics[width = .91\linewidth]{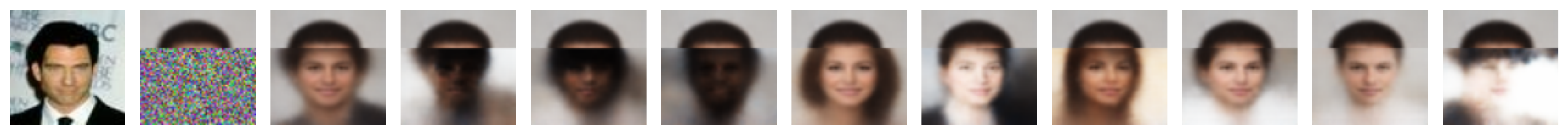}
\includegraphics[width = .91\linewidth]{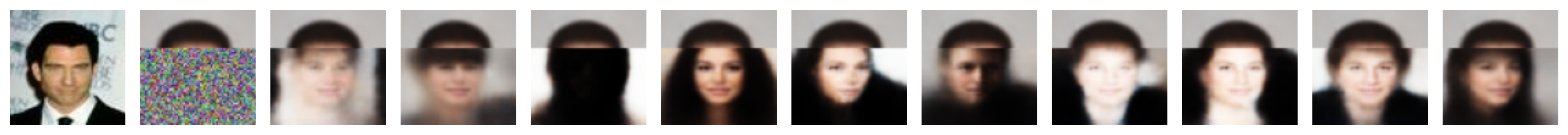}
\includegraphics[width = .91\linewidth]{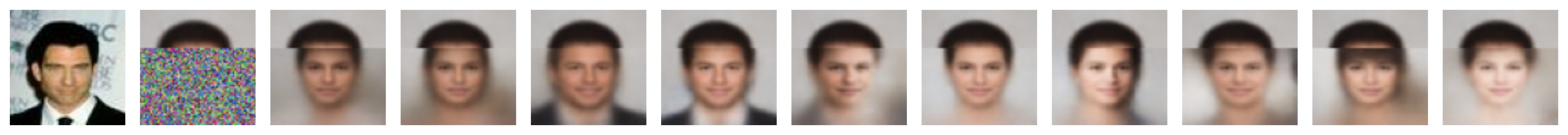}
\includegraphics[width = .91\linewidth]{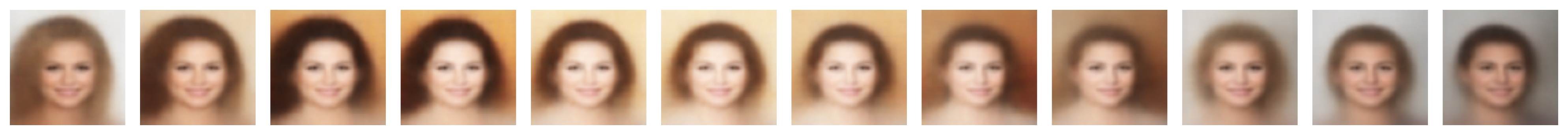}
     \caption{Gibbs inpainting for CelebA dataset. 
    From top to bottom: i-SIR, HMC and \NEO-MCMC: From left to right, original image, blurred image to reconstruct, and output every 5 iterations of the Markov chain. Last line: a forward orbit used in \NEO-MCMC.}
     \label{fig:gibbs_inpainting}
 \end{figure}
 
\section{Conclusion}
\label{sec:conclusion}
In this paper, we have proposed a new family of algorithms, \NEO, for computing normalizing constants and sampling from complex distributions. This methodology comes with asymptotic and non-asymptotic convergence guarantees. For normalizing constant estimation, \NEO-IS compares favorably to state-of-the-art algorithms on difficult benchmarks.  \NEO-MCMC is also very efficient for sampling  complex distributions: it is particularly well-adapted to sampling multimodal distributions, thanks to its proposal mechanism which avoids being trapped in local modes. There are numerous potential extensions to this work. For example, it would be interesting to consider deterministic transformations other than conformal Hamiltonian dynamics integrators. These transformations could be trained, as for Neural IS, using a variation lower bound. It would also be interesting to further investigate the influence of the mixture weights $\{\varpi_k\}_{k\in\zset}$ on the efficiency of \NEO.

\bibliographystyle{apalike}
\bibliography{bibliography}

\appendix

\section{Proofs}
\label{sec:proofs}
\subsection{Additional notation}

By abuse of notation, we denote by $\proposal$ and $\tpi$ the probability measures with density \wrt~the Lebesgue measure $\proposal$ and $\tpi$  respectively. 

\subsection{Proof of (\ref{eq:def_w_k})}
\label{app:proof:def_w_k}

The second expression of $w_k$ follows from $\JacOp{\transfo^{-j}}(\transfo^k(x)) = \JacOp{\transfo^{k-j}}(x)/\JacOp{\transfo^k}(x)$ which implies
\begin{multline}
    w_{k}(x) = 
     \varpi_k \rho(\transfo^{k}(x)) / \sum\nolimits_{j \in\zset}\varpi_j \rho(\transfo^{k-j}(x))\JacOp{\transfo^{-j}}(\transfo^k(x))\eqsp,\\
    \nonumber=  \varpi_k \rho(\transfo^{k}(x))\JacOp{\transfo^k}(x) / \sum\nolimits_{j\in\zset}\varpi_j \rho(\transfo^{k-j}(x))\JacOp{\transfo^{k-j}}(x) = \left.  \varpi_k \rho_{-k}(x) \middle / \left. \sum\nolimits_{i\in\zset}\varpi_{k+i} \rho_{i}(x) \right. \right. \eqsp.
\end{multline}

\subsection{Proof of \Cref{theo:clt_const_infine}}
\label{subsec:proof:clt_const_infine}
The unbiasedness of $\estConstC{\chunku{X}{1}{N}}$ follows directly from  \eqref{eq:key-relation}.
Moreover, as $\estConstC{\chunku{X}{1}{N}}$ is unbiased and $E^{\varpi} _{\transfo}<\infty$, we can write
\begin{equation}
    \Var_\rho[\estConstC{X}/\const] = \PE_\rho[(\estConstC{X}/\const)^2] - 1 = E^{\varpi} _{\transfo} -1\eqsp.
\end{equation}
As $\chunku{X}{1}{N}\simiid\rho$, $\Var_\rho[\estConstC{\chunku{X}{1}{N}}/\const] = N^{-1}  \Var_\rho[\estConstC{X}/\const]$. 
Finally, if $\bound < \infty$, then Hoeffding's inequality applies and we can write for any $\epsilon>0$,
\begin{equation}
\label{eq:hoeffding_constant_epsilon}
    \PP(|\estConstC{\chunku{X}{1}{N}}/\const-1|>\epsilon)\leq 2\exp( -2N\epsilon^2/(\bound)^2)\eqsp.
\end{equation}
Writing $\delta = 2\exp( -2N\epsilon^2/(\bound)^2)$, we identify
$\log(2/\delta)=2 {N\epsilon^2}/(\bound)^2$ and $\epsilon = \bound \sqrt{\log(2/\delta)/(2N)}$.
Plugging this expression of $\epsilon$ in \eqref{eq:hoeffding_constant_epsilon} concludes the proof.

\subsection{Proof of \Cref{theo:bias_mse_snis}}
\label{subsec:proo:bias_mse_snis}
We preface the proof of \Cref{theo:bias_mse_snis} with two auxiliary lemmas.
\begin{lemma}
  \label{lemma:estimator_ratio_bounds}
  Let $A,B$ be two integrable random variables satisfying $\abs{A/B} \leq M$ almost surely and denote $a=\PE[A]$, $b=\PE[B]$. Then,
 \begin{align}
     \left| \PE[A/B] - a/b\right|& \leq \frac{\sqrt{\Var(A/B) \Var(B)}}{b}\eqsp,\\
\mathrm{Var}(A/B) & \leq     \Exp{\left| A/B - a/b \right|^2} \leq \frac{2}{B^2}\left(\Exp{|A_N-A|^2} + M^2 \Exp{|B_N-B|^2}\right)\eqsp.
 \end{align}
\end{lemma}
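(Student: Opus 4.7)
The plan is to prove the two statements separately, using only elementary manipulations. The hypothesis $|A/B|\leq M$ is what makes the ratio well-behaved and enters crucially in the second bound.

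For the first inequality, the idea is to rewrite the difference $\PE[A/B] - a/b$ as a covariance. Using the tautological identity $a = \PE[A] = \PE[(A/B)\cdot B]$, I would write
\begin{equation*}
\PE[A/B]\cdot b - a \;=\; \PE[A/B]\cdot\PE[B] - \PE[(A/B)\cdot B] \;=\; -\mathrm{Cov}(A/B,\,B)\eqsp.
\end{equation*}
Dividing by $b$ and applying the Cauchy--Schwarz inequality $|\mathrm{Cov}(A/B,B)|\leq\sqrt{\Var(A/B)\Var(B)}$ yields the claimed bound. The hypothesis $|A/B|\leq M$ is not needed here, only the integrability of the relevant quantities.

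For the second chain of inequalities, the left inequality $\Var(A/B)\leq\PE[|A/B-a/b|^2]$ is immediate from the variational characterization of the variance as $\Var(X)=\min_{c\in\rset}\PE[(X-c)^2]$ (applied with $X=A/B$ and $c=a/b$); alternatively one notes that the RHS equals $\Var(A/B)+(\PE[A/B]-a/b)^2$. For the right inequality (reading $A_N,B_N$ as approximations of $A,B$ with $|A_N/B_N|\leq M$, as is needed for application in \Cref{theo:bias_mse_snis}), the standard ratio-estimator trick is to add and subtract a convenient term:
\begin{equation*}
\frac{A_N}{B_N}-\frac{a}{b} \;=\; \frac{A_N-a}{b} \;+\; \frac{A_N}{B_N}\cdot\frac{b-B_N}{b}\eqsp.
\end{equation*}
Applying the elementary inequality $(x+y)^2\leq 2x^2+2y^2$, squaring, and bounding $(A_N/B_N)^2\leq M^2$ pointwise via the uniform bound, one obtains
\begin{equation*}
\left|\frac{A_N}{B_N}-\frac{a}{b}\right|^2 \;\leq\; \frac{2}{b^2}\Bigl(|A_N-a|^2 + M^2\,|B_N-b|^2\Bigr)\eqsp,
\end{equation*}
from which the expectation bound follows by integration.

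I do not anticipate any significant obstacle: the proof is essentially algebraic. The only point requiring a little care is the placement of the bound $|A/B|\leq M$ in the ratio decomposition, so that the factor $M^2$ multiplies the $B$-discrepancy rather than the $A$-discrepancy (this ordering is what makes the bound useful in \Cref{theo:bias_mse_snis}, where $A_N$ approximates an integral with respect to $\pi$ and $B_N$ is a normalizing-constant estimator bounded via $\bound$).
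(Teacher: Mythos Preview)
Your proposal is correct and follows essentially the same route as the paper: the first bound via the covariance identity $\PE[A/B]\,b-a=-\mathrm{Cov}(A/B,B)$ plus Cauchy--Schwarz, and the second via the decomposition $A/B-a/b=(A-a)/b+(A/B)(b-B)/b$ together with $(x+y)^2\le 2x^2+2y^2$ and the pointwise bound $|A/B|\le M$. You also correctly handled the typos in the statement (the denominator should be $b^2$, and $A_N,B_N$ should be read as $A,B$), which is exactly what the paper's own proof does.
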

 \begin{proof}
 Write first, using the Cauchy-Schwarz  inequality,
 \begin{align*}
\left| \Exp{\frac{A}{B}} - \frac{a}{b} \right| &= \left| \Exp{\frac{A}{B}} - \frac{\Exp{A}}{b} \right|
= \left| \Exp{A\left(\frac{1}{B} - \frac{1}{b}\right)} \right|\eqsp,\\
&= \left| \Exp{\frac{A}{B}\left(\frac{b - B}{b}\right)} \right|
= \left| \Exp{\left(\frac{A}{B} - \Exp{\frac{A}{B}}\right)\left(\frac{B - b}{b}\right)} \right|\eqsp,\\
&\leq \frac{\sqrt{\Var(A/B)}\sqrt{\Var(B)}}{b}\eqsp.
\end{align*}
Moreover, using  $|A/B| \leq M$ yields
\begin{align*}
    \left|\frac{A}{B} - \frac{a}{b}\right| = \left|\frac{1}{b}(A - a) + A\left(\frac{1}{B} - \frac{1}{b}\right)\right|
    &\leq\frac{1}{b}|A - a| + \frac{|A|}{B b}|B - b|\eqsp,\\
    &\leq \frac{1}{b}|A - a| + \frac{M}{b}|B - b|\eqsp.
\end{align*}
Therefore,
$$
\left|A/B - a/b\right|^2 \leq \frac{2}{b^2}\left(|A - a|^2 + M^2|B - b|^2\right)\eqsp,
$$
Using that $ \Exp{\left| A/B - a/b \right|^2} = \mathrm{Var}(A/B) + \left| \PE[A/B] - a/b \right|^2$ concludes the proof.
\end{proof}
We get the following lemma from \citep[Lemma 4]{douc2011sequential}.
\begin{lemma}
\label{lemma:hoeffding_ratio}
Assume that $A$ and $B$ are random variables and that there exist positive constants $b, M, C, K$ such that
\begin{enumerate}[label=(\roman*)]
    \item $|A / B| \leq M$, $\mathbb{P}$-a.s.\,,
    \item for all $\epsilon > 0$ and all $N \geq 1$, $\mathbb{P}\left( | B - b | > \epsilon \right) \leq K \exp(-R\epsilon^2)$\,,  \label{assump2}
    \item for all $\epsilon > 0$ and all $N \geq 1$, $\mathbb{P}\left(|A| > \epsilon\right) \leq K\exp\left(-R\epsilon^2/M^2\right)$\,, \label{assump3}
\end{enumerate}
then,
$$\mathbb{P}(|A/B| \geq \epsilon) \leq 2K \exp(-Rb^2\epsilon^2/4M^2)\eqsp.$$
\end{lemma}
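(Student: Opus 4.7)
My plan is to use a standard ``peeling'' argument that splits the event $\{|A/B| \geq \epsilon\}$ into one piece where $B$ deviates significantly from its centering constant $b$ and another where $A$ itself is large, then apply assumptions (ii) and (iii) separately to each piece via a union bound. The only delicate point is picking the decomposition threshold so that both resulting bounds come out with the same exponent $Rb^2\epsilon^2/(4M^2)$ as the stated conclusion.

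First I would dispose of the range $\epsilon > M$: since $|A/B| \leq M$ almost surely by (i), the event $\{|A/B| \geq \epsilon\}$ is empty in that range, so the bound holds trivially. Thus it suffices to treat $\epsilon \leq M$, in which case $t \defeq b\epsilon/(2M) \leq b/2$.

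Next I would observe the key set inclusion
\begin{equation*}
\{|A/B| \geq \epsilon\} \subseteq \{|B - b| > t\} \cup \{|A| \geq b\epsilon/2\}.
\end{equation*}
Indeed, on the complement of the first event, $|B| \geq b - t = b(1 - \epsilon/(2M)) \geq b/2$, and because $|A/B|\geq \epsilon$ forces $|A| \geq \epsilon |B|$, we then get $|A| \geq \epsilon b/2$, landing in the second event. A union bound yields
\begin{equation*}
\PP(|A/B|\geq \epsilon) \leq \PP(|B-b| > b\epsilon/(2M)) + \PP(|A| \geq b\epsilon/2).
\end{equation*}

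Finally I would apply assumption (ii) with threshold $b\epsilon/(2M)$ to bound the first term by $K\exp(-Rb^2\epsilon^2/(4M^2))$, and apply assumption (iii) with threshold $b\epsilon/2$ to bound the second term by $K\exp(-R(b\epsilon/2)^2/M^2) = K\exp(-Rb^2\epsilon^2/(4M^2))$. Summing gives the claimed $2K\exp(-Rb^2\epsilon^2/(4M^2))$. The only ``hard'' step is really just choosing $t=b\epsilon/(2M)$; once that calibration is made so that the two exponents match, the rest is a one-line verification.
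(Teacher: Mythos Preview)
Your proposal is correct and arrives at exactly the same set inclusion $\{|A/B|\geq \epsilon\}\subseteq\{|B-b|\geq b\epsilon/(2M)\}\cup\{|A|\geq b\epsilon/2\}$ and the same union-bound finish as the paper. The only cosmetic difference is in how that inclusion is obtained: the paper writes the algebraic identity $A/B=(A/B)(b-B)b^{-1}+b^{-1}A$, applies the triangle inequality together with $|A/B|\leq M$, and splits the resulting sum; this works for every $\epsilon>0$ in one stroke, whereas your direct argument needs the preliminary case $\epsilon>M$ to ensure $t\leq b/2$. Either way the substance is identical.
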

\begin{proof}
By the triangle inequality,
\begin{align*}
    \left|A/B\right| &= \left| \frac{A}{B} (b - B) b^{-1}  + b^{-1}A\right|\eqsp,\\
    &\leq b^{-1}\left| A/B\right| \left|b - B\right| + b^{-1} \left|A\right| \leq Mb^{-1}\left|b - B\right| + b^{-1}\left|A\right|\eqsp.
\end{align*}
Therefore,
\begin{align*}
    \left\{ \left|A/B\right| \geq \epsilon \right\} \subseteq \left\{ \left|B - b\right| \geq \frac{\epsilon b}{2M}\right\} \cup \left\{ |A| \geq \frac{\epsilon b}{2}\right\}\eqsp.
\end{align*}
Then, conditions \ref{assump2} and \ref{assump3} imply that
\begin{align*}
    \mathbb{P}\left( |A/B| \geq \epsilon \right) & \leq \mathbb{P}\left(\left|B - b\right| \geq \frac{\epsilon b}{2M}\right) + \mathbb{P}\left(|A| \geq \frac{\epsilon b}{2}\right)\eqsp, \\
    &\leq 2K \exp(-R b^2\epsilon^2/(4M^2))\eqsp.
\end{align*}
\end{proof}

\begin{proof}[Proof of \Cref{theo:bias_mse_snis}]
  Let $g :\rset^d \to \rset$ such that  $\sup_{x \in \rset^d} \abs{g}(x) \leq 1$ and denote $\pi(g) = \int g \rmd \pi$.
We use \Cref{lemma:estimator_ratio_bounds} with $A = A_N$ and $B= \estConstC{\chunku{X}{1}{N}}$ where
\begin{equation}
    A_N = \frac{1}{N}\sum_{i=1}^N \sum_{k\in\zset}w_k(X^i) \likelihood(\transfo^k(X^i)) g(\transfo^k(X^i))\eqsp, \eqsp \estConstC{\chunku{X}{1}{N}} = \frac{1}{N}\sum_{i=1}^N \sum_{k\in\zset}w_k(X^i) \likelihood(\transfo^k(X^i))\eqsp.
\end{equation}
By construction, since  $\sup_{x \in \rset^d} \abs{g}(x) \leq 1$, almost surely $A_N/\estConstC{\chunku{X}{1}{N}} \leq 1$ and $\mathrm{Var}(\estConstC{\chunku{X}{1}{N}}) = N^{-1} \mathrm{Var}(\estConstC{X^1})$. Then, using  \eqref{eq:key-relation} with $a = \PE[A_N] = \const \pi(g)$ and $b = \PE[\estConstC{\chunku{X}{1}{N}}] = \const$,   \Cref{lemma:estimator_ratio_bounds} implies  
\begin{equation}
  \left|\infineSNIS(g) - \pi(g)\right|=
    \left|\PE[A_N/\estConstC{\chunku{X}{1}{N}}] - a/b\right| \leq N^{-1/2}\sqrt{\Var(A_N/\estConstC{\chunku{X}{1}{N}}) \mathrm{Var}(\estConstC{X^1})}\eqsp.
\end{equation}
On the other hand,
\begin{equation*}
  \txts\Exp{|A_N - a|^2}  = N^{-1} \PE_{X \sim \rho}\big[\big\{\sum_{k \in \mathbb{Z}} w_k(X) \likelihood(\transfo^k(X))g(\transfo^k(X)) - \const \pi(g)\big\}^2 \big]
 \leq N^{-1} \const^2 E^{\varpi}_{\transfo}\eqsp.
\end{equation*}
These inequalities yield using $\mathrm{Var}(\estConstC{X^1}) \leq E^{\varpi}_{\transfo}$ and \Cref{lemma:estimator_ratio_bounds} again:
 \begin{align*}
     &\Exp{|\infineSNIS(g) - \pi(g)|^2} \leq \frac{2}{N}(E^{\varpi}_{\transfo} + \mathrm{Var}(\estConstC{X^1})) \leq\frac{4}{N}E^{\varpi}_{\transfo}\eqsp,\\
     &|\Exp{\infineSNIS(g) - \pi(g)}|\leq \frac{\sqrt{2(E^{\varpi}_{\transfo} + \mathrm{Var}(\estConstC{X^1}))\mathrm{Var}(\estConstC{X^1})}}{N}\leq \frac{2 E^{\varpi}_{\transfo}}{N}\eqsp,
 \end{align*}
 which concludes the proof.

 Define
 \begin{equation*}
   \tilde{A}_N = N^{-1} \sum_{i=1}^N \sum_{k \in \zset} w_k(X^i) \likelihood(\transfo^k(X^i))\left(g(\transfo^k(X^i)) - \pi(g)\right)  \eqsp.
 \end{equation*}
 With this notation, the proof of \eqref{eq:exp_concentration} relies on the application of Lemma~\ref{lemma:hoeffding_ratio} to $A = \tilde{A}_N$ and $B = \estConstC{\chunku{X}{1}{N}}$, since
 $$ \infineSNIS(g) - \pi(g) = A_N / \estConstC{\chunku{X}{1}{N}}\eqsp.$$
As $\sup_{x \in \rset^d} \abs{g}(x) \leq 1$, we get that $\tilde{A}_N / \estConstC{\chunku{X}{1}{N}} \leq 2$.
By \eqref{eq:key-relation}, $\mathbb{E}[\estConstC{\chunku{X}{1}{N}}] =\const$ and $\estConstC{\chunku{X}{1}{N}} = N^{-1} \sum_{i=1}^N W_i $ with $W_i =  \sum_{k \in\zset} w_k(X^i) \likelihood(\transfo^k(X^i)) \leq \bound$. Then, by Hoeffding's inequality, for all $\varepsilon > 0$,
$$ \mathbb{P}(|B_N - \const| > \varepsilon) \leq 2\exp(-2N(\varepsilon/\bound)^2)\eqsp.$$
Similarly, $A_N$ is centered and $A_N= N^{-1} \sum_{i=1}^N U_i $ with $$U_i =  \sum_{k \in\zset} w_k(X^i) \likelihood(\transfo^k(X^i))\{g(\transfo^k(X^i)) -\pi(g)\}$$ and $\abs{U_i}  \leq 2\bound$ almost surely. By Hoeffding's inequality, for all $\varepsilon > 0$,
$$ \mathbb{P}(\left|A_N\right| > \epsilon) \leq 2\exp(-N\varepsilon^2/(8(\bound)^2))\eqsp.$$
The assumptions of \Cref{lemma:hoeffding_ratio} are met so that
$$ \mathbb{P}(|\infineSNIS(g) - \pi(g) | > \varepsilon) \leq 4\exp(-\varepsilon^2N \const^2/[32(\bound)^2])\eqsp,$$
which concludes the proof.
\end{proof}

\subsection{Proof of \Cref{lem:chi2_E_T}}
\label{sec:proof_lem:chi2_E_T}
As $w_k(x) = \varpi_k \rho(\transfo^k(x))/\{\Omega\rhoT(\transfo^k(x)) \}$, by Jensen's inequality,
\begin{align*}
    E_{\transfo}^{\varpi} = \int \left( \sum_{k\in\zset} w_k(x) \likelihood(\transfo^k(x))/\const\right)^2 \rho(x)\rmd x &= \int \left( \sum_{k\in\zset} \frac{\varpi_k}{\Omega} \frac{\pi(\transfo^k(x))}{\rhoT(\transfo^k(x))}\right)^2 \rho(x)\rmd x\eqsp,\\
    &\leq \int  \sum_{k\in\zset} \frac{\varpi_k}{\Omega} \left(\frac{\pi(\transfo^k(x))}{\rhoT(\transfo^k(x))}\right)^2 \rho(x)\rmd x\eqsp, \\
    &\leq  \Omega^{-1}\sum_{k\in\zset} \varpi_k \int  \left(\frac{\pi(\transfo^k(x))}{\rhoT(\transfo^k(x))}\right)^2 \rho(x)\rmd x\eqsp.
\end{align*}
Using the change of variables $y= \transfo^k(x)$ yields, by \eqref{eq:rhoT},
$$
    E_{\transfo}^{\varpi} \leq \Omega^{-1}\sum_{k\in\zset} \varpi_k \int  \left(\frac{\pi(y)}{\rhoT(y)}\right)^2 \rho(\transfo^{-k}(y))\JacOp{\transfo^{-k}}(y)\rmd y\leq \int\left(\frac{\pi(y)}{\rhoT(y)}\right)^2\rhoT(y) \rmd y\eqsp.
$$
\subsection{Proofs of \InFiNE\ MCMC sampler}
\label{sec:supp:proof_mcmc}

\begin{proof}[Proof of \Cref{lem:invariant_P}]
Note first that by symmetry, we have
\begin{equation}
\label{eq:symmetrized_P}
   P(y, \msa) = N^{-1}\int\sum_{i=1}^N \updelta_{y}(\rmd x^i) \prod_{j=1,\\j\neq i}^N\rho(x^j)\rmd x^j \sum_{k=1}^N\frac{\estConstC{x^k}}{\sum_{j=1}^N \estConstC{x^j}}\indi{\msa}(x^k)\eqsp.
\end{equation}
We begin with the proof of reversibility of $P$ \wrt\ $\tpi$.
Let $f,g$ be nonnegative measurable functions. By definition of $P$,
\begin{align*}
\int \tpi(\rmd y) P(y, \rmd y') f(y) g(y') &= \frac{1}{N\const}\int \sum_{i=1}^N \proposal(\rmd y) \estConstC{y} f(y) \updelta_y(\rmd x^i) \prod_{l=1, l\neq i}^N \proposal(\rmd x^l) \sum_{k=1}^N \frac{\estConstC{x^k}}{\sum_{j=1}^N\estConstC{x^j}} g(x^k)\eqsp,\\
&= \frac{1}{N\const}\int \sum_{i=1}^N  \estConstC{x^i} f(x^i)  \prod_{l=1}^N \proposal(\rmd x^l) \sum_{k=1}^N \frac{\estConstC{x^k}}{\sum_{j=1}^N\estConstC{x^j}} g(x^k)\eqsp,\\
&= \frac{1}{N\const}\int   \prod_{l=1}^N \proposal(\rmd x^l)  \frac{  \sum_{i=1}^N\estConstC{x^i} f(x^i) \sum_{k=1}^N\estConstC{x^k}g(x^k)}{\sum_{j=1}^N\estConstC{x^j}}\eqsp, \\
&= \int \tpi(\rmd y) P(y, \rmd y') f(y') g(y)\eqsp,
\end{align*}
which shows that $P$ is $\tpi$-reversible. We now establish that $P$ is $\tpi$-irreducible. 
  We have for $y \in \rset^d$, $\msa \in\mcbb(\rset^d)$,
  \begin{align*}
    P(y, \msa) &=\int \updelta_y(\rmd x^1)\sum_{i=1}^N \frac{\estConstC{x^i}}{N\estConstC{\chunku{x}{1}{N}}}\indi{\msa}(x^i) \prod_{j=2}^N\proposal(\rmd x^j) \\
               &=  \int \frac{\estConstC{y}}{\estConstC{y} + \sum_{j=2}^N \estConstC{x^j}}\indi{\msa}(x) \prod_{j=2}^N\proposal(\rmd x^j) + \int \sum_{i=2}^N \frac{\estConstC{x^i}}{\estConstC{y} + \sum_{j=2}^N \estConstC{x^j}}\indi{\msa}(x^i) \prod_{j=2}^N\proposal(\rmd x^j)\\
               &\geq \sum_{i=2}^N\int  \frac{\estConstC{x^i}}{\estConstC{y} +\estConstC{x^i}+ \sum_{j=2, j\neq i}^N \estConstC{x^j}}\indi{\msa}(x^i) \prod_{j=2}^N\proposal(\rmd x^j)\\
               &\geq\sum_{i=2}^N\int\tpi(\rmd x^i)\indi{\msa}(x^i)\int \frac{ \const}{\estConstC{y} +\estConstC{x^i} +\sum_{j=2, j\neq i}^N \estConstC{x^j}} \prod_{j=2, j\neq i}^N\proposal(\rmd x^j) \eqsp.
  \end{align*}
Since the function $f\colon z\mapsto (z+a)^{-1}$ is convex on $\rset_+$ for $a>0$, we get for $i\in\{2,\dots, N\}$,
  \begin{multline}
  \label{eq:convex_trick_i-SIR}
    \int \frac{\const}{\estConstC{y} + \estConstC{x^i} +\sum_{j=2, j\neq i}^N \estConstC{x^j}} \prod_{j=2, j\neq i}^N\proposal(\rmd x^j)
     \geq \frac{\const}{\estConstC{y} + \estConstC{x^i} +\int\sum_{j=2, j\neq i}^N \estConstC{x^j}\prod_{j=2, j\neq i}^N\proposal(\rmd x^j)}  \\
     \geq \frac{\const}{\estConstC{y} + \estConstC{x^i} + \const(N-2)}\eqsp.
  \end{multline}
  Therefore, for $\msa \in \mcbb(\rset^d)$ satisfying $\tpi(\msa)>0$, we get $P(y, \msa)>0$ for any $y \in \rset^d$ since $\estConstC{x} < \infty$ for any $x \in\rset^d$. By definition, $P$ is $\tpi$-irreducible.

  We show that $P$ is Harris recurrent using \citep[Corollary
  2]{tierney:1994}. To this end, since $P$ is $\tpi$-irreducible, it
  is sufficient to show that $P$ is a Metropolis type kernel.
  Define
  $\alpha(x^1, x^2) = (N-1) \int \prod_{j=3}^N \proposal(\rmd x^j)
  \estConstC{x^2}/\sum_{j=1}^N\estConstC{x^j}$ for
  $x^1,x^2 \in \rset^d$ and
  $\rho_{2:N}(\rmd \chunku{x}{2}{N}) =\{ \prod_{j=2}^N
  \rho_{2:N}(x^j)\} \rmd \chunku{x}{2}{N}$.  Then, by
  \eqref{eq:definition-P}, we get with this notation, for
  $y \in \rset^d$, $\msa \in\mcbb(\rset^d)$,
\begin{align}
    \nonumber
  &P(y, \msa) \\
    \nonumber
  &=  \int \updelta_{y}(\rmd x^1) \rho_{2:N}(\rmd \chunku{x}{2}{N}) \sum_{i=2}^N\frac{\estConstC{x^i}}{N\estConstC{\chunku{x}{1}{N}}}\indi{\msa}(x^i) + \int\updelta_{y}(\rmd x^1) \rho_{2:N}(\rmd \chunku{x}{2}{N})\frac{\estConstC{x^1}}{N\estConstC{\chunku{x}{1}{N}}}\indi{\msa}(x^1)
  \\
    \nonumber
  &=  \sum_{i=2}^N\int \updelta_{y}(\rmd x^1) \rho_{2:N}(\rmd \chunku{x}{2}{N})\frac{\estConstC{x^i}}{N\estConstC{\chunku{x}{1}{N}}}\indi{\msa}(x^i) + \int\updelta_{y}(\rmd x^1) \rho_{2:N}(\rmd \chunku{x}{2}{N})\frac{\estConstC{x^1}}{N\estConstC{\chunku{x}{1}{N}}}\indi{\msa}(x^1)\\
    \nonumber
  &=\sum_{i=2}^N \int \updelta_y(\rmd x^1)\proposal(\rmd x^i) \int \prod_{j=2, j\neq i}^N \proposal(x^j) \rmd x^j  \frac{\estConstC{x^i}\indi{\msa}(x^i)}{N\estConstC{\chunku{x}{1}{N}}} +\int\updelta_{y}(\rmd x^1) \rho_{2:N}(\rmd \chunku{x}{2}{N})\frac{\estConstC{x^1}\indi{\msa}(x^1)}{N\estConstC{\chunku{x}{1}{N}}}\\
  \nonumber
  &= \sum_{i=2}^N \int\frac{\alpha(y, x^i)}{(N-1)} \indi{\msa}(x^i)\proposal(\rmd x^i) +  \int\updelta_{y}(\rmd x^1) \rho_{2:N}(\rmd \chunku{x}{2}{N}) \left\{1- \sum_{i= 2} ^N \frac{\estConstC{x^i}}{N\estConstC{\chunku{x}{1}{N}}}\right\}\indi{\msa}(x^1) \\
  \label{eq:metropolis_type}
&=\int_\msa \alpha(y, y')\proposal(y')\rmd y' +  \left(1- \int\alpha(y,y')\proposal(y')\rmd y'\right) \updelta_y(\msa)\eqsp.
\end{align}
With the terminology of \citep[Corollary 2]{tierney:1994}, $P$ is
Metropolis type kernel and therefore is Harris recurrent.

Note that \Cref{algo:gibbs_partial} defines a Markov chain
$\{Y_i,U_i\}_{i \in\nset}$ taking for $U_0$ an arbitrary initial point
with Markov kernel denoted by $\tilde{P}$. By abuse of notation, we
denote by $\{Y_i,U_i\}_{i \in\nset}$ the canonical process on the
canonical space $(\rset^d \times \rset^d)^{\nset}$ endowed with the
corresponding $\sigma$-field and denote by $\PP_{y,u}$ the
distribution associated with the Markov chain with kernel $\tilde{P}$
and initial distribution $\updelta_y \otimes \updelta_u$.  Denote for
any $y \in\rset^d$ by $\PP_y$ the marginal distribution of $\PP_{y,u}$
with respect to $\{Y_i\}_{i \in\nset}$,
\ie~$\PP_y(\msa) = \PP_{(y,u)}(\{Y_i\}_{i\in\nset} \in \msa)$ for
$u \in \rset^d$, noting that by definition,
$\PP_{(y,u)}(\msa \times (\rset^d)^{\nset})$ does not depend on
$u$. In addition, under $\PP_y$, $\{Y_i\}_{i \in\nset}$ is a Markov chain associated with $P$. Therefore, since $P$ is $\tpi$-irreducible and Harris recurrent, we get by \citep[Theorem 11.3.1]{douc:moulines:priouret:2018} and \citep[Theorem 2, 3]{tierney:1994} 
for any $y \in \rset^d$, $\lim_{k \to \infty}\tvnorm{\updelta_y P^k - \tpi} = 0$ and for any bounded and measurable function $g$,
\begin{equation}
  \label{eq:lfgn_y}
  \text{$n^{-1} \sum_{k=1}^n g(Y_k) = \tpi(g)$} \eqsp, \qquad \qquad \text{$\PP_y$-almost surely} \eqsp.
\end{equation}

We now turn to proving the properties regarding $Q$. For any $\msb\in\borel(\rset^d)$, using  \eqref{eq:key-relation}, we obtain
\begin{equation*}
    \int_{}\tpi(y)Q(y, \msb)\rmd y
    = \const^{-1}\int_{} \proposal(y) \sum_{k\in\zset} {w_k(y)\likelihood(\transfo^k(y)) }\1_\msb(\transfo^k(y)) \rmd y = \pi(\msb)\eqsp.
  \end{equation*}
  Using for all $y\in \rset^d$,
  $\lim_{n\to\infty} \tvnorm{P^n(y, \cdot) - \tpi} = 0$, we get
  $\lim_{n\to\infty} \tvnorm{P^n Q(y, \cdot) - \pi} = 0$.  It remains
  to show the stated Law of Large Numbers.  Let $y,u\in\rset^d$ and $g$
  be a bounded measurable function.  Define for any $i\in\nsets$, 
  $\tU_i = g(U_i) - Qg(Y_i)$.  By definition, for any $i\in\nsets$, $\abs{\tU_i} \leq 2\sup_{x\in\rset^d}|g(x)|$ and $\PE_{(y,u)}[\tU_{i}|\mcf_{i-1}] = 0$, where $\{\mcf_k\}_{k\in\nset}$ is the canonical filtration. Therefore, $\{\tU_i\}_{i\in\nsets}$ are $\{\mcf_k\}_{k\in\nset}$-martingale increments and 
  $\{S_k =\sum_{i=1}^k \tU_i\}_{k \in\nset}$ is a $\{\mcf_k\}_{k\in\nset}$-martingale. Using 
  \citep[Theorem 2.18]{hallheydebook}, we get 
  \begin{equation}
      \lim_{n\to\infty}\{S_n/n\} = 0\eqsp, \quad\PP_{(y,u)} \text{-almost surely}\eqsp. 
    \end{equation}
The proof is completed using that $\lim_{n\to \infty}\{n^{-1} \sum_{i=1}^n Qg(Y_i)\} = \tpi(Qg) = \pi(g)$, $\PP_{y}$-almost surely by \eqref{eq:lfgn_y} and therefore by definition, $\PP_{(y,u)}$-almost surely.
\end{proof}
\begin{proof}[Proof of \Cref{theo:geom_ergodicity_infine}]
  We have for $(x, \msa)\in\rset^d\times\borel(\rset^d)$,
  \begin{align*}
    P(y, \msa)
    &\geq\sum_{i=2}^N\int\tpi(\rmd x^i)\indi{\msa}(x^i)\int \frac{ \const}{\estConstC{y} +\estConstC{x^i} +\sum_{j=2, j\neq i}^N \estConstC{x^j}} \prod_{j=2, j\neq i}^N\proposal(\rmd x^j) \eqsp.
  \end{align*}
   Moreover, as for any $x\in\rset^d$, $\estConstC{x}/\const\leq\bound$,
  \begin{equation*}
  \label{eq:convex_trick_i-SIR}
    \int \frac{\const}{\estConstC{y} + \estConstC{x^i} +\sum_{j=2, j\neq i}^N \estConstC{x^j}} \prod_{j=2, j\neq i}^N\proposal(\rmd x^j)
     \geq \frac{\const}{\estConstC{y} + \estConstC{x^i} + \const(N-2)}\geq \frac{1}{2\bound + N-2}\eqsp.
  \end{equation*}
  We finally obtain the  inequality
  \begin{equation}
    P(x, \msa)\geq \tpi(\msa)\times \frac{N-1}{2\bound + N -2} =  \epsilon_N \tpi(\msa)\eqsp.
  \end{equation}
  The proof for $P$ is concluded from \cite[Theorem~18.2.4]{douc:moulines:priouret:2018}.

  As $\tvnorm{P^{k}(y, \cdot) - \tpi}\leq \kappa_N^k$, for any bounded function $f$, $\|f\|_\infty\leq 1$, we have
  $|P^{k}f(y)- \tpi(f)|\leq \kappa_N^k$, by definition of the Total Variation Distance.
  Then, writing $f = Qg$ for any bounded function $g$, $\|g\|_\infty\leq 1$, we have $\|f\|_\infty\leq 1$ and
  \begin{equation}
      |P^{k}f(y)- \tpi(f)| =|P^{k}Qg(y)- \tpi Q(g)| = |P^{k}Qg(y)- \pi(g)|  \leq \kappa_N^k\eqsp.
  \end{equation}
  \end{proof}

  Write now $P$ the Markov kernel extending to correlated proposals:
  for $y \in\rset^d$ and $\msa \in \mathcal{B}(\rset^d)$,
\begin{equation}
     P(y, \msa) = N^{-1}\int\sum_{i=1}^N \updelta_{y}(\rmd x^i) r_i(x^i, \rmd \chunkum{x}{1}{n}{i}) \sum_{k=1}^N\frac{\estConstC{x^k}}{N\estConstC{\chunku{x}{1}{N}}}\indi{\msa}(x^k)\eqsp,
\end{equation}
where the Markov kernels $R_i$ are defined by $R_i(x^i, \rmd \chunkum{x}{1}{N}{i})= r_i(x^i, \chunkum{x}{1}{N}{i} )\rmd \chunkum{x}{1}{N}{i}$ and $r_i$ by \eqref{eq:defintion-r-i}.
\begin{theorem}
\label{theo:invariant_P_correlated}
$P$ is $\tpi$-invariant.
\end{theorem}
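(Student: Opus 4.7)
The plan is to mimic the invariance computation from the proof of \Cref{lem:invariant_P}, but replace the factorization $\prod_j \proposal(x^j)$ by the joint density arising from the correlated proposals, and then exploit the reversibility of $m$ \wrt\ $\proposal$ so that everything still collapses in the same way.

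First I would rewrite the kernel in its symmetrized form: for any bounded measurable $g : \rset^d \to \rset$,
\begin{equation*}
\int \tpi(\rmd y) P(y, \rmd y')\, g(y') = (N\const)^{-1} \sum_{i=1}^N \int \rho(x^i) \estConstC{x^i}\, r_i(x^i, \rmd \chunkum{x}{1}{N}{i})\, \rmd x^i \sum_{k=1}^N \frac{\estConstC{x^k}}{\sum_j \estConstC{x^j}}\, g(x^k),
\end{equation*}
where I have used $\tpi(\rmd y) = \const^{-1} \proposal(y)\estConstC{y}\,\rmd y$ and the delta mass $\updelta_y(\rmd x^i)$ to replace $y$ by $x^i$.

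The key observation is the one already highlighted after \eqref{eq:defintion-r-i}: by reversibility of $m$ \wrt\ $\proposal$, the joint density $p(\chunku{x}{1}{N}) \defeq \proposal(x^i)\, r_i(x^i,\chunkum{x}{1}{N}{i})$ does not depend on $i$ (concretely it equals $\proposal(x^1)\prod_{j=2}^N m(x^{j-1},x^j)$, and the reversibility relation lets one shift the distinguished index one coordinate at a time). Substituting into the display above, the sum over $i$ pulls through the joint measure $p(\chunku{x}{1}{N})\rmd\chunku{x}{1}{N}$ and produces the factor $\sum_i \estConstC{x^i}$, which cancels the denominator $\sum_j \estConstC{x^j}$. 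We are therefore left with
\begin{equation*}
(N\const)^{-1} \int p(\chunku{x}{1}{N})\,\rmd\chunku{x}{1}{N}\, \sum_{k=1}^N \estConstC{x^k}\, g(x^k).
\end{equation*}

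To conclude, I would marginalize each term of the sum by writing $p(\chunku{x}{1}{N}) = \proposal(x^k)\, r_k(x^k,\chunkum{x}{1}{N}{k})$, integrating out the $N-1$ variables $\chunkum{x}{1}{N}{k}$ (which yields $1$ since $r_k(x^k,\cdot)$ is a probability density), and obtaining
\begin{equation*}
(N\const)^{-1}\sum_{k=1}^N \int \proposal(x^k)\, \estConstC{x^k}\, g(x^k)\, \rmd x^k = \const^{-1} \int \proposal(y)\, \estConstC{y}\, g(y)\, \rmd y = \int \tpi(\rmd y)\, g(y),
\end{equation*}
which is the desired $\tpi P = \tpi$. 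The only subtle step is the identification of the common joint density $p$; once that is in hand, the remainder is a mechanical repetition of the computation in the independent case. There is no real obstacle beyond bookkeeping.
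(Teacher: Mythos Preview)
Your proposal is correct and follows essentially the same approach as the paper: both substitute $\tpi(\rmd y)=\const^{-1}\proposal(y)\estConstC{y}\rmd y$, invoke the identity $\proposal(x^i)\,r_i(x^i,\chunkum{x}{1}{N}{i})=\proposal(x^j)\,r_j(x^j,\chunkum{x}{1}{N}{j})$ to pull out a common joint law (the paper calls it $\bar{\proposal}_N$, you call it $p$), use $\sum_i\estConstC{x^i}$ to cancel the denominator, and then marginalize each term back to $\proposal(x^k)$. There is no substantive difference.
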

\begin{proof}
Define the $Nd$-dimensional probability measure $\bar{\proposal}_N(\rmd \chunku{x}{1}{N}) = \proposal(\rmd x^1) R_1(x^1, \rmd \chunku{x}{2}{n})$. Let $\msa\in\borel(\rset^d)$. Then, we have
\begin{align*}
\tpi P(\msa) &= N^{-1}\int \tpi(\rmd y) \int\sum_{i=1}^N \updelta_{y}(\rmd x^i) R_i(x^i, \rmd \chunkum{x}{1}{n}{i}) \sum_{k=1}^N\frac{\estConstC{x^k}}{N\estConstC{\chunku{x}{1}{N}}}\indi{\msa}(x^k)\\
&= (N\const)^{-1}\int\sum_{i=1}^N \proposal(\rmd x^i) \estConstC{x^i}R_i(x^i, \rmd \chunkum{x}{1}{n}{i}) \sum_{k=1}^N\frac{\estConstC{x^k}}{N\estConstC{\chunku{x}{1}{N}}}\indi{\msa}(x^k) \\
&= (N\const)^{-1}\int\bar{\proposal}_N(\rmd \chunku{x}{1}{N})\sum_{i=1}^N  \estConstC{x^i}\sum_{k=1}^N\frac{\estConstC{x^k}}{N\estConstC{\chunku{x}{1}{N}}}\indi{\msa}(x^k)
\\&= (N\const)^{-1}\int\sum_{k=1}^N\estConstC{x^k}\bar{\proposal}_N(\rmd \chunku{x}{1}{N})\indi{\msa}(x^k)
\\&= (N\const)^{-1}\int\sum_{k=1}^N\estConstC{x^k}\proposal(\rmd x^k)\indi{\msa}(x^k) = \tpi(\msa)\eqsp.
\end{align*}
\end{proof}

\section{Continuous-time limit of \NEO\ and NEIS}
\label{sec:continuous-time-limits}
\subsection{Proof for the continuous-time limit}
\label{subsec:continuous_time_proof}
Consider $\bar{h} >0$ and a family $\{\transfo_h \, : \, h \in \ocint{0,\bar{h}}\}$
of $\rmc^1$-diffeomorphisms. For $N \in \nsets$ and a bounded and continuous $f :\rset^d\to \rset$, write
\begin{equation}
  \label{eq:1}
  \infineISh(f) =  N^{-1} \sum_{i=1}^N \sum_{k\in\zset}w_{k,h}(X^i)  f(\transfo^k_h(X^i)) \eqsp,
\end{equation}
where $\{X_i\}_{i=1}^N \simiid \rho$ and for some weight function $\varpic : \rset \to \rset_+$ with bounded support (see \Cref{assum:continuitu_varpi}),
 $k \in \zset$ and $h >0$, setting $\varpi_{k,h} = \varpic(kh)$,
\begin{equation}
    \label{eq:def_w_k_h}
  { w_{k,h}(x)=\left.  \varpi_{k,h} \rho_{-k}(x) \middle / \left. \sum\nolimits_{i\in\zset}\varpi_{k+i,h} \rho_{i}(x) \right. \right. \eqsp.}
\end{equation}

We show in this section the convergence of the sequence of  \NEO-IS estimators $\{  \infineISh(f) \, : \, h \in \ocint{0,\bar{h}}\}$    as $h \downarrow 0$ to its 
continuous counterpart, the version \eqref{eq:infinecontinuous} of NEIS
\cite{rotskoff:vanden-eijden:2019}, with  weight
function $\varpi$,  in the case where for any $h \in \ocint{0,\bar{h}}$, 
$\transfo_h$  corresponds to one step of a 
discretization scheme with stepsize $h$ of the ODE
\begin{equation}
  \label{eq:ODE_b}
\dot{x}_t = b(x_t)   \eqsp,
\end{equation}
where $b :\rset^d \to \rset^d$ is a drift function. We are particularly interested in the case where \eqref{eq:ODE_b} corresponds to the conformal Hamilonian dynamics \eqref{eq:ODE_hamiltonian} and $\{\transfo_h \, : \, h \in \ocint{0,\bar{h}}\}$ to its conformal symplectic Euler discretization: for all $(q,p) \in \rset^{2d}$,
\begin{equation}
  \label{eq:def_SE_conformal_hamil}
  \transfo_h(q,p)=
 (q+h\mass^{-1}\{ \rme^{-h\gamma} p -h \nabla U(q)\},
 \rme^{-h \gamma } p -h \nabla U(q))\eqsp.
\end{equation}

 We make the following conditions on $b$, $\rho$, $\varpic$ and $\{\transfo_h \, : \, h \in \ocint{0,\bar{h}}\}$.
 \begin{assumption}
   \label{assum:drift_lip}
The function $b$ is continuously differentiable and $L_b$-Lipschitz.
\end{assumption}
Under \Cref{assum:drift_lip}, consider $(\phi_t)_{t \geq 0}$ the differential flow associated with \eqref{eq:ODE_b}, \ie~$\phi_t(x) = x_t$ where $(x_t)_{t \in \rset}$ is the solution of \eqref{eq:ODE_b} starting from $x$. Note that  \Cref{assum:drift_lip} implies that $(t,x) \mapsto \phi_t(x)$ is continuously differentiable on $\rset \times \rset^d$, see \cite[Theorem 4.1 Chapter V]{Har_1982}.

\Cref{assum:drift_lip} is satisfied in the case of the conformal Hamiltonian dynamics if the potential $U$ is continuously differentiable and with Lipschitz gradient, that is there exists $L_U\in\rsets_+$ such that for any $x_1, x_2\in\rset^d$, $\|\nabla U(x_1) - \nabla U(x_2)\|\leq L_U \|x_1 -x_2\|$.

  \begin{assumption}
   \label{assum:discretization_scheme}
For any $h \in\ocint{0,\bar{h}}$, $\transfo_h : \rset^d \to \rset^d$ is a  $\mathrm{C}^1$-diffeomorphism. In addition, it holds:
   \begin{enumerate}[label= (\roman*),wide]
       \item  \label{assum:discretization_scheme_i} there exist $C \geq 0$ and $\delta \in \ocint{0,1}$ such that for any $x\in\rset^d$,  $$\|\transfo_h(x) -(x+hb(x))\|\leq C h^{1+\delta} (1+\|x\|)\eqsp;$$
  \item \label{assum:jacobian_discretization} for any $x\in\rset^d$ and  $T\in\rsets_+$, $$ \lim_{h \downarrow 0} \max_{k\in[-\partint{T/h}: \partint{T/h}]}\|\JacOp{\phi_{kh}}(x)-\JacOp{\transfo_{h}^{k}}(x)\| = 0 \eqsp.$$
   \end{enumerate}
 \end{assumption}
 Note that \Cref{assum:discretization_scheme} is automatically satisfied for the conformal symplectic Euler discretization \eqref{eq:def_SE_conformal_hamil} of the conformal Hamiltonian dynamics.  Indeed, in that case $\divergence b(\phi_t(x)) = \gamma d$, and therefore $\JacOp{\phi_t}(x) = e^{\gamma d t}$ for $t \in \rset$, and for any $h>0, k\in\zset$, $\JacOp{\transfo_h^k}(x) = e^{\gamma d hk}$; see \cite{francca2019conformal}.  
 
 Define
 \begin{equation}
   \label{eq:def_support}
\support(\varpic) =    \{t \in \rset \, :\, \varpic(t) \neq 0\}\eqsp.
 \end{equation}
\begin{assumption}
  \label{assum:continuitu_varpi}
   \begin{enumerate}[label= (\roman*),wide]
       \item
$\proposal$ is continuous and positive on $\rset^d$
\item \label{assum:vapri_ratio}    $\varpic$ is piecewise continuous on $\rset$, its support $\support(\varpic)$ is bounded and $\sup_{(s,t)\in \msa_\varpi} \varpic(t)/\varpic(t+s)= m <\infty$ where 
  $$\msa_\varpi = \{(s,t)\in\rset^2;\eqsp t\in\operatorname{support}(\varpic), \eqsp(s+t)\in\operatorname{support}(\varpic)\}\eqsp.$$
\item Moreover, for any $x\in\rset^d$, we have $\rhoT^c(x) = \int\varpic(t)\proposal(\phi_t(x))\JacOp{\phi_t}(x)\rmd t >0$.
\end{enumerate}
\end{assumption}
Note that \Cref{assum:continuitu_varpi} implies that
$\sup_{t\in\rset}|\varpic(t)|<+\infty$.  \Cref{assum:continuitu_varpi}
is automatically satisfied for example in the case
$\varpic = \1_{\ccint{-T_1,T_2}}$ for $T_1,T_2 \geq 0$.

\begin{theorem}
\label{theo:convergence}
 Assume  \Cref{assum:drift_lip}, \Cref{assum:discretization_scheme},  
 \Cref{assum:continuitu_varpi}. 
 For any $x\in\rset^d$ and $f: \rset^d \to \rset$ continuous and bounded,

 \begin{equation*}
      \lim_{h\downarrow0} \left\vert \sum_{k\in\zset} w_{k,h}(x) f(\transfo^k_h(x)) - \int_{-\infty}^\infty \wcont(x)f(\phi_t(x))\rmd t  \right\vert=0 \eqsp,
    \end{equation*}
    where $\{w_{k,h}\}_{k \in\zset}$ and $\wcont$ are defined in  \eqref{eq:def_w_k_h} and \eqref{eq:continuous_limit_infine_weights} respectively, \ie~for $x \in\rset^d$ and $t \in \rset$,
    \begin{equation}
      \label{eq:2}
         \wcont( x) = \left.{\varpic(t)\rho(\phi_t(x)) \JacOp{\phi_t}(x)}\middle/{\int_{-\infty}^{\infty}\varpic({s+t})\rho(\phi_s(x)) \JacOp{\phi_s}(x)\rmd s}\right.\eqsp.
    \end{equation}
\end{theorem}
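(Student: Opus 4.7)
The strategy is to reinterpret the discrete sum on the LHS as a Riemann sum that converges to the RHS integral, by factoring an $h$ in both the numerator and the (implicit) denominator of each term. Since $\supp(\varpic)$ is bounded, fix $T > 0$ with $\supp(\varpic) \subseteq [-T, T]$; then $w_{k,h}(x) = 0$ unless $kh \in \supp(\varpic)$, so the outer sum contains only $O(1/h)$ nonzero terms with $|k| \leq \partint{T/h}$, and likewise for the inner sum in the denominator. The key rewriting is
\begin{equation*}
    \sum_{k \in \zset} w_{k,h}(x) f(\transfo^k_h(x)) = h \sum_{|k|\leq \partint{T/h}} \frac{\varpic(kh) \rho(\transfo^k_h(x)) \JacOp{\transfo^k_h}(x) f(\transfo^k_h(x))}{\tilde D_{k,h}(x)},
\end{equation*}
with $\tilde D_{k,h}(x) := h \sum_{i} \varpic((k+i)h) \rho(\transfo^{-i}_h(x)) \JacOp{\transfo^{-i}_h}(x)$.

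The next step is to establish uniform convergence of the discretized dynamics. Under \Cref{assum:drift_lip} and \Cref{assum:discretization_scheme}\ref{assum:discretization_scheme_i}, a standard discrete Gronwall argument yields $\max_{|k|\leq \partint{T/h}} \|\transfo^k_h(x) - \phi_{kh}(x)\| = O(h^\delta) \to 0$, and \Cref{assum:discretization_scheme}\ref{assum:jacobian_discretization} provides the companion uniform convergence of the Jacobians. Combined with continuity of $\rho$ and $f$ and joint continuity of $(t,x)\mapsto\phi_t(x)$, this gives uniform convergence of every factor entering $\tilde D_{k,h}(x)$ on the compact index set. A standard Riemann-sum argument then shows
\begin{equation*}
    \sup_{|k|\leq \partint{T/h}} \left| \tilde D_{k,h}(x) - \tilde D^c(kh, x) \right| \xrightarrow[h\downarrow 0]{} 0, \qquad \tilde D^c(t, x) := \int \varpic(t-s) \rho(\phi_s(x)) \JacOp{\phi_s}(x) \rmd s,
\end{equation*}
and a change of variable identifies $\tilde D^c(t,x)$ with the denominator of $\wcont(x)$.

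To promote this into convergence of the ratios, one needs a positive lower bound on $\tilde D^c(t, x)$ uniform in $t \in \supp(\varpic)$. This follows from \Cref{assum:continuitu_varpi}: positivity of $\rho$ together with piecewise continuity of $\varpic$ and the ratio bound \ref{assum:vapri_ratio} (which rules out degenerate vanishing of $\varpic$) ensures that, for every $t \in \supp(\varpic)$, the integrand of $\tilde D^c(t, x)$ is strictly positive on a set of positive Lebesgue measure; hence $\tilde D^c(t, x) > 0$, and continuity of $\tilde D^c(\cdot, x)$ on the compact set $\supp(\varpic)$ yields a uniform lower bound $c(x) > 0$. Combined with the preceding step, $\tilde D_{k,h}(x) \geq c(x)/2$ for all sufficiently small $h$ and all relevant $k$. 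Consequently the discrete ratios converge uniformly to their continuous counterparts, and the outer sum is a Riemann-sum approximation of a continuous, compactly supported integrand, which converges to $\int \wcont(x) f(\phi_t(x)) \rmd t$, proving the claim.

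\textbf{Main obstacle.} The principal technical difficulty lies in the uniform control of $\tilde D_{k,h}(x)$. One must simultaneously (i) obtain uniform convergence of $\transfo^j_h(x)$ and $\JacOp{\transfo^j_h}(x)$ over an index range growing like $1/h$---where the Gronwall-type bound combined with the local error in \Cref{assum:discretization_scheme}\ref{assum:discretization_scheme_i} and the Jacobian assumption \ref{assum:jacobian_discretization} are essential---and (ii) prevent $\tilde D^c(t, x)$ from degenerating to zero on $\supp(\varpic)$, which is precisely what the positivity and ratio bound in \Cref{assum:continuitu_varpi} buy us. Once these two points are secured, the final convergence is a routine Riemann-sum argument on a compact interval.
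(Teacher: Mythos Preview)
Your proposal is correct and follows the same overall strategy as the paper: rewrite the discrete sum as a Riemann sum, control the discretization error by a Gronwall-type argument (this is exactly the paper's \Cref{lemma:control_discretization}), invoke \Cref{assum:discretization_scheme}\ref{assum:jacobian_discretization} for the Jacobians, and secure a uniform positive lower bound on the denominators so that convergence of ratios follows from convergence of numerators and denominators separately.

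The one substantive difference is in how the denominator lower bound is obtained. The paper uses the ratio bound in \Cref{assum:continuitu_varpi}\ref{assum:vapri_ratio} \emph{directly on the discrete object}: it shows $h\Delta_{k,h}(x)\geq m^{-1}h\Delta_{0,h}(x)$ for all relevant $k$, thereby reducing to the single sequence $h\Delta_{0,h}(x)\to\int\varpic(s)\rho(\phi_s(x))\JacOp{\phi_s}(x)\rmd s>0$; no uniformity in $k$ of the Riemann-sum convergence is then needed, and dominated convergence handles the outer integral. You instead establish that the \emph{continuous} denominator $\tilde D^{\mathrm c}(\cdot,x)$ is continuous and strictly positive on the compact closure of $\support(\varpic)$, hence uniformly bounded below, and then transfer this to $\tilde D_{k,h}$ via the uniform convergence $\sup_k|\tilde D_{k,h}(x)-\tilde D^{\mathrm c}(kh,x)|\to 0$. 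Both routes are valid; the paper's avoids the (mildly delicate) uniform-in-$k$ Riemann-sum step, while yours is arguably cleaner conceptually and shows that the ratio bound is not strictly needed for positivity of $\tilde D^{\mathrm c}$ (positivity of $\rho$ and nontriviality of $\support(\varpic)$ suffice). One cosmetic slip: your formula for $\tilde D^{\mathrm c}(t,x)$ should have $\varpic(t+s)$ rather than $\varpic(t-s)$, matching the denominator of $\wcont$ directly without a change of variable.
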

  \begin{proof}
    Let $f$ be a  bounded continuous function, $x\in\rset^d$.
Setting
 \begin{align*}
 & g_{k,h}(x) = \rho(\transfo_h^k(x))\varpic(kh)\JacOp{\transfo_h^k}(x)f(\transfo_h^k(x))\\
    & h\Delta_{k,h}(x) = h\sum_{i\in\zset}  \proposal(\transfo_h^i(x))\varpic((k+i)h) \JacOp{\transfo_h^i(x)}\eqsp,
 \end{align*}
we have that 
 \begin{align*}
   \sum_{k\geq 0} \frac{hg_{k,h}(x)}{h\Delta_{k,h}(x)}  =\int_0^{\Tvarpi} \frac{1}{h\Delta_{\partint{t/h},h}(x)}  g_{\partint{t/h}, h}(x)\rmd t +\int_{\Tvarpi}^{h\partint{\Tvarpi/h}+h}\frac{1}{h\Delta_{\partint{t/h},h}(x)}  g_{\partint{t/h}, h}(x)\rmd t\eqsp,
 \end{align*}
 as $g_{k,h}(x) = 0$ when $k>\partint{\Tvarpi/h}$.
Therefore,  we can consider the following decomposition,
   \begin{equation*}
       \left\vert \sum_{k\geq 0}\frac{\rho(\transfo_h^k(x))\varpic(kh)\JacOp{\transfo_h^k}(x)f(\transfo_h^k(x))}{\sum_{i\in\zset} \proposal(\transfo_h^i(x))\varpic((k+i)h) \JacOp{\transfo_h^i(x)}} - \int_0^{\Tvarpi}\frac{\varpic(t)\proposal(\phi_t(x))\JacOp{\phi_t}(x) f(\phi_t(x))\rmd t}{\int\varpic(t+s) \proposal(\phi_s(x))\JacOp{\phi_s}(x) \rmd s}\right\vert \leq A + B
   \end{equation*}
   with
   \begin{multline*}
       A  = \left\vert\int_0^{\Tvarpi} \frac{1}{h\Delta_{\partint{t/h},h}(x)}  \left\{ g_{\partint{t/h}, h}(x) - \varpic(t)\proposal(\phi_t(x))\JacOp{\phi_t}(x)f(\phi_t(x))\right\} \rmd t\right\vert\\
      + \left\vert \int_{\Tvarpi}^{h\partint{\Tvarpi/h}+h}\frac{1}{h\Delta_{\partint{t/h},h}(x)}   g_{\partint{t/h}, h}(x)\rmd t\right\vert\eqsp,
   \end{multline*}
 and 
    \begin{equation*}
    B =  \int_0^{\Tvarpi}\left\vert\frac{\varpic(t)\proposal(\phi_t(x))\JacOp{\phi_t}(x)f(\phi_t(x)) \rmd t}{h\Delta_{\partint{t/h},h}(x)} -  \frac{\varpic(t)\proposal(\phi_t(x))\JacOp{\phi_t}(x) f(\phi_t(x))}{\int\varpic(t+s) \proposal(\phi_s(x))\JacOp{\phi_s}(x) \rmd s}\right\vert \rmd t\eqsp,
    \end{equation*}
 We bound those terms separately. First of all, under \Cref{assum:continuitu_varpi}-\ref{assum:vapri_ratio}, for any $k$ such that $kh\in\ccint{0, \Tvarpi}$, we have  $h\Delta_{k,h}(x) \geq hm^{-1}\Delta_{0,h}(x)$. Second,  
 as $ \lim_{h\downarrow0}h\Delta_{0,h}(x)= \int_0^{\Tvarpi} \proposal(\phi_s(x))\JacOp{\phi_s}(x)\varpic(s)\rmd s>0$, there exists some $\tilde h>0$ and $c>0$ such that for all $k\in\zset$, $h<\tilde h$ implies 
 \begin{equation}
 \label{eq:majoration_delta}
\int_0^{\Tvarpi}\varpic(t)\proposal(\phi_t(x))\JacOp{\phi_t}(x)\rmd t>c\eqsp, \quad h\Delta_{k,h}(x) \geq hm^{-1}\Delta_{0,h}(x)>c\eqsp.
 \end{equation}
 Then, for $h<\tilde h$,
$$
 A \leq c^{-1} \int_{0}^{\Tvarpi} |g_{\partint{t/h}, h}(x) -  \varpic(t)\proposal(\phi_t(x))\JacOp{\phi_t}(x)f(\phi_t(x))|\rmd t +  c^{-1}\int_{\Tvarpi}^{h\partint{\Tvarpi/h}+h} \left\vert g_{\partint{t/h}, h}(x)\right\vert\rmd t\eqsp.
$$
By \Cref{assum:drift_lip} and \Cref{assum:continuitu_varpi}, the function $t\to \varpic(t)\proposal(\phi_t(x))\JacOp{\phi_t}(x)f(\phi_t(x))$ is continuous on the compact $[0, {2\Tvarpi}]$ and thus is bounded. Therefore, for any $h\in\ooint{0,\bar h}$,
\begin{equation}
\label{eq:bound_rho_phi_t}
\sup_{t\in\ccint{0, 2\Tvarpi}} \absLigne{\varpic(t)\proposal(\phi_{t}(x))\JacOp{\phi_{t}}(x)f(\phi_t(x))}
\leq\sup_{t\in\rset}|\varpic|\sup_{x\in\rset^d}|f(x)| \sup_{t\in\ccint{0, 2\Tvarpi}}\absLigne{\proposal(\phi_t(x))\JacOp{\phi_t}(x)}<\infty\eqsp.
\end{equation} 
Under \Cref{assum:discretization_scheme}, \eqref{eq:bound_rho_phi_t} and \Cref{lemma:control_discretization} imply that
$$
  \sup_{t\in\coint{0, h\partint{\Tvarpi/h}+h}} g_{\partint{t/h}, h}(x)\leq \sup_{t\in\rset}|\varpic(t)|\sup_{x\in\rset^d}|f(x)| \sup_{t\in\coint{0, h\partint{\Tvarpi/h}+h}}\rho(\transfo_h^{\partint{t/h}}(x))\JacOp{\transfo_h^{\partint{t/h}}}(x)<\infty\eqsp,
$$
Then, $\lim_{ h \downarrow 0}\int_{\Tvarpi}^{h\partint{\Tvarpi/h}+h} \left\vert g_{\partint{t/h}, h}(x)\right\vert\rmd t = 0$. 
Finally, \Cref{lem:riemann_1} implies  that $\lim_{h\downarrow0}A= 0$.  Moreover, setting
for $t\in\ccint{0, \Tvarpi}$,
   \begin{align}
   \label{eq:first_dominated}
   &\Delta^B_{t,h}(x)\\ 
   \nonumber =& \int \vert\proposal(\phi_{h\partint{s/h}}(x))\varpic(h(\partint{s/h} + \partint{t/h})) \JacOp{\phi_{h\partint{s/h}}(x)} - \varpic(s+t)\proposal(\phi_s(x))\JacOp{\phi_s}(x)) \vert\1_{\msa_\varpi}(s,t)\rmd s\\
  \nonumber &+ \int_{\Tvarpi-h\partint{t/h}}^{h(\partint{\Tvarpi/h}-\partint{t/h}+1)}\vert\proposal(\phi_{h\partint{s/h}}(x))\varpic(h(\partint{s/h} + \partint{t/h})) \JacOp{\phi_{h\partint{s/h}}(x)}\vert \1_{\msa_\varpi}(s,t) \rmd s \eqsp,
   \end{align}
   we have for  $h<\tilde h$, by \eqref{eq:majoration_delta} and \Cref{assum:continuitu_varpi}-\ref{assum:vapri_ratio},
  \begin{align}
\nonumber  B= &\int_0^{\Tvarpi} \left\vert\frac{\varpic(t)\proposal(\phi_t(x))\JacOp{\phi_t}(x) f(\phi_t(x))}{h\Delta_{\partint{t/h},h}(x)} - \frac{\varpic(t)\proposal(\phi_t(x))\JacOp{\phi_t}(x) f(\phi_t(x))}{\int\varpic(s+t)\proposal(\phi_s(x))\JacOp{\phi_s}(x) \rmd s}\right\vert\rmd t \\
 \nonumber &\leq \int_0^{\Tvarpi} \frac{\varpic(t)\proposal(\phi_t(x))\JacOp{\phi_t}(x) f(\phi_t(x))}{h\Delta_{\partint{t/h},h}(x)\int\varpic(s+t)\proposal(\phi_s(x))\JacOp{\phi_s}(x) \rmd s} \Delta^B_{t,h}(x)\rmd t\\
 \nonumber &\leq m c^{-2}\int_0^{\Tvarpi} \varpic(t)\proposal(\phi_t(x))\JacOp{\phi_t}(x) f(\phi_t(x))\Delta^B_{t,h}(x)\rmd t\\
 \label{eq:bound_b}
  &\leq m c^{-2}\sup_{t\in\rset}|\varpic(t)|\sup_{x\in\rset^d}|f(x)|\sup_{t\in[0, \Tvarpi]}\absLigne{\proposal(\phi_s(x))\JacOp{\phi_s}(x)}\int_0^{\Tvarpi}\Delta^B_{t,h}(x)\rmd t\eqsp.
  \end{align}
   By \Cref{assum:drift_lip} and \Cref{assum:continuitu_varpi}, the function $s\to\proposal(\phi_s(x))\JacOp{\phi_s}(x)$ is continuous on the interval $\ccint{-\Tvarpi, \Tvarpi}$ and thus is bounded. Therefore, for any $h\in\ooint{0,\bar h}$, 
\begin{multline}
\label{eq:bound_a_varpi}
\sup_{(s,t)\in\msa_\varpi} \absLigne{\varpic(h(\partint{t/h}+ \partint{s/h}))\proposal(\phi_{h\partint{s/h}}(x))\JacOp{\phi_{h\partint{s/h}}}(x)}\\
\leq \sup_{(s,t)\in\msa_\varpi}\absLigne{\varpic(s+t)\proposal(\phi_s(x))\JacOp{\phi_s}(x)}< \Tvarpi \sup_{s\in\rset}|\varpic(s)|\sup_{s\in[-\Tvarpi, \Tvarpi]}\absLigne{\proposal(\phi_s(x))\JacOp{\phi_s}(x)}<\infty\eqsp.
\end{multline} 
This implies that 
\begin{equation*}
    \lim_{h\downarrow0}\int_{\Tvarpi-h\partint{t/h}}^{h(\partint{\Tvarpi/h}-\partint{t/h}+1)}\vert\proposal(\phi_{h\partint{s/h}}(x))\varpic(h(\partint{s/h} + \partint{t/h})) \JacOp{\phi_{h\partint{s/h}}(x)}\vert\rmd s = 0\eqsp.
\end{equation*}
 Moreover, for any $t\in\ccint{0, \Tvarpi}$, the function $$s\mapsto \vert\varpic(h(\partint{t/h}+ \partint{s/h}))\proposal(\phi_{h\partint{s/h}}(x))\JacOp{\phi_{h\partint{s/h}}}(x)-\varpic(t+s)\proposal(\phi_s(x))\JacOp{\phi_s}(x)\vert\1_{\msa_\varpi}(s,t) $$ converges pointwise to $0$ for almost all $s\in\rset$ when $h\downarrow 0$   
 using \Cref{assum:drift_lip}, \Cref{assum:continuitu_varpi} and the continuity of $s\mapsto\phi_s(x)$. The Lebesgue dominated convergence theorem applies and by \eqref{eq:first_dominated}, for all $t\in\ccint{0,\Tvarpi}$,
 $$\lim_{h\downarrow0} \Delta^B_{t,h}(x)= 0\eqsp.$$
Moreover, using $h\Delta_{k,h}(x) = h\sum_{i\in\zset}  \proposal(\transfo_h^i(x))\varpic((k+i)h) \JacOp{\transfo_h^i(x)}$ and \eqref{eq:bound_a_varpi}, 
  \begin{equation*}
  \sup_{t\in\ccint{0, \Tvarpi}}\sup_{h\in\ooint{0,\bar h}}\Delta^B_{t,h}(x)<\infty \eqsp.
  \end{equation*} 
The Lebesgue dominated convergence theorem and \eqref{eq:bound_b} show that $\lim_{h\downarrow0}B = 0$ which
concludes the proof.
   \end{proof}

\subsubsection{Supporting Lemmas}
  For $f\in\rmc^1(\rset^d, \rset^d)$, define $\mathfrak{J}_f(x)$ the Jacobian matrix of $f$ evaluated at $x$ and the divergence operator by $\divergence f (x) = \trace[\mathfrak{J}_f(x)]$.
 \begin{lemma}
 \label{lem:jacobianflow}
 Let $b$ be a $\rmc^1$ vector field in $\rset^d$ and $(\phi_t)_{t\in\rset}$ be the flow  of the ODE \eqref{eq:ODE_b}. For any $t\in\rset$, the Jacobian of $\phi_t$ is given by
 $$\JacOp{\phi_t}(x) = \textstyle{\exp(\int_0^t\divergence b(\phi_s(x))\rmd s) }\eqsp.$$
 \end{lemma}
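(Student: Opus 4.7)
The plan is to recognize this statement as the classical Liouville (Abel--Jacobi) formula for the Jacobian determinant of an ODE flow, and to derive it by (i) computing an ODE for the Jacobian \emph{matrix} $\mathfrak{J}_{\phi_t}(x)$, then (ii) applying Jacobi's formula for $(\rmd/\rmd t)\det M(t)$. Since the statement actually uses $\JacOp{\phi_t}(x)$, which denotes the \emph{absolute value} of the Jacobian determinant, I will also argue that the sign of $\det \mathfrak{J}_{\phi_t}(x)$ is constantly $+1$, so the absolute value may be dropped.

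First I would invoke standard results on smooth dependence of flows on the initial condition (cf. \cite[Ch.~V]{Har_1982}): under the $\rmc^1$ hypothesis on $b$, the map $(t,x) \mapsto \phi_t(x)$ is $\rmc^1$, and its spatial differential $M(t) \defeq \mathfrak{J}_{\phi_t}(x)$ solves the variational equation
\begin{equation*}
\frac{\rmd}{\rmd t} M(t) = \mathfrak{J}_b(\phi_t(x)) \, M(t) \eqsp, \qquad M(0) = \IdM \eqsp,
\end{equation*}
obtained by differentiating $\partial_t \phi_t(x) = b(\phi_t(x))$ in $x$ and interchanging the order of differentiation. Set $A(t) = \mathfrak{J}_b(\phi_t(x))$, which is continuous in $t$ by the assumptions on $b$ and on $\phi$.

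Next I would apply Jacobi's formula: for any differentiable curve of invertible matrices $M(t)$,
\begin{equation*}
\frac{\rmd}{\rmd t} \det M(t) = \det M(t) \cdot \trace\!\bigl(M(t)^{-1} \dot M(t)\bigr) \eqsp.
\end{equation*}
Using the variational equation above yields $M(t)^{-1}\dot M(t) = M(t)^{-1} A(t) M(t)$, whose trace coincides with $\trace A(t) = \divergence b(\phi_t(x))$. Thus $d(t) \defeq \det M(t)$ solves the scalar ODE $\dot d(t) = d(t) \cdot \divergence b(\phi_t(x))$ with $d(0) = 1$, so by direct integration
\begin{equation*}
\det \mathfrak{J}_{\phi_t}(x) = \exp\!\left(\int_0^t \divergence b(\phi_s(x)) \, \rmd s\right) \eqsp.
\end{equation*}

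The only minor wrinkle — and the one place a reader might object — is that the statement uses $\JacOp{\phi_t}(x)$, the absolute value of the determinant. But the right-hand side above is strictly positive, and since $\det \mathfrak{J}_{\phi_t}(x)$ is continuous in $t$ with value $1$ at $t=0$ and never vanishes (as $\phi_t$ is a $\rmc^1$-diffeomorphism), it stays positive for all $t \in \rset$. Hence $\det \mathfrak{J}_{\phi_t}(x) = \abs{\det \mathfrak{J}_{\phi_t}(x)} = \JacOp{\phi_t}(x)$, which concludes the proof. No step here is really a serious obstacle; the main care is in justifying the interchange of differentiations to obtain the variational equation, which is standard under the $\rmc^1$ regularity of $b$.
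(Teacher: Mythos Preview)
Your proof is correct and follows essentially the same route as the paper: derive the variational equation $\dot M(t)=\mathfrak{J}_b(\phi_t(x))M(t)$ for the Jacobian matrix, apply Jacobi's formula to obtain a linear scalar ODE for the determinant, and integrate. The only cosmetic differences are that the paper writes Jacobi's formula via the adjugate, $\dot{\det A}=\trace[\adj(A)\dot A]$, whereas you use the equivalent inverse form $\dot{\det M}=\det M\cdot\trace(M^{-1}\dot M)$, and that you explicitly justify dropping the absolute value in $\JacOp{\phi_t}(x)$ by a continuity argument, a point the paper glosses over.
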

 \begin{proof}
 First, for $t \in \rset$ and $x\in\rset$, write $A(t, x)=\mathfrak{J}_{\phi_t}(x)$ the Jacobian matrix of $\phi_t$ evaluated at $x$.
By Jacobi's formula, $\dot{\det A}(t, x) = \trace[\adj(A(t, x)) \cdot \dot A(t, x)]$, where $\trace[M]$ denotes the trace of a matrix $M$ and $\adj(M)$ its adjugate, i.e. the transpose of the cofactor matrix of $M$ such that $\adj(M) M = \det(M) \Id$.
Since for all $t$ and $x$, $\dot A(t, x) = \mathfrak{J}_{b\circ \phi_t}(x) = \mathfrak{J}_{b} (\phi_t(x))\cdot A(t, x)$, then
\begin{equation}
    \dot\Jac_{\phi_t}(x)  
    = \trace[\adj(A(t, x)) \cdot \mathfrak{J}_{b} (\phi_t(x))\cdot A(t, x)] =  \trace[\mathfrak{J}_{b} (\phi_t(x))] \JacOp{\phi_t}(x)\eqsp.
\end{equation}
Integrating this ODE yields
    $\JacOp{\phi_t}(x) = \textstyle{\exp(\int_0^t\divergence b(\phi_s(x))\rmd s) }$.
 \end{proof}

\begin{lemma}
  \label{lem:gronwall}
  Assume \Cref{assum:drift_lip}.
  Then, there exists $C>0$ such that  for any $x\in\rset^d, t\in\rset$, $k\in\zset$, $h>0$,
  \begin{align*}
      &\|\phi_t(x)\|\leq C e^{C|t|}(\|x\|+1)\eqsp,
      \\&\|\transfo_h^k(x)\|\leq C e^{C|kh|}(\|x\|+1)\eqsp.
  \end{align*}
  \end{lemma}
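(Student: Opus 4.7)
The plan is to treat the two bounds separately, each by a Grönwall-type argument applied to the sub-linear growth of $b$.

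For the continuous case, \Cref{assum:drift_lip} gives $\|b(y)\|\leq \|b(0)\|+L_b\|y\|$. Integrating the ODE for $t\geq 0$ yields
\begin{equation*}
\|\phi_t(x)\|\leq \|x\|+\int_0^t\|b(\phi_s(x))\|\rmd s\leq \|x\|+t\|b(0)\|+L_b\int_0^t\|\phi_s(x)\|\rmd s\eqsp,
\end{equation*}
and the standard (integral) Grönwall inequality gives $\|\phi_t(x)\|\leq (\|x\|+t\|b(0)\|)\rme^{L_b t}$. Absorbing constants into a single $C_1$ produces the bound for $t\geq 0$; for $t<0$ the reversed ODE $\dot\psi_s=-b(\psi_s)$ with $\psi_s=\phi_{-s}$ satisfies the same Lipschitz estimate, so the argument applies verbatim.

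For the discrete forward iterates, I would use \Cref{assum:discretization_scheme}\ref{assum:discretization_scheme_i} to obtain a one-step affine bound and then iterate. For $h\in\ocint{0,\bar h}$,
\begin{equation*}
\|\transfo_h(x)\|\leq \|x+hb(x)\|+Ch^{1+\delta}(1+\|x\|)\leq \|x\|+h\|b(0)\|+hL_b\|x\|+Ch^{1+\delta}(1+\|x\|)\eqsp,
\end{equation*}
so that with $D=L_b+\|b(0)\|+C\bar h^{\delta}$ one gets the contraction-free estimate $\|\transfo_h(x)\|+1\leq (1+Dh)(\|x\|+1)$. Induction on $k\geq 0$ yields $\|\transfo_h^k(x)\|+1\leq (1+Dh)^k(\|x\|+1)\leq \rme^{Dkh}(\|x\|+1)$, which is the desired inequality for positive iterates.

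The main obstacle is the case $k<0$, where I need the analogous bound on $\transfo_h^{-1}$. Writing $y=\transfo_h(x)$, the assumption gives $\|x+hb(x)\|\leq \|y\|+Ch^{1+\delta}(1+\|x\|)$, hence $\|x\|\,(1-hL_b-Ch^{1+\delta})\leq \|y\|+h\|b(0)\|+Ch^{1+\delta}$. Up to replacing $\bar h$ by a smaller constant so that $hL_b+Ch^{1+\delta}\leq 1/2$ on $\ocint{0,\bar h}$, one rearranges this into $\|\transfo_h^{-1}(y)\|+1\leq (1+D'h)(\|y\|+1)$ for some $D'>0$, and the same induction as above yields $\|\transfo_h^{k}(x)\|+1\leq \rme^{D'|k|h}(\|x\|+1)$ for $k<0$. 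Taking $C$ larger than $C_1$, $D$ and $D'$ unifies all four cases into the single stated inequality.
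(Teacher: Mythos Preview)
Your proof is correct and follows exactly the Gr\"onwall route the paper intends; the paper's own proof is the single line ``This lemma follows from Gronwall's inequality and \Cref{assum:drift_lip},'' so you have simply supplied the details. Your observation that the discrete bound actually requires \Cref{assum:discretization_scheme}\ref{assum:discretization_scheme_i} (and, for $k<0$, a possibly smaller $\bar h$) is accurate and reflects a minor omission in the lemma's stated hypotheses rather than a defect in your argument.
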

  This lemma follows from Gronwall's inequality and \Cref{assum:drift_lip}.
 \begin{lemma}
  \label{lemma:control_discretization_1}
 Assume  \Cref{assum:drift_lip} and \Cref{assum:discretization_scheme}-\ref{assum:discretization_scheme_i}. There exists $C>0$ such that for any $x\in\rset^d, h\in\ooint{0, \bar{h}}$,
 \begin{equation}
 \label{eq:control_discretization_1}
      \|\transfo_h(x) -\phi_h(x)\| \leq C \{1+\|x\|\}\|h^{1+\delta}\eqsp.
 \end{equation}
 \end{lemma}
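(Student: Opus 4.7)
The plan is to insert the explicit Euler step $x + h b(x)$ as an intermediate point and bound the two resulting differences separately by triangle inequality:
\begin{equation*}
\|\transfo_h(x) - \phi_h(x)\| \;\le\; \|\transfo_h(x) - (x+hb(x))\| \;+\; \|(x+hb(x)) - \phi_h(x)\| \eqsp.
\end{equation*}
The first term is handled directly by \Cref{assum:discretization_scheme}-\ref{assum:discretization_scheme_i}, which yields a bound of the form $Ch^{1+\delta}(1+\|x\|)$. The whole content of the lemma is therefore to show that the Euler consistency error $\|(x+hb(x)) - \phi_h(x)\|$ is of order $h^2(1+\|x\|)$, which is dominated by $h^{1+\delta}(1+\|x\|)$ on $(0,\bar h]$ since $\delta\in\ocintLigne{0,1}$, so that a final constant depending on $\bar h$ and $\delta$ absorbs the two contributions.

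For the Euler consistency error, I would use the integral form of the flow, $\phi_h(x) = x + \int_0^h b(\phi_s(x))\,\rmd s$, so that
\begin{equation*}
\phi_h(x) - (x+hb(x)) = \int_0^h \parenthese{b(\phi_s(x)) - b(x)}\,\rmd s \eqsp.
\end{equation*}
By the Lipschitz assumption on $b$ (\Cref{assum:drift_lip}), the norm of the integrand is at most $L_b\|\phi_s(x)-x\|$. To control $\|\phi_s(x)-x\|$, I would invoke \Cref{lem:gronwall} (already available) together with the identity $\phi_s(x) - x = \int_0^s b(\phi_u(x))\,\rmd u$: using the linear growth of $b$ implied by Lipschitzness, $\|b(\phi_u(x))\| \le \|b(0)\| + L_b\|\phi_u(x)\| \le C'(1+\|x\|)$ uniformly for $u\in\ccintLigne{0,\bar h}$, which gives $\|\phi_s(x)-x\| \le C' s (1+\|x\|)$. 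Integrating once more in $s$ from $0$ to $h$ produces the desired $O(h^2)$ bound with a factor $(1+\|x\|)$.

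Combining the two bounds and using $h^2 = h^{1-\delta}\cdot h^{1+\delta} \le \bar h^{1-\delta} h^{1+\delta}$ for $h\in\ooint{0,\bar h}$, I obtain a single constant $C>0$ (depending on $L_b,\|b(0)\|, \bar h, \delta$ and the constant of \Cref{assum:discretization_scheme}-\ref{assum:discretization_scheme_i}) for which \eqref{eq:control_discretization_1} holds.

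There is no real obstacle here: the statement is a classical one-step local truncation estimate combining the consistency hypothesis on $\transfo_h$ with Lipschitz continuity of $b$. The only mild care is in making the constants uniform in $x$, which is precisely why the $(1+\|x\|)$ factor appears on both sides and why \Cref{lem:gronwall} is used to control $\phi_s(x)$ on the short interval $\ccint{0,h}$.
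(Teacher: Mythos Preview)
Your proposal is correct and follows essentially the same route as the paper: split via the Euler step $x+hb(x)$, use \Cref{assum:discretization_scheme}-\ref{assum:discretization_scheme_i} for the first piece, and control the Euler consistency error through the integral representation of the flow, Lipschitzness of $b$, and \Cref{lem:gronwall}. Your write-up is in fact slightly more explicit than the paper's, in particular regarding the absorption of the $h^2$ term into $h^{1+\delta}$ via $h^2\le \bar h^{\,1-\delta}h^{1+\delta}$.
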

 \begin{proof}
  Under   \Cref{assum:drift_lip} and
 \Cref{assum:discretization_scheme}-\ref{assum:discretization_scheme_i}, we have
  \begin{equation*}
      \|\transfo_h(x) -\phi_h(x)\| \leq \| x+hb(x) - \phi_h(x)\| + C_Fh^{1+\delta}(1+\|x\|)\eqsp,
  \end{equation*}
  and as $\phi_h(x) = x + \int_0^hb(\phi_s(x))\rmd s$,
  \begin{multline} 
  \textstyle \| x+hb(x) - \phi_h(x)\| = \|hb(x) - \int_0^hb(\phi_s(x))\|\leq h L_b \sup_{s\in[0,h]}\|\phi_s(x) - x\|\\ \leq L_b h^2\{L_b \sup_{s\in[0,h]}\phi_s(x) + \|b(0)\|\}\eqsp.
  \end{multline}
The proof is completed using \Cref{lem:gronwall}.
  \end{proof}
   \begin{lemma}
  \label{lemma:control_discretization}
 Assume  \Cref{assum:drift_lip} and \Cref{assum:discretization_scheme}-\ref{assum:discretization_scheme_i}. There exists $C>0$ such that for any $x\in\rset^d, k\in\nset, h\in\ooint{0, \bar{h}}$, $kh\leq \Tvarpi$,
 \begin{equation}
 \label{eq:control_discretization}
     \|\transfo_h ^k(x) - \phi_{kh}(x)\| \leq C e^{khC} (1+\|x\|)h^{\delta}\eqsp.
 \end{equation}
 \end{lemma}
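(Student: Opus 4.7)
The plan is to lift the local discretization bound from \Cref{lemma:control_discretization_1} to a global error bound via a standard telescoping argument, with the exponential factor emerging from the Lipschitz dependence of the flow on its initial condition. The key idea is that once we have controlled the one-step error at an arbitrary point, we can propagate all past errors forward using the \emph{continuous} flow $\phi_{\cdot}$ rather than the integrator $\transfo_h$, which makes the Lipschitz constants readily available from Gronwall's inequality.

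Concretely, using the semigroup identity $\phi_{(k-j)h} = \phi_{(k-j-1)h}\circ \phi_h$ together with $\transfo_h^{j+1}(x) = \transfo_h(\transfo_h^j(x))$, I would write the telescoping decomposition
\begin{equation*}
\transfo_h^k(x) - \phi_{kh}(x) = \sum_{j=0}^{k-1}\bigl[\phi_{(k-j-1)h}(\transfo_h^{j+1}(x)) - \phi_{(k-j-1)h}(\phi_h(\transfo_h^j(x)))\bigr]\eqsp.
\end{equation*}
Each summand compares the same flow $\phi_{(k-j-1)h}$ evaluated at two nearby points. Under \Cref{assum:drift_lip}, Gronwall's inequality ensures that $\phi_s$ is $e^{L_b s}$-Lipschitz for $s\geq 0$, so the $j$-th summand is bounded by $e^{L_b(k-j-1)h}\|\transfo_h(\transfo_h^j(x)) - \phi_h(\transfo_h^j(x))\|$.

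To control the local error at the iterate $y_j := \transfo_h^j(x)$, I apply \Cref{lemma:control_discretization_1}, giving $\|\transfo_h(y_j) - \phi_h(y_j)\| \leq C (1+\|y_j\|) h^{1+\delta}$, and then use \Cref{lem:gronwall} to control the growth of the iterates, $\|y_j\|\leq C\mre^{C jh}(1+\|x\|)$. Substituting back and summing yields
\begin{equation*}
\|\transfo_h^k(x) - \phi_{kh}(x)\| \leq C_1 (1+\|x\|) h^{1+\delta} \sum_{j=0}^{k-1} \mre^{L_b(k-j-1)h + C_1 jh} \leq C_2 (1+\|x\|)\, k h^{1+\delta}\, \mre^{C_3 kh}\eqsp.
\end{equation*}
Using $k h^{1+\delta} = (kh)\cdot h^{\delta}$ and absorbing the polynomial factor $kh$ into the exponential $\mre^{C_3 kh}$ (enlarging the constant) delivers the announced bound.

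The argument is mechanical once the telescoping is set up; the only subtle point—and where a naive approach goes wrong—is the choice to propagate the past errors through the \emph{flow} $\phi_{(k-j-1)h}$ rather than through iterates of $\transfo_h$, since the assumptions give no direct control on the Lipschitz constant of $\transfo_h^{k-j-1}$. With this choice the constants are all explicit via Gronwall, and no discrete Gronwall inequality is actually required.
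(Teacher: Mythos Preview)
Your proof is correct and follows the classical ``Lady Windermere's fan'' telescoping, propagating each local error forward through the continuous flow~$\phi_{(k-j-1)h}$. The paper takes the dual route: it inserts the intermediate point $\transfo_h\circ\phi_{kh}(x)$, derives a one-step Lipschitz-type estimate for $\transfo_h$ from \Cref{assum:drift_lip} and \Cref{assum:discretization_scheme}-\ref{assum:discretization_scheme_i} (namely $\|\transfo_h(y)-\transfo_h(z)\|\leq (1+hL_b)\|y-z\|+C h^{1+\delta}(2+\|y\|+\|z\|)$), applies \Cref{lemma:control_discretization_1} at the flow point $\phi_{kh}(x)$, and obtains a linear recursion in $\|\transfo_h^k(x)-\phi_{kh}(x)\|$ that it solves explicitly. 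So your remark that ``the assumptions give no direct control on the Lipschitz constant of $\transfo_h^{k-j-1}$'' is slightly overstated: the paper does propagate through $\transfo_h$, but only one step at a time, which is exactly what your telescoping avoids. Your argument is marginally cleaner in that it needs no recursion (discrete Gronwall) and applies \Cref{lemma:control_discretization_1} at the numerical iterates rather than the flow points; the paper's version has the minor advantage that the one-step Lipschitz constant $(1+hL_b)$ of the integrator is made explicit. Both are standard and equivalent in strength.
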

 \begin{proof}
  Using \Cref{lemma:control_discretization_1}, \Cref{assum:drift_lip} and \Cref{assum:discretization_scheme}-\ref{assum:discretization_scheme_i}, there exist $C_1, C_2, C_3>0$ such that for any $x\in\rset^d, k\in\nset, h\in\ooint{0, \bar{h}}$, $kh\leq \Tvarpi$,
  \begin{align*}
        &\|\transfo_h^{k+1}(x) -\phi_{(k+1)h}(x)\| \leq\|\transfo_h^{k+1}(x) -\transfo_h\circ\phi_{kh}(x)\| + \|\transfo_h\circ\phi_{kh}(x) -\phi_{(k+1)h}(x)\|\\
        &\leq (1+hL_b) \|\transfo_h^k(x) - \phi_{kh}(x)\| \\
        &\qquad\qquad\qquad\qquad+ h^{1+\delta}C_1\{2+\|\transfo_h^k(x)\| + \|\phi_{kh}(x)\|\} +\|\transfo_h\circ\phi_{kh}(x) -\phi_{(k+1)h}(x)\|
        \\
        &\leq (1+hL_b) \|\transfo_h^k(x) - \phi_{kh}(x)\| + h^{1+\delta}2C_1C_2e^{C_2\Tvarpi}\{1+\|x\|\} +C_3\{1+\|\phi_{kh}(x)\|\}h^{1+\delta}\\
        &\leq (1+hL_b) \|\transfo_h^k(x) - \phi_{kh}(x)\| \\
        &\qquad\qquad\qquad\qquad+ h^{1+\delta}2C_1C_2e^{C_2\Tvarpi}\{1+\|x\|\} +C_3\{1+C_2(1+\|x\|) \}h^{1+\delta}e^{C_2\Tvarpi}
        \\
        &\leq  (1+hL_b) \|\transfo_h^k(x) - \phi_{kh}(x)\| + A_T\{1+\|x\|\} h^{1+\delta}\eqsp,
      \end{align*}
      with $A_T = (2C_1C_2 + C_3(1 + C_2))e^{C_2\Tvarpi}$. 
      A straightforward induction yields
      \begin{equation*}
           \|\transfo_h ^k(x) - \phi_{kh}(x)\| \leq \frac{(1+hL_b)^{k}}{L_b}  A_T (1+\|x\|)h^{\delta}\eqsp.
      \end{equation*}
 \end{proof}
  \begin{lemma}
  \label{lem:riemann_1}
 Assume  \Cref{assum:drift_lip}, 
 \Cref{assum:discretization_scheme}, \Cref{assum:continuitu_varpi} . For any $x\in\rset^d$, and $f:\rset^d\to\rset^d$   bounded and continuous,
  \begin{equation*}
    \lim_{h\downarrow0}\int_{0}^{\Tvarpi}  \left\vert\varpic(h\partint{t/h}) \proposal(\transfo_h^{\partint{t/h}}(x))\JacOp{\transfo_h^{\partint{t/h}}}(x)f(\transfo_h^{\partint{t/h}}(x)) -  \varpic(t)\proposal(\phi_t(x))\JacOp{\phi_t}(x)f(\phi_t(x))\right\vert \rmd t= 0\eqsp.
  \end{equation*}
  \end{lemma}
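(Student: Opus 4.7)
The plan is to apply the Lebesgue dominated convergence theorem on the compact interval $[0,\Tvarpi]$. Write the integrand as
\begin{equation*}
\Psi_h(t) = \bigl\vert\varpic(h\partint{t/h}) \proposal(\transfo_h^{\partint{t/h}}(x))\JacOp{\transfo_h^{\partint{t/h}}}(x)f(\transfo_h^{\partint{t/h}}(x)) -  \varpic(t)\proposal(\phi_t(x))\JacOp{\phi_t}(x)f(\phi_t(x))\bigr\vert\eqsp.
\end{equation*}

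First, I would establish pointwise convergence $\Psi_h(t) \to 0$ as $h\downarrow 0$ for almost every $t\in[0,\Tvarpi]$. Since $\varpic$ is piecewise continuous with bounded support, it is continuous at almost every $t$; and since $h\partint{t/h} \to t$, we have $\varpic(h\partint{t/h}) \to \varpic(t)$ at such $t$. By \Cref{lemma:control_discretization}, $\|\transfo_h^{\partint{t/h}}(x) - \phi_{h\partint{t/h}}(x)\| \leq C e^{C\Tvarpi}(1+\|x\|)h^\delta \to 0$, and by continuity of $s\mapsto\phi_s(x)$ (itself a consequence of \Cref{assum:drift_lip}), $\phi_{h\partint{t/h}}(x) \to \phi_t(x)$; hence $\transfo_h^{\partint{t/h}}(x) \to \phi_t(x)$. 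Continuity of $\rho$ and $f$ then gives convergence of $\rho(\transfo_h^{\partint{t/h}}(x))f(\transfo_h^{\partint{t/h}}(x)) \to \rho(\phi_t(x))f(\phi_t(x))$. For the Jacobian, \Cref{assum:discretization_scheme}-\ref{assum:jacobian_discretization} applied with $T=\Tvarpi$ yields $\JacOp{\transfo_h^{\partint{t/h}}}(x) - \JacOp{\phi_{h\partint{t/h}}}(x) \to 0$ uniformly in $t\in[0,\Tvarpi]$, and continuity of $s\mapsto \JacOp{\phi_s}(x)$ (Lemma~\ref{lem:jacobianflow}) gives $\JacOp{\phi_{h\partint{t/h}}}(x)\to \JacOp{\phi_t}(x)$.

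Second, I would establish a uniform dominating bound. By \Cref{assum:continuitu_varpi}, $\sup|\varpic|<\infty$, and $f$ is bounded by hypothesis. By \Cref{lem:gronwall}, both $\transfo_h^{\partint{t/h}}(x)$ and $\phi_t(x)$ lie in the compact ball of radius $Ce^{C\Tvarpi}(1+\|x\|)$ for every $h\in(0,\bar h)$ and $t\in[0,\Tvarpi]$, so continuity of $\rho$ supplies a uniform bound on $\rho(\transfo_h^{\partint{t/h}}(x))$ and $\rho(\phi_t(x))$. For the Jacobian, \Cref{assum:discretization_scheme}-\ref{assum:jacobian_discretization} implies that for $h$ small enough, $|\JacOp{\transfo_h^{\partint{t/h}}}(x)| \leq 1 + \sup_{s\in[0,\Tvarpi]}|\JacOp{\phi_s}(x)|$ uniformly in $t\in[0,\Tvarpi]$, and the latter supremum is finite by continuity. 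Combining these yields a constant $C'>0$ and $h_0>0$ such that $\Psi_h(t)\leq C'$ for all $h\in(0,h_0)$ and $t\in[0,\Tvarpi]$; this constant is integrable on the compact interval.

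The dominated convergence theorem then delivers $\int_0^{\Tvarpi}\Psi_h(t)\rmd t \to 0$. The main subtlety is handling the piecewise continuity of $\varpic$: one must observe that its discontinuity set has Lebesgue measure zero so pointwise convergence does hold almost everywhere, while the uniform bound on $|\varpic|$ is what actually feeds into the dominating function. Everything else reduces to standard continuity arguments combined with the quantitative estimates already provided by \Cref{lemma:control_discretization} and \Cref{assum:discretization_scheme}-\ref{assum:jacobian_discretization}.
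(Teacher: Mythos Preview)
Your proof is correct and uses the same essential ingredients (dominated convergence on $[0,\Tvarpi]$, \Cref{lemma:control_discretization}, \Cref{lem:gronwall}, and \Cref{assum:discretization_scheme}-\ref{assum:jacobian_discretization}), but the paper organizes the argument slightly differently. The paper first inserts the intermediate quantity $\varpic(h\partint{t/h})\rho(\phi_{h\partint{t/h}}(x))\JacOp{\phi_{h\partint{t/h}}}(x)f(\phi_{h\partint{t/h}}(x))$ and splits the integrand via the triangle inequality into two pieces: one comparing the discrete integrator $\transfo_h^k$ with the exact flow $\phi_{kh}$ at the grid points (handled as a finite sum going to zero via \Cref{lemma:control_discretization} and the Jacobian assumption), and one comparing the step-function evaluation of the exact flow with its continuous version (handled by dominated convergence). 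Your approach skips this splitting and applies dominated convergence to the full integrand at once, which is more economical but requires you to assemble the dominating bound for the discrete-flow term directly from \Cref{lem:gronwall} and the uniform Jacobian convergence; the paper's decomposition isolates the two error sources (ODE discretization versus Riemann-sum approximation) and treats them with separate tools, which is a bit more transparent but not materially different.
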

  \begin{proof}
  Let $x\in\rset^d$. Consider the following decomposition, for any $h <\bar h$,    
   \begin{align*}
   &\int_{0}^{\Tvarpi}  \left\vert\varpic(h\partint{t/h}) \proposal(\transfo_h^{\partint{t/h}}(x))\JacOp{\transfo_h^{\partint{t/h}}}(x)f(\transfo_h^{\partint{t/h}}(x)) -  \varpic(t)\proposal(\phi_t(x))\JacOp{\phi_t}(x)f(\phi_t(x))\right\vert \rmd t  \\
 & \textstyle\leq \frac{h}{\Tvarpi}\sum_{k\in\zset} \varpic(kh) \absLigne{\proposal(\transfo_h^k(x))\JacOp{\transfo_h^k}(x)f(\transfo_h^k(x)) - \proposal(\phi_{kh}(x))\JacOp{\phi_{kh}}(x)f(\phi_{kh}(x)) } \\
   &+  \textstyle\int_{0}^{\Tvarpi} \absLigne{\varpic(t)\proposal(\phi_t(x))\JacOp{\phi_t}(x)f(\phi_t(x)) - \varpic(h\partint{t/h})\proposal(\phi_{h\partint{t/h}}(x))\JacOp{\phi_{h\partint{t/h}}}(x)f(\phi_{h\partint{t/h}}(x))} \rmd t\eqsp.
   \end{align*}
   The first term converges to 0 by \Cref{lemma:control_discretization} and \Cref{assum:discretization_scheme}-\ref{assum:jacobian_discretization} as $\varpic(kh) = 0$ for $kh>\Tvarpi$. 
   By \Cref{assum:drift_lip} and \Cref{assum:continuitu_varpi}, the function $t\to \varpic(t)\proposal(\phi_t(x))\JacOp{\phi_t}(x)f(\phi_t(x))$ is continuous on the compact $[0, {\Tvarpi}]$ and thus is bounded. Therefore, for any $h\in\ooint{0,\bar h}$,
\begin{multline}
\label{eq:bound_rho_phi_t_partint}
\sup_{t\in\ccint{0, \Tvarpi}} \absLigne{\varpic(h\partint{t/h})\proposal(\phi_{h\partint{t/h}}(x))\JacOp{\phi_{h\partint{t/h}}}(x)f(\phi_{h\partint{t/h}}(x))}\\
\leq\sup_{t\in\rset}|\varpic|\sup_{x\in\rset^d}|f(x)| \sup_{t\in\ccint{0, \Tvarpi}}\absLigne{\proposal(\phi_t(x))\JacOp{\phi_t}(x)}<\infty\eqsp.
\end{multline} 
 Moreover, $t\mapsto \varpic(h\partint{t/h})\proposal(\phi_{h\partint{t/h}}(x))\JacOp{\phi_{h\partint{t/h}}}(x)f(\phi_{h\partint{t/h}}(x))$ converges pointwise when $h\downarrow 0$ to $t\to \varpic(t)\proposal(\phi_t(x))\JacOp{\phi_t}(x)f(\phi_t(x))$  by continuity, using \Cref{assum:drift_lip} and \Cref{assum:continuitu_varpi}. The Lebesgue dominated convergence theorem applies and the second term goes to 0 as $h\downarrow 0$. 
 \end{proof}

\subsection{NEIS algorithm after \cite{rotskoff:vanden-eijden:2019}}
\label{sec:neis}
Non Equilibrium Importance Sampling (NEIS) has been introduced in the
pioneering work of \cite{rotskoff:vanden-eijden:2019}. NEIS relies on
the flow of the ODE $\dot{x}_t = b(x_t)$ and the introduction of a set
$\mso\subset\rset^d$. As in \Cref{sec:continuous-time-limits}, we assume \Cref{assum:drift_lip} holds and denote by  $(\phi_t)_{t \in\rset}$  the flow of this ODE.

Define for $x\in\mso$, the exit times
$\tau^+(x)\geq 0$ (resp. $\tau^-(x)\leq 0$) satisfying
\begin{equation}
 \label{eq:def_stopping_times}
    \tau^+(x) = \inf\{t\geq0 \, :\, \phi_t(x)\notin\mso\} \eqsp,\eqsp \tau^-(x) = \inf\{t\leq0 \, :\, \phi_t(x)\notin\mso\}\eqsp.
\end{equation}
The validity of NEIS relies on the following assumption.
\begin{assumption}
\label{assum:finite_constant}
The average time of an orbit in $\mso$ is finite, \ie\
\begin{equation}
    \label{eq:finite_const}
    \const_\tau = \int_\mso (\tau^+(x) - \tau^-(x))\proposal(x)\rmd x < \infty\eqsp.
\end{equation}
\end{assumption}
Under \Cref{assum:finite_constant}, we can define the proposal distribution
\begin{equation}
\label{eq:vde_proposal}
    \rhoT(x) = \const_\tau^{-1}\int_\mso \indi{[\tau^-(x),\tau^+(x)]}(t) \rho(\phi_t(x))\JacOp{\phi_t}(x)\rmd t\eqsp.
\end{equation}
Under \Cref{assum:finite_constant}, \citep[Equation (8)]{rotskoff:vanden-eijden:2019} derive the following estimator of $\rho(f)$, closely related to \eqref{eq:infinecontinuous}, in the case $\varpi\equiv 1$, on the restricted set $\mso\subset\rset^d$ :
\begin{align}
\label{eq:vde_estimator}
I^{\NEIS}_N(f) &= \frac{1}{N}\sum_ {i=1}^N \int_{\tau^-(X^i)}^{\tau^+(X^i)} w_t(X^i) f(\phi_t(X^i))\rmd t \\
w_t(x) &=   \frac{\rho(\phi_t(x))\JacOp{\phi_t(x)}}{\int_{\tau^-(x)}^{\tau^+(x)} \rho(\phi_t(x))\JacOp{\phi_t(x)} \rmd t}\eqsp.
\end{align}
Note that in practice, in order for \Cref{assum:finite_constant} to be verified, one typically requires that $\mso$ be bounded, as discussed in \cite{rotskoff:vanden-eijden:2019}.  

Following \cite{rotskoff:vanden-eijden:2019}, consider a $d$-dimensional system with position $q \in \rset^d$, momentum $p \in \rset^d$ and Hamiltonian $H(p,q)= (1/2) \| p \|^2 + U(q)$ where $U(q)$ is a potential assumed to be bounded from below. Denote by $V(E)$ the volume of the phase-space below some threshold energy $E$, 
\begin{equation}
\label{eq:volume-phase-space}
V(E)= \int   \indiacc{H(p,q) \leq E} \rmd p \rmd q \eqsp.
\end{equation}
To calculate \eqref{eq:volume-phase-space}, we set $x= (p,q)$, define $\mso= \{x; H(x) \leq E_{\max}\}$ for some $E_{\max} < \infty$, and use the dissipative Langevin dynamics with $b(x)=(p, - \nabla U(q) - \gamma p)$, \ie\
\[
\dot{q} = p \eqsp, \quad \dot{p} = - \nabla U(q) - \gamma p \eqsp,
\]
for some friction coefficient $\gamma > 0$. With this choice,  $\JacOp{\phi_t}(x) = \rme^{-d\gamma t}$. Taking $\proposal$ to be the uniform distribution on the (bounded) set $\mso$, write the estimator for $E\leq E_{\max}$, $V(E)/V(E_{\max})= \int \indiacc{H(p,q) \leq E} \proposal(p,q) \rmd p \rmd q$, where $\proposal(p,q)= \indi{\mso}(p,q)/V(E_{\max})$, we get
\begin{align}
    \nonumber V(E)/V(E_{\max}) &= \frac{1}{N}\sum_ {i=1}^N \frac{\int_{\tau^-(X^i)}^{\tau^+(X^i)} \JacOp{\phi_t(X^i)} \indiacc{H(\phi_t(X^i))\leq E}\rmd t}{\int_{\tau^-(X^i)}^{\tau^+(X^i)} \JacOp{\phi_t(X^i)} \rmd t} \\&= \frac{1}{N}\sum_ {i=1}^N \frac{\int_{\tau^E(X^i)}^{\tau^+(X^i)} \JacOp{\phi_t(X^i)} \rmd t}{\int_{\tau^-(X^i)}^{\tau^+(X^i)} \JacOp{\phi_t(X^i)} \rmd t}= \frac{1}{N}\sum_ {i=1}^N \rme^{-d\gamma(\tau^E(X^i) - \tau^-(X^i))}\eqsp,
\end{align}
where $\tau^E(x)$  denotes the (possibly infinite) time for a trajectory initiated at $x= (p,q)$ to reach the energy $E \leq E_{\max}$.

Finally, to estimate the normalizing constant, \cite{rotskoff:vanden-eijden:2019} discretize the energy levels $\{E_0,\dots, E_P\}$ and write their estimator as
\begin{equation}
   \widehat{\const}_{\chunku{X}{1}{N}}^{\NEIS} =  \frac{1}{N}\sum_ {i=1}^N \sum_{\ell = 1}^P \rme^{-d\gamma(\tau^E_\ell(X^i) - \tau^-(X^i))} (E_{\ell} - E_{\ell -1}) \eqsp,
\end{equation}
using an approximation of the identity
\[    \const=\int_\mso \int_{0}^{\infty}\indiacc{\likelihood(x)>L}\rho(x) \rmd L \rmd x =\int_0^\infty \PP_{X\sim\rho}(\likelihood(X)> L) \rmd L\eqsp,
\]
which is at the core of nested sampling \cite{chopin:robert:2010}.
\subsection{\InFiNE\ with exit times}
\label{sec:infine_stopping_times}
Consider $\mso \subset \rset^d$ and let $\transfo$ be a $\rmc^1$-diffeomorphism on $\rset^d$. We introduce here an estimator based on the forward and backward orbits in $\mso$ associated with $\transfo$. Define the exit times
$\tau^{+} : \rset^d \to \nset$ and $\tau^{-} : \rset^d \to \nset_-$, given, for
all $x \in \rset^d$, by
\begin{align}
\label{eq:definition-tau-+--}
&\tau^{+}(x)=\inf\{k\geq 1\, :  \,  \transfo^{k}(x) \not \in \mso\} \eqsp, \\
&\tau^{-}(x)=\sup\{k\leq -1\, :  \,  \transfo^{k}(x) \not \in \mso\} \eqsp,
\end{align}
with the convention $\inf \emptyset = +\infty$ and
$\sup \emptyset = - \infty$, and set
\begin{equation}
  \label{eq:def_rmi}
  \rmi = \{(x,k) \in \mso\times \zset\,:\, k \in
\intentier{\tau^-(x)+1}{\tau^+(x)-1}\} \eqsp.
\end{equation}
For any $k \in \zset$, define $\rho_k : \rset^d \to \rset_+$ by
\begin{equation}
\label{eq:definition-rho-k}
    \rho_k(x)= \rho(\transfo^{-k}(x))
    \JacOp{\transfo^{-k}}(x) \indi{\rmi}(x,-k)\eqsp.
\end{equation}
The density $\rho_k$ is the push-forward
of $\indi{\rmi}(x,k)\rho({x})$ by $\transfo^{k}$, \ie~for any $k \in \zset$ and any bounded function $g:\rset^d \to \rset$,
\begin{equation}
    \label{eq:inf_non_eq_av_0}
    \int_\mso g(y)    \rho_k(y)\rmd y =
  \int_\mso g(\transfo^{k}(x)) \indi{\rmi}(x,k)\rho(x)\rmd x  \eqsp.
\end{equation}
Consider the following assumption:
\begin{assumption}
  \label{assumption:z_ne_finite}
  The nonnegative sequence $(\varpi_k)_{k\in\zset}$ satisfies $\varpi_0 > 0$ and
\begin{equation}
\label{eq:def_z_ne}
    \constT^\varpi = \int_\mso\sum_{k\in \zset}  \varpi_k \rho_k(x) \rmd x = \int_\mso\sum_{k\in \zset}  \varpi_k \rho(\transfo^k(x))  {\JacOp{\transfo^k}(x)} \1_{\rmi}(x,k) \rmd x< \infty\eqsp.
  \end{equation}
 \end{assumption}
Consider the pdf
\begin{equation}\label{eq:rhoT_stopping_times}
    \rhoT(x) =  \frac{1}{\constT^{\varpi}} \sum_{k \in\zset}\varpi_k \rho_k(x)\eqsp,
  \end{equation}
where $\constT^{\varpi}$ is the normalizing constant.
This is a \textit{non-equilibrium} distribution, since $\rhoT$ is not invariant by $\transfo$ in general.
Using $\rhoT$ as an importance distribution to obtain an unbiased estimator of $\int \dummy(x) \proposal(x) \rmd x$ is feasible since as $\varpi_0>0$,  $\sup_{x \in \mso} \proposal(x)/\rhoT(x) \leq \constT/\varpi_0 < \infty$, hence
\[
\int_\mso \dummy(x) \rho(x)  \rmd x =\int_\mso \left(\dummy(x) \frac{\rho(x)}{\rhoT(x)}\right) \rhoT(x)  \rmd x\eqsp.
\]
From \eqref{eq:inf_non_eq_av_0}, the right hand side can be computed using the following key result.
\begin{theorem}
 \label{theo:inf_non_eq}
 For any $f:\rset^d \to \rset$ measurable bounded function, we have
\begin{equation}
\label{eq:key-relation-ft}
\int_{\mso} \dummy(x) \rho(x)  \rmd x =
\int_{\mso} \sum\nolimits_{k\in\zset}  \dummy(\transfo^{k}(x)) w_k(x) \rho(x)  \rmd x \eqsp,
\end{equation}
where, for any $x \in \rset^d$ and $k \in\zset$,
\begin{equation}
  \label{eq:def_w_k_stop}
    w_{k}(x) =  \left.  \varpi_k\rho_{-k}(x) \middle / \left. \sum\nolimits_{j\in\zset} \varpi_{j+k}\rho_{j}(x) \right. \right. \eqsp.
\end{equation}
\end{theorem}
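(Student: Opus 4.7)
The plan is to adapt the proof of the key identity \eqref{eq:key-relation} from the unrestricted case, carrying along the indicators $\1_{\rmi}$ that encode the exit-time restriction. The starting observation is that for every $x\in\mso$ we have $\tau^-(x)\leq -1<0<1\leq\tau^+(x)$, so $(x,0)\in\rmi$ and hence $\rho_0(x)=\rho(x)$ on $\mso$. Combined with $\varpi_0>0$ and \Cref{assumption:z_ne_finite}, this yields $\rho(x)/\rhoT(x)\leq \constT^\varpi/\varpi_0$ on $\mso$, so the importance sampling identity applies:
\begin{equation*}
\int_\mso f(x)\rho(x)\rmd x=\int_\mso f(y)\frac{\rho(y)}{\rhoT(y)}\rhoT(y)\rmd y=\frac{1}{\constT^\varpi}\sum_{k\in\zset}\varpi_k\int_\mso f(y)\frac{\rho(y)}{\rhoT(y)}\rho_k(y)\rmd y.
\end{equation*}

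Next I would apply the pushforward identity \eqref{eq:inf_non_eq_av_0} term by term with $g(y)=f(y)\rho(y)/\rhoT(y)$, turning each integral against $\rho_k$ into an integral against $\rho$ after the change of variables $y=\transfo^k(x)$:
\begin{equation*}
\int_\mso f(y)\frac{\rho(y)}{\rhoT(y)}\rho_k(y)\rmd y=\int_\mso f(\transfo^k(x))\frac{\rho(\transfo^k(x))}{\rhoT(\transfo^k(x))}\1_\rmi(x,k)\rho(x)\rmd x.
\end{equation*}

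The heart of the argument, and the main bookkeeping obstacle, is then to identify
$$\frac{\varpi_k\rho(\transfo^k(x))}{\constT^\varpi\rhoT(\transfo^k(x))}\,\1_\rmi(x,k)=w_k(x).$$
For this I would establish a shift property for the exit times: if $(x,k)\in\rmi$, then $\tau^\pm(\transfo^k(x))=\tau^\pm(x)-k$, which follows from the definition \eqref{eq:definition-tau-+--} together with the group property of the iterates $\transfo^j$. A direct consequence is the crucial identity $\1_\rmi(\transfo^k(x),-j)=\1_\rmi(x,k-j)$ whenever $(x,k)\in\rmi$. Substituting into \eqref{eq:rhoT_stopping_times} and using the chain rule for Jacobians $\JacOp{\transfo^{-j}}(\transfo^k(x))=\JacOp{\transfo^{k-j}}(x)/\JacOp{\transfo^{k}}(x)$, reindexing $\ell=k-j$, gives
\begin{equation*}
\rhoT(\transfo^k(x))=\frac{1}{\constT^\varpi\JacOp{\transfo^k}(x)}\sum_{\ell\in\zset}\varpi_{k-\ell}\rho(\transfo^\ell(x))\JacOp{\transfo^\ell}(x)\1_\rmi(x,\ell)=\frac{1}{\constT^\varpi\JacOp{\transfo^k}(x)}\sum_{j\in\zset}\varpi_{j+k}\rho_j(x),
\end{equation*}
which, together with $\rho_{-k}(x)=\rho(\transfo^k(x))\JacOp{\transfo^k}(x)\1_\rmi(x,k)$, is precisely the claimed expression for $w_k(x)$ in \eqref{eq:def_w_k_stop}.

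Combining these pieces yields \eqref{eq:key-relation-ft} term by term; summing over $k\in\zset$ is justified since for fixed $x\in\mso$ the index $k$ effectively ranges over the finite set $\intentier{\tau^-(x)+1}{\tau^+(x)-1}$ via the indicator embedded in $w_k$, and the total integrand is dominated by $\|f\|_\infty$ times the integrand defining $\constT^\varpi$ which is integrable under \Cref{assumption:z_ne_finite}. Exchange of sum and integral is then licensed by Fubini--Tonelli, completing the proof. The only subtle step is the shift identity for the exit times and the resulting collapse of the indicators $\1_\rmi(\transfo^k(x),-j)$ into $\1_\rmi(x,k-j)$; the rest is algebraic rearrangement identical in spirit to \Cref{app:proof:def_w_k}.
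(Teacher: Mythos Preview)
Your proposal is correct and follows essentially the same approach as the paper: apply the IS identity with $\rhoT$, push forward each $\rho_k$ via \eqref{eq:inf_non_eq_av_0}, and then identify the resulting weight with $w_k$ using the Jacobian chain rule together with the shift property of the exit times (which the paper states equivalently as $(x,-j)\in\rmi \Leftrightarrow (\transfo^k(x),-j-k)\in\rmi$ when $(x,k)\in\rmi$). One minor remark: your claim that the sum over $k$ is finite for fixed $x$ need not hold if $\tau^\pm(x)=\pm\infty$, but your alternative Fubini--Tonelli justification via domination (more precisely, $g=f\rho/\rhoT$ is bounded on $\mso$ and $\sum_k\varpi_k\int_\mso|g(y)|\rho_k(y)\rmd y\leq\|g\|_\infty\constT^\varpi<\infty$) is the correct route.
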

\begin{proof}
Let $f:\rset^d\to\rset$ be a measurable bounded function. By \eqref{eq:inf_non_eq_av_0}, writing $g\leftarrow f\rho/\rhoT$,
\begin{multline*}
    \int_\mso \dummy(x) \rho(x)  \rmd x =\int_\mso \left(\dummy(x) \frac{\rho(x)}{\rhoT(x)}\right) \rhoT(x)  \rmd x \\
    = \int_\mso \sum_{k\in\zset}\left(\dummy(\transfo^k(x)) \frac{\varpi_k \rho(\transfo^k(x))\indi{\rmi}(x,k)}{\constT^\varpi \rhoT(\transfo^k(x))}\right) \rho(x)\rmd x\eqsp.
\end{multline*}
We now need to prove: 
\begin{multline*}\frac{\varpi_k \rho(\transfo^k(x))\indi{\rmi}(x,k)}{\constT^\varpi \rhoT(\transfo^k(x))}=  \frac{\varpi_k \rho(\transfo^k(x))\indi{\rmi}(x,k)}{\indi{\rmi}(x,k)\sum_{i\in\zset}\varpi_i\rho_i(\transfo^k(x))}=\frac{  \varpi_k\rho_{-k}(x)}{\left. \sum\nolimits_{j\in\zset} \varpi_{j+k}\rho_{j}(x) \right.}= w_k(x) \eqsp,
\end{multline*}
with the convention $0/0=0$.
We thus need to show that for any $x \in \mso$, $k\in\zset$,
\begin{align*}
  \indi{\rmi}(x,k)\sum_{i\in\zset}  \varpi_i\rho_i(\transfo^k(x))
 & =  \frac{\indi{\rmi}(x,k)}{\JacOp{\transfo^{k}}(x)} \sum_{j\in\zset} \varpi_{j+k} \rho_j(x)  \eqsp.
\end{align*}
Using the identity $\JacOp{\transfo^{-i+k}}(x)=\JacOp{\transfo^{-i}}(\transfo^k(x)) \JacOp{\transfo^{k}}(x)$, we obtain
\begin{align*}
    \indi{\rmi}(x,k)\sum_{i\in\zset} \varpi_i \rho_i(\transfo^k(x)) &=   \sum_{i\in\zset}  \indi{\rmi}(x,k)   \varpi_i\rho(\transfo^{-i}(\transfo^k(x))) {\JacOp{\transfo^{-i}}(\transfo^k(x))} \1_{\rmi}(\transfo^k(x),-i) \\
 &= \frac{1}{\JacOp{\transfo^{k}}(x)}\sum_{i\in\zset}  \indi{\rmi}(x,k) \varpi_i\rho(\transfo^{-i+k}(x)) {\JacOp{\transfo^{-i+k}}(x)} \1_{\rmi}(\transfo^k(x),-i) \\
 &=  \frac{1}{\JacOp{\transfo^{k}}(x)}\sum_{j \in \zset}  \varpi_{j+k}\rho(\transfo^{-j}(x)) {\JacOp{\transfo^{-j}}(x)} \1_{\rmi}(\transfo^k(x),-j-k)\indi{\rmi}(x,k)
\end{align*}
Note that if $(x,k) \in \rmi$, we have $(x,-j)\in \rmi$ if and only if
$(\transfo^k(x),-j-k) \in \rmi$ by definition of $\rmi$ \eqref{eq:def_rmi}. The proof is concluded by noting that:
\begin{equation*}
    \1_{\rmi}(\transfo^k(x)),-j-k)\indi{\rmi}(x,k) =\indi{\rmi}(x,-j)\indi{\rmi}(x,k) \eqsp.
\end{equation*}
\end{proof}

\section{Iterated SIR}
\label{app:i-SIR}
Let us recall the principle of the Sampling Importance Resampling method (SIR; \citet{rubin1987comment,smith1992bayesian}) whose goal is to approximately sample from the target distribution $\target$ using samples drawn from a proposal distribution $\proposal$.

In SIR, a $N$-\iid\ sample $\chunku{X}{1}{N}$ is first generated from the proposal distribution $\proposal$. A sample $X^*$ is approximately drawn from the target $\target$ by choosing randomly a value in $\chunku{X}{1}{N}$ with probabilities proportional to the importance weights $\{\likelihood(X^i)\}_{i = 1}^N$, where $\likelihood(x)= \target(x)/\proposal(x)$. Note that the importance weights are required to be known only up to a constant factor.

For SIR, as $N \to \infty$, the sample $X^*$ is \emph{asymptotically} distributed according to $\target$; see~\cite{smith1992bayesian}. 

A subsequent algorithm  is the \emph{iterated SIR} (i-SIR) \citep{andrieu2010particle}. Here, $N$ is not necessarily large ($N\geq 2$), the whole process of sampling a set of proposals, computing the importance weights, and picking a  candidate, is iterated. At the $n$-th step of i-SIR, the active set of $N$ proposals $\chunku{X_n}{1}{N}$ and the index $I_n \in [N]$ of the conditioning proposal are kept. First i-SIR  updates the active set  by setting $X_{n+1}^{I_n}= X_n^{I_n}$ (keep the conditioning proposal) and then draw independently $\chunkum{X_{n+1}}{1}{N}{I_n}$ from $\proposal$.
Then it selects the next proposal index $I_{n+1} \in [N]$ by sampling with probability
proportional to $\{\weightfunc(X_{n+1}^i)\}_{i=1}^N$.
As shown in \cite{andrieu2010particle}, this algorithm  defines  a partially collapsed Gibbs sampler (PCG) of the augmented distribution 
$$
\bar{\measpi}(\chunku{x}{1}{N},i)=\frac{1}{N}  \target(x^i) \prod_{j \neq i} \proposal(x^j) =\frac{1}{N} \weightfunc(x^i) \prod_{j=1}^N \proposal(x^j) \eqsp.
$$
The PCG sampler can be shown to be ergodic provided that $\proposal$ and $\target$ are continuous and $\proposal$  is  positive on the support of $\target$. If in addition the importance weights are bounded, the Gibbs sampler can be shown to be uniformly geometrically ergodic  \citep{lindsten2015uniform,andrieu2018uniform}.
It follows that the distribution of the conditioning proposal $X_n^*= X_n^{I_n}$ converges to $\pi$ as the iteration index $n$ goes to infinity. Indeed, for any integrable function $f$ on $\rset^d$, with $(\chunk{X}{1}{N},I) \sim \bar{\measpi}$,
$$
\PE_{}[f(X^I)]= \int \sum_{i=1}^N f(x^i)  \bar{\measpi}(\chunku{x}{1}{N},i) \rmd \chunku{x}{1}{N} = N^{-1} \sum_{i=1}^N \int f(x^i) \target(x^i) \rmd x_i = \int f(x) \target(x) \rmd x \eqsp.
$$
When the state space dimension $d$ increases, designing a proposal distribution $\proposal$ guaranteeing proper mixing properties becomes more and more difficult. A way to circumvent this problem is to use dependent proposals, allowing in particular \emph{local moves} around the conditioning orbit. To implement this idea, for each $i \in [N]$, we define a proposal transition, $r_i(x^i; \chunkum{x}{1}{N}{i})$ which defines the the conditional distribution of $\chunkum{X}{1}{N}{i}$ given  $X^i= x^i$. The key property validating i-SIR with dependent proposals  is that all one-dimensional marginal distributions are equal to $\proposal$, which requires that for each $i,j  \in [N]$,
\begin{equation}
\label{eq:conditional-decomposition}
\proposal(x^i) r_i(x^i;\chunkum{x}{1}{N}{i})=
\proposal(x^j) r_j(x^{j};\chunkum{x}{1}{N}{j})\eqsp.
\end{equation}
The (unconditional) joint distribution of the particles is therefore defined as
\begin{equation}
\label{eq:joint-distribution}
\proposal_N\bigl(\chunku{x}{1}{N}\bigr) = \proposal(x^1) r_1(x^1;\chunkum{x}{1}{N}{1}) \eqsp.
\end{equation}
The resulting modification of the i-SIR algorithm is straightforward: $\chunkum{X}{1}{N}{I_n}$ is sampled jointly from the conditional distribution $r_{I_n}(X_n^{I_n},\cdot)$ rather than independently from $\proposal$.

\section{Additional Experiments}
\subsection{Normalizing constant estimation}
\label{sup:sec:additional_xp}
We consider here the problem of the estimation of the normalizing constant of Cauchy mixtures. The Cauchy distribution with scale $\sigma$ has a pdf defined by $\operatorname{Cauchy}(x; \mu, \sigma) = [\pi\sigma(1+ \{(x-\mu)/\sigma\}^2]^{-1}$. The target distribution is a product of mixtures of two Cauchy distributions, \begin{equation*}
\label{eq:cauchymixture}
\target(x)= \prod_{i=1}^{n} \frac{1}{2}\left[\operatorname{Cauchy}\left(x_{i} ; \mu, \sigma\right)+\operatorname{Cauchy}\left(x_{i} ; -\mu, \sigma\right)\right]
,\quad \mu = 5, \sigma = 1\eqsp.
\end{equation*}
\IFIS-IS is compared with IS estimator using the same proposal $\proposal$. We also compare \NEO-IS to Neural IS \cite{muller2018neural} with a Cauchy as base distribution.
\begin{figure}[!ht]
    \centering
    \includegraphics[width=0.4 \linewidth]{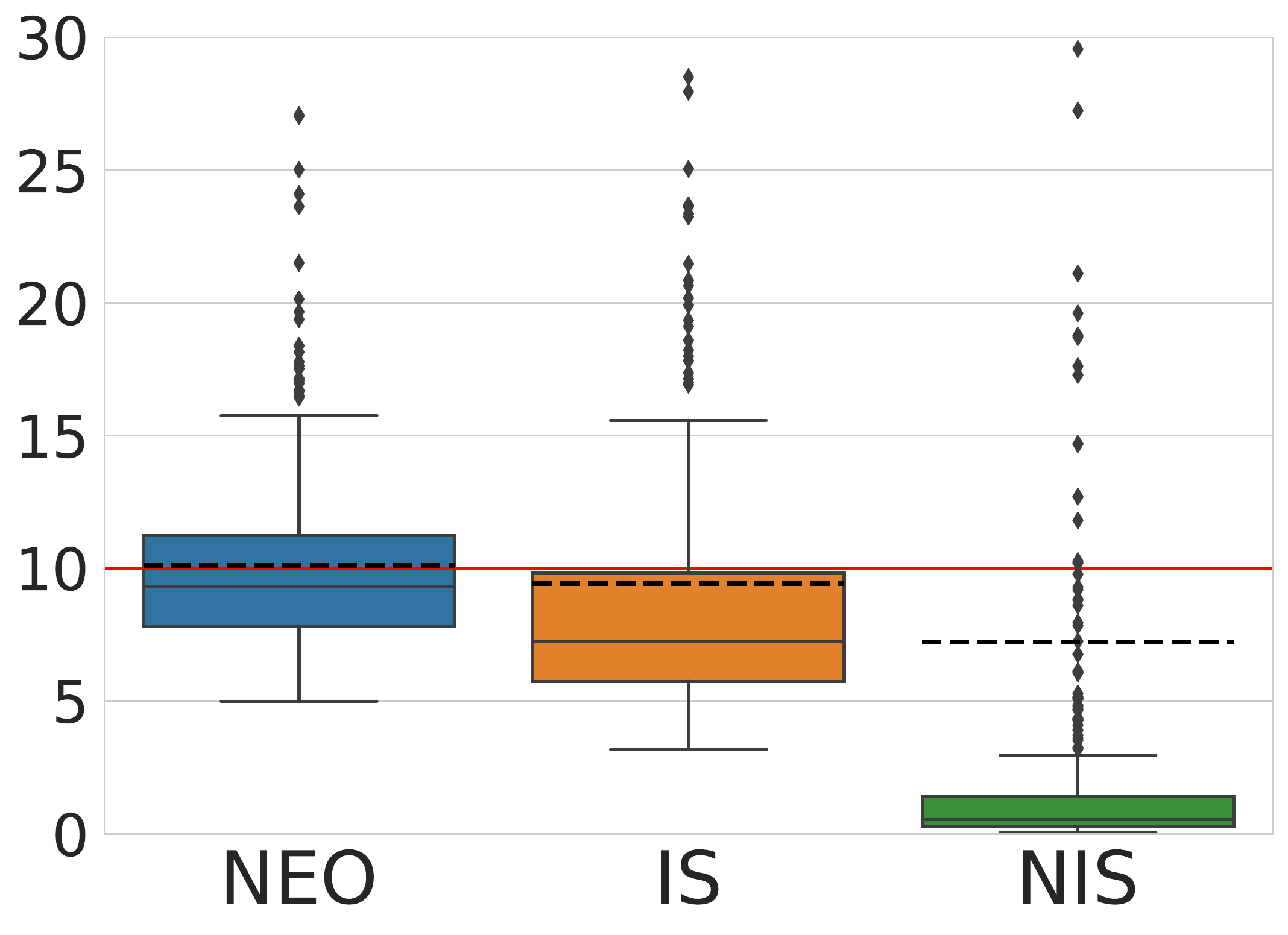}
    \includegraphics[width=0.4 \linewidth]{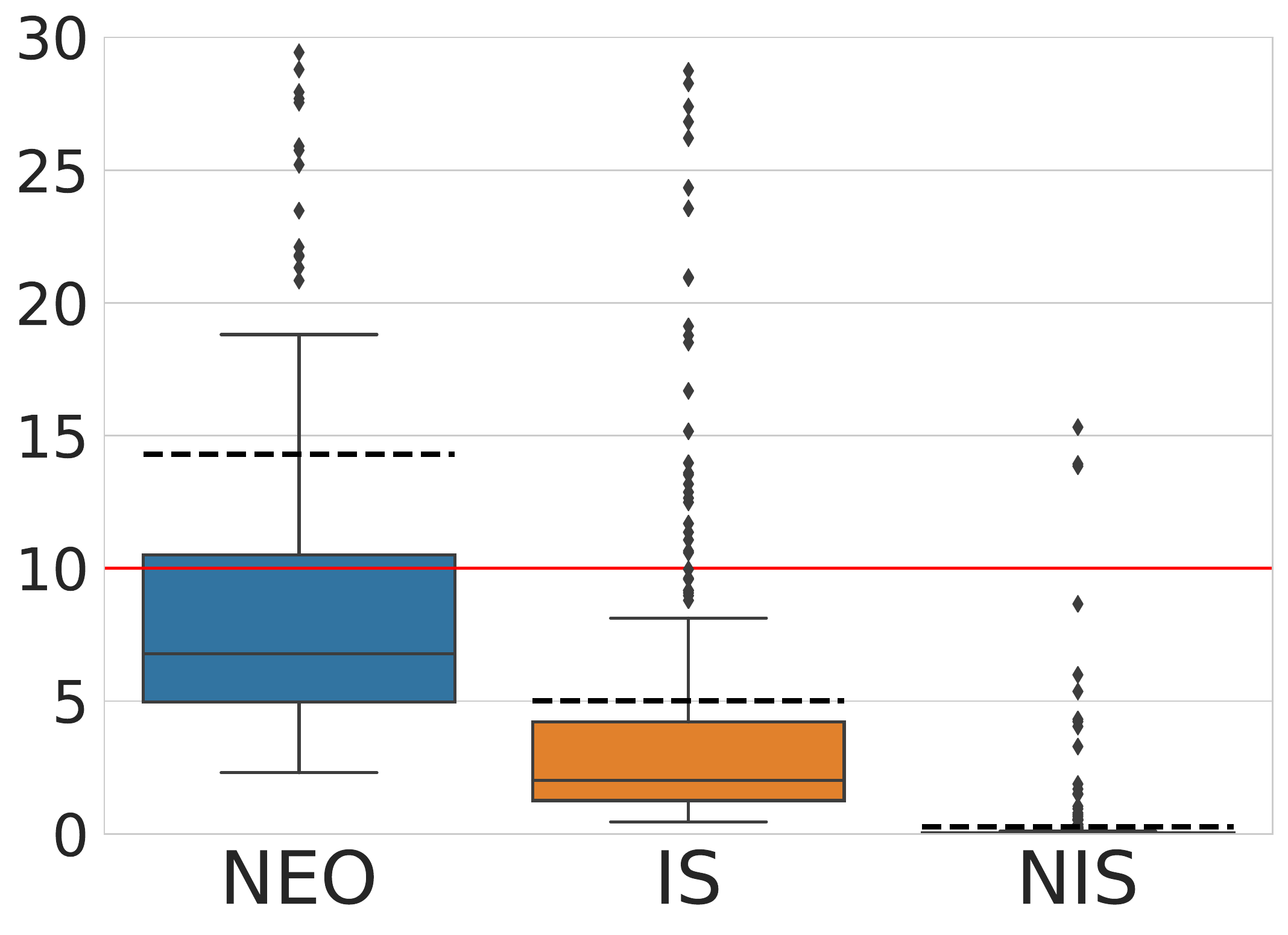}
    \caption{Boxplots of 500 independent estimations of the normalizing constant of the Cauchy mixture in dimension $d={10,15}$ (top, bottom). The true value is given by the red line. The figure displays the median (solid lines), the interquartile range, and the mean (dashed lines) over the 500 runs}
    \label{fig:simple_gauss}
\end{figure}

Finally, we compare \NEO-IS with \NEIS\footnote{The code from
  \cite{rotskoff:vanden-eijden:2019} we run is available at
  \url{https://gitlab.com/rotskoff/trajectory_estimators/-/tree/master}.}. We
consider here \texttt{MG25} in dimension 5 and 10, where all the
covariances of the Gaussian distributions are diagonal and equal to
$0.005\Id$. \NEIS\ and \NEO-IS are run for the same computational
time. We add an IS scheme as a baseline for comparison. All algorithms (\NEO-IS, \NEIS, IS) are run for  7.20s and 11.30s wall clock time respectively for $d = 5$ and $d = 10$. For \NEO-IS,
we use a conformal transform with $h=0.1$, $K=10$ and $\gamma =
1$. For \NEIS, we choose $\gamma=1$ and consider a stepsize
$h= 10^{-4}$ corresponding to an optimal trade-off between the
discretization bias inherent to \NEIS and its computational budget. We
can observe that \NEO-IS always outperforms \NEIS, which suffers from a non-negligeable bias if the stepsize $h$ is not chosen small enough. 
\begin{figure}[!ht]
    \centering
    \includegraphics[width=0.4 \linewidth]{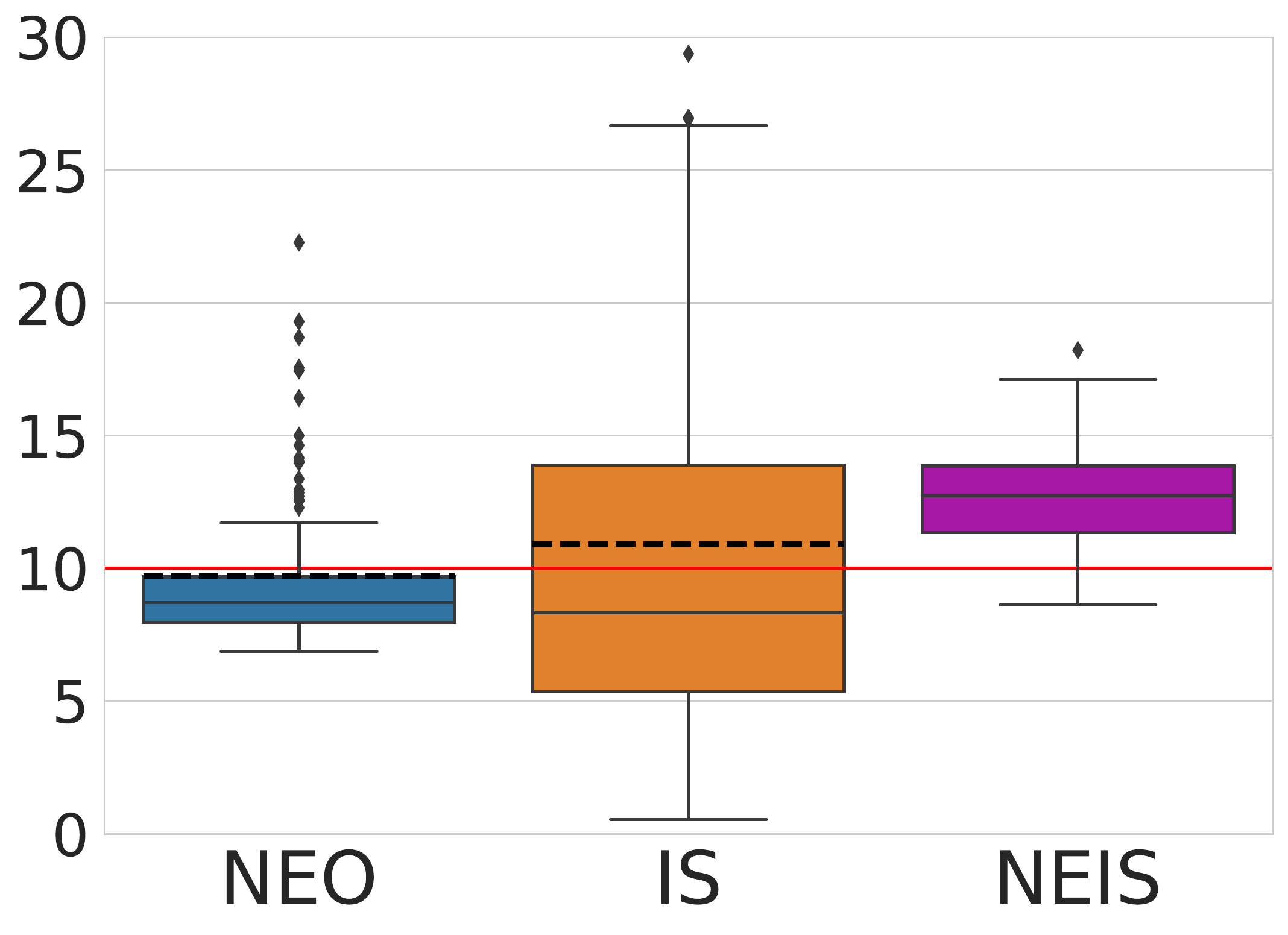}
    \includegraphics[width=0.4 \linewidth]{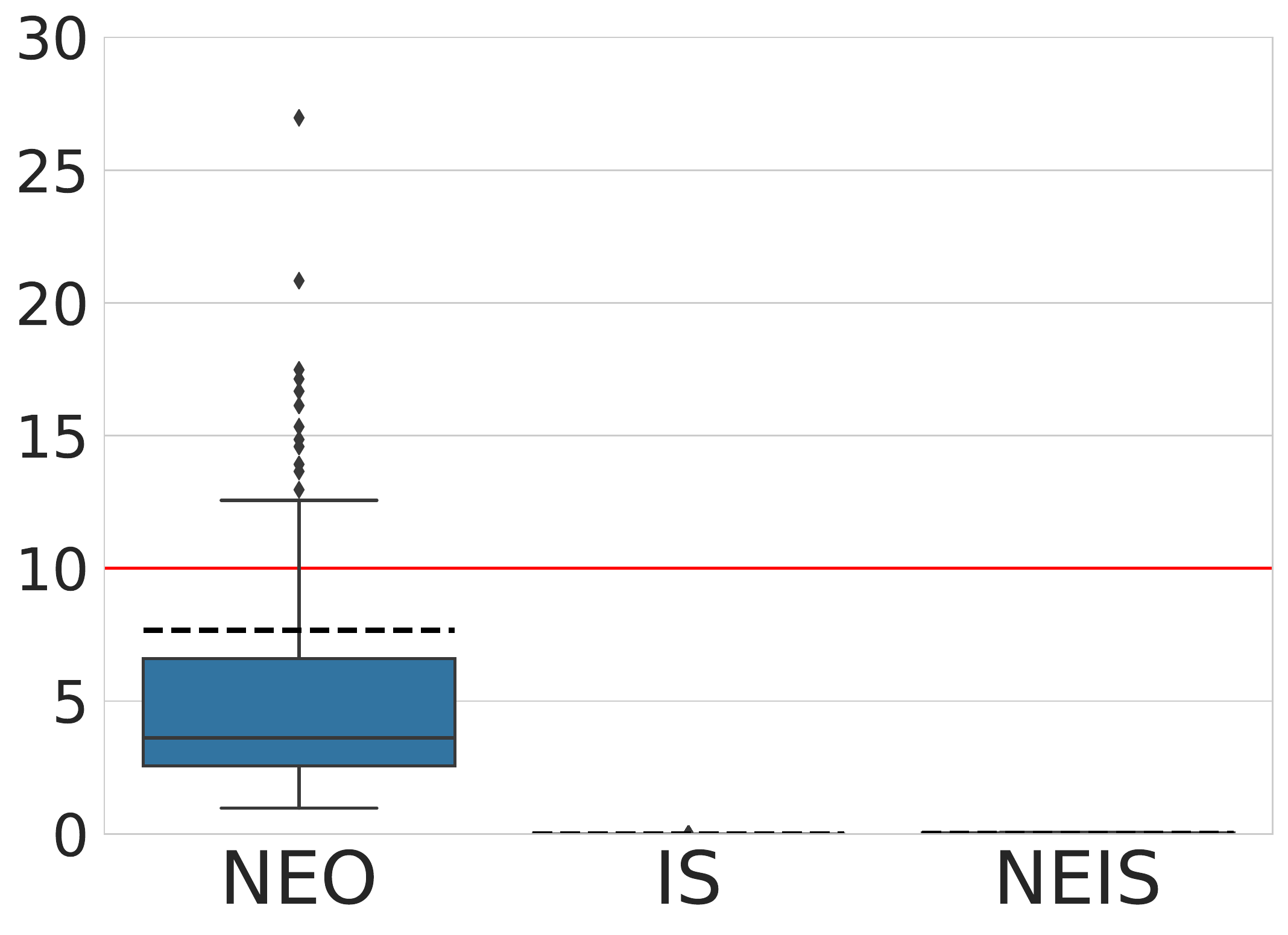}
    \caption{$\NEO$ v. $\NEIS$. 25 GM with $\sigma^2 = 0.005$, $ d = 5$. 
     500 runs each.}
    \label{fig:simple_gauss_neis}
\end{figure}
\begin{comment}
\begin{figure}[!ht]
    \centering
    \includegraphics[width= 1.\linewidth]{}
    \includegraphics[width= 1.\linewidth]{}
    \caption{top: Cauchy dim 15, bottom: Cauchy dim 10. This is temporary}
\end{figure}
\end{comment}

\subsection{Log-likelihood estimation}
\label{subsec:log-likelihood-constant}
We finally present the evaluation of the log-likelihood of the VAE introduced in \Cref{subsec:gibbs_inpaint}: given a test set $\mathcal{T} = \{\obs_i\}_{i=1}^{M_{\mathcal{T}}}$, we estimate $\sum_{i=1}^{M_{\mathcal{T}}}\log p_{\theta^*} (\obs_i)$. We also estimate similarly  the log-likelihood of an Importance Weighted Auto Encoder (IWAE) \cite{burda:grosse:2015}. 
Following \cite{wu:burda:grosse:2016}, we compare IS, AIS, and \NEO-IS. As previously, AIS, IS, and \NEO-IS are given a similar computational budget, choosing here $K=12$, $N= 5\cdot 10^3$. For \NEO, we choose $\gamma = 1.$ and $h=0.2$. Similarly, the stepsize of HMC transitions in AIS is $h=0.1$ in order to achieve an acceptance ratio of around $0.6$ in the HMC transitions. We report in \Cref{tab:ll_vae} the log-likelihood computed on the test set for VAE, IWAE with latent dimension in $\{16, 32\}$. For the same computational budget, \IFIS-IS yields consistently better values for the estimation of the log-likelihood of the VAE.
\begin{table}[]
\centering
\begin{tabular}{ |c|c|c|c|c| }
\hline
Model & VAE, $d = 32$ & VAE, $d = 16$ & IWAE, $d = 32$ & IWAE, $d = 16$ \\
\hline
IS & -90.17 & -90.44 &  -88.76 &-90.13\\
\hline
AIS  & -89.67 & -89.97 &  -88.30 &-89.61 \\
\hline
\InFiNE-IS & -88.81  & -89.17 &-87.46 & -88.99 \\
\hline
\end{tabular}
\caption{Evaluation of the log-likelihood (normalizing constant) of different Variational Auto Encoders.}
\label{tab:ll_vae}
\end{table}
\subsection{Gibbs inpainting}
We display here additional results for the Gibbs inpainting experiment presented in \Cref{subsec:gibbs_inpaint}. We emphasize that the starting images are chosen at random in the test set.

 \begin{figure}
     \centering
\includegraphics[width = .99\linewidth]{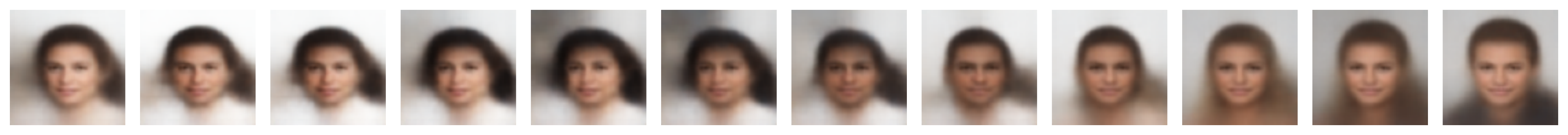}
\includegraphics[width = .99\linewidth]{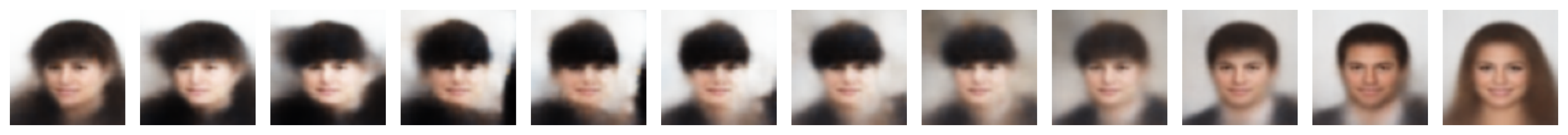}
\includegraphics[width = .99\linewidth]{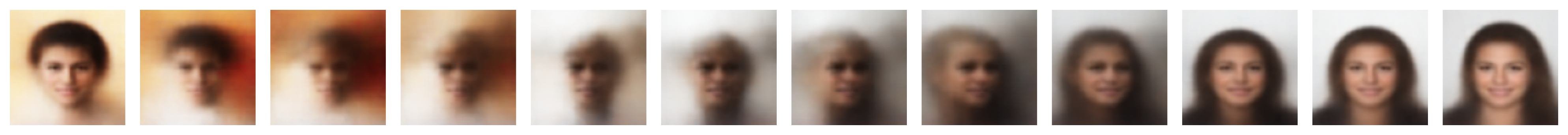}
\includegraphics[width = .99\linewidth]{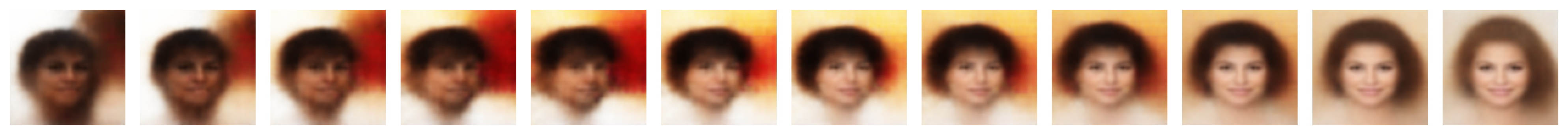}
     \caption{Forward orbits of \NEO-MCMC.}
     \label{fig:gibbs_inpainting}
 \end{figure}
 \begin{figure}
     \centering
     \includegraphics[width = .99\linewidth]{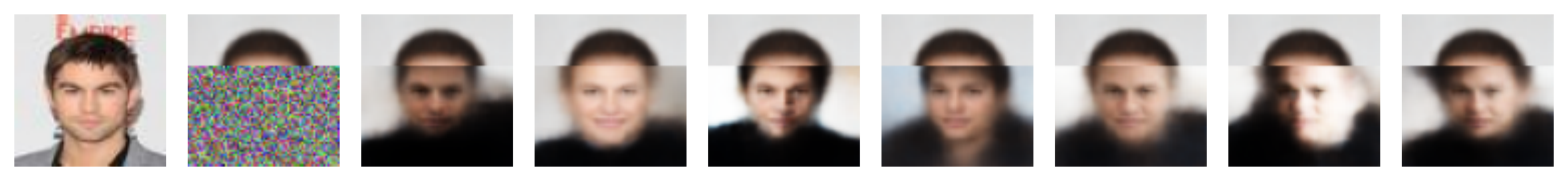}
\includegraphics[width = .99\linewidth]{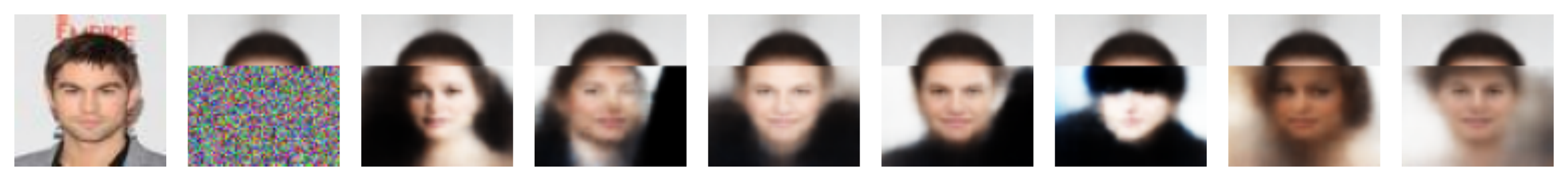}
\includegraphics[width = .99\linewidth]{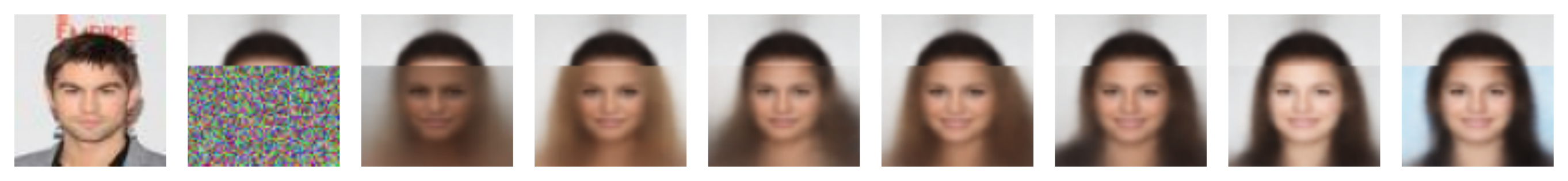}
\vspace{10pt}
\\
     \includegraphics[width = .99\linewidth]{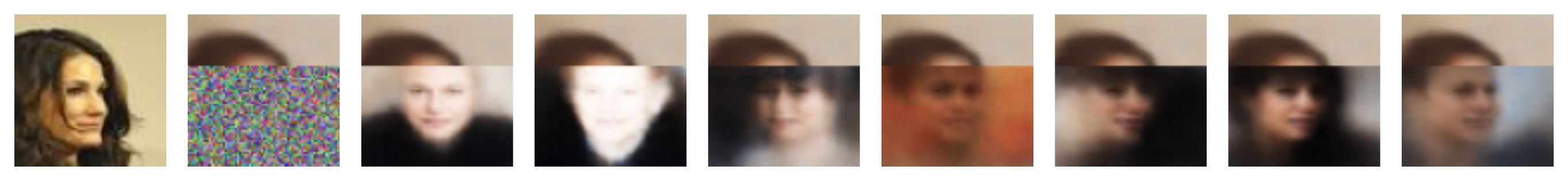}
\includegraphics[width = .99\linewidth]{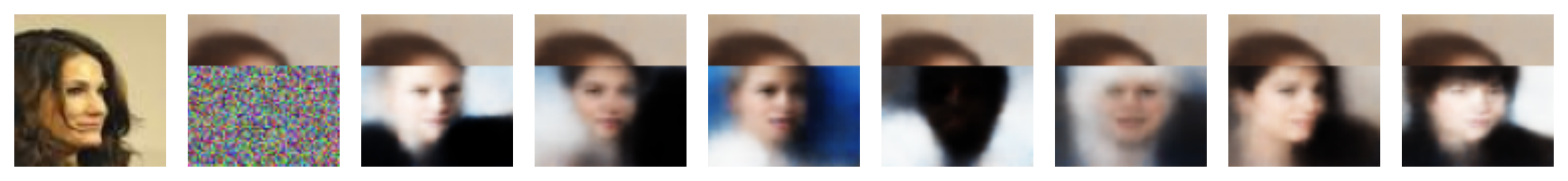}
\includegraphics[width = .99\linewidth]{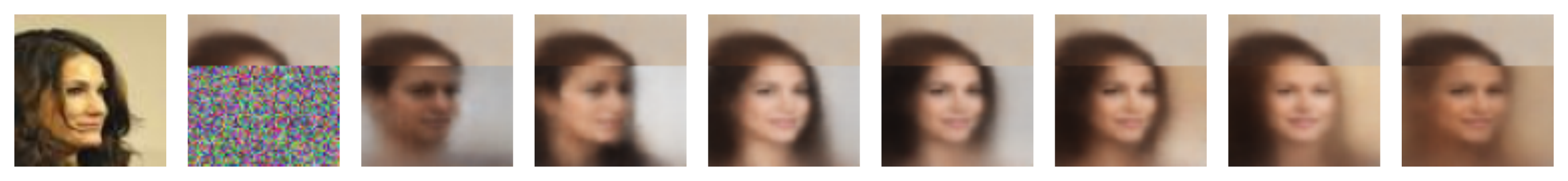}
\vspace{10pt}
\\
     \includegraphics[width = .99\linewidth]{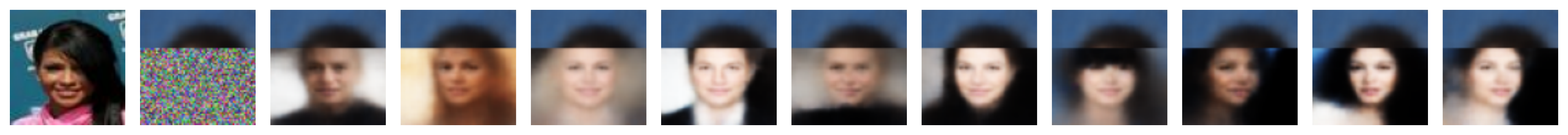}
\includegraphics[width = .99\linewidth]{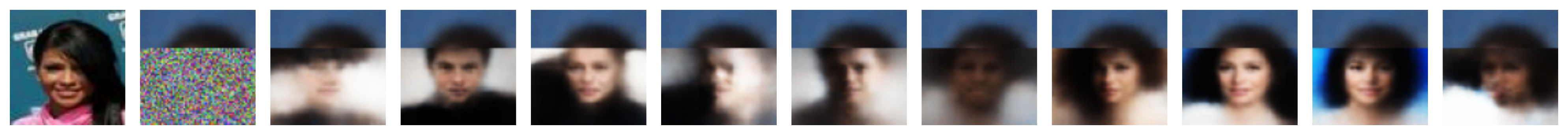}
\includegraphics[width = .99\linewidth]{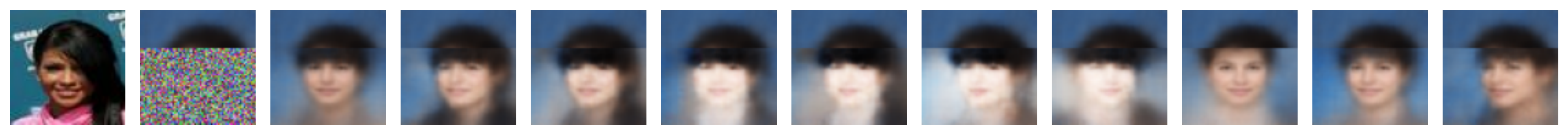}
     \caption{Gibbs inpainting for CelebA dataset. 
    From top to bottom: i-SIR, HMC and \NEO-MCMC: From left to right, original image, blurred image to reconstruct, and output every 5 iterations of the Markov chain.}
     \label{fig:gibbs_inpainting}
 \end{figure}
\section{\IFIS-VAE}
\label{sec:NEO-VAE}

\def\Uset{\mathsf{U}}
Denote by $p_\theta(x,\obs)$ the joint distribution of the observation $\obs \in \rset^p$  and the latent variable $x \in \rset^d$. The marginal likelihood is given, for $\obs \in \rset^p$ by $p_\theta(\obs)= \int p_\theta(x,\obs) \rmd x$. Given a training set $\mathcal{D}= \{ \obs_i \}_{i=1}^M$, the objective is to estimate $\theta$ by maximizing the likelihood, \ie\ maximizing $\log p_\theta(\mathcal{D}) = \sum_{i=1}^M \log p_\theta(\obs_i)$. 
Variational inference (VI)
provides us with a tool to simultaneously approximate the intractable
posterior $p_\theta(x|\obs)$ and maximize the marginal likelihood
$p_\theta(\mathcal{D})$ in the parameter $\theta$. This is achieved by introducing a
 parametric family  $\{ q_\phi(x|\obs), \phi \in \Phi\}$ to approximate the posterior $p_\theta(x|\obs)$ and maximizing
the Evidence Lower Bound (ELBO) (see \cite{kingma2019introduction}) $\mathcal{L}_{\operatorname{ELBO}}(\mathcal{D},\theta,\phi)= \sum_{i=1}^M \mathcal{L}_{\operatorname{ELBO}}(\obs_i,\theta,\phi)$ where
\begin{align}
\label{eq:elbo}
\mathcal{L}_{\textup{ELBO}}(\obs,\theta,\phi)&= \int \log\left(\frac{p_\theta(x,\obs)}{q_\phi(x\mid \obs)}\right) q_\phi(x\mid \obs)\rmd x \eqsp\\
&=\log p_\theta(\obs)-\operatorname{KL}(q_\phi(\cdot\mid \obs) \| p_\theta(\cdot\mid \obs) )\eqsp,\nonumber
\end{align}
and $\operatorname{KL}$ is the Kullback–Leibler divergence. 
In the sequel, we set  $\proposal(x)=q_\phi(x\mid \obs)$ and $\likelihood(x)= p_\theta(x,\obs)/ q_\phi(x \mid \obs)$. In such a case,  $\target(x)= \proposal(x) \likelihood(x)/ \const= p_{\theta}(x\mid \obs)$ and  $\const= p_\theta(\obs)$ (in these notations, the dependence in the observation $\obs$ is implicit).

We follow the the auxiliary variational inference framework (AVI) provided by \cite{agakov2004auxiliary}.
We consider a joint distribution $\bar{p}_\theta(x,u,\obs)$ which is such that $p_\theta(\obs)= \int p_\theta(x,u,\obs) \rmd x \rmd u$  where $u \in \mathsf{U}$ is an auxiliary variable (the auxiliary variable can both have discrete and continuous components; when $u$ has discrete components the integrals should be replaced by a sum). Then as the usual VI approach, we consider a parametric family $\{\bar{q}_\phi(x,u|\obs), \phi \in \Phi\}$.
Introducing auxiliary variables loses the tractability of \eqref{eq:elbo} but they allow for their own ELBO as suggested in \cite{agakov2004auxiliary,lawson2019energy} by minimizing 
\begin{equation}\label{eq:AVI_ELBO}
\operatorname{KL}(\bar{q}_\phi(\cdot\mid \obs) \| \bar{p}_\theta(\cdot\mid \obs) ) = \int \bar{q}_\phi(x,u|\obs) \log \left( \frac{\bar{p}_\theta(x,u,\obs)}{\bar{q}_\phi(x,u|\obs)} \right)  \rmd x \rmd u \eqsp.
\end{equation}
 The auxiliary variable $u$ is naturally associated with the extended target $\bar{\measpi}$ defined similar to \Cref{rem:gibbs-interpretation}, 
 \begin{equation}
\label{eq:extended-target}
\bar{\measpi}_N([x,\chunkum{x}{1}{N}{i}],i)= \frac{\estConstC{x}}{N \const} \proposal_N(\chunku{x}{1}{N})
 \end{equation}
 with $(x,u)=([x,\chunkum{x}{1}{N}{i}],i)$,  a shorthand notation for a $N$-tuple $\chunku{x}{1}{N}$ with $x^i= x$,  and, with $r_i$ defined in \eqref{eq:defintion-r-i}, 
\begin{equation}
\label{eq:definition-proposal}
\proposal_N(\chunku{x}{1}{N})= \proposal(x^1) r_1(x^1,\chunku{x}{2}{N}) = \proposal(x^j) r_j(x^j, \chunkum{x}{1}{N}{j}) \eqsp, \quad j \in \{1,\dots,N\}  \eqsp. 
\end{equation}
An extended proposal playing the role of $ \bar{q}_\phi(x,u|\obs)$ is derived from the \IFIS~MCMC sampler, i.e.
\begin{equation}
\label{eq:proposal-extended}
\bar{\proposal}_N([x,\chunkum{x}{1}{N}{i}],i)=  \frac{\estConstC{x}}{N \estConstC{\chunku{x}{1}{N}}} \proposal_N(\chunku{x}{1}{N})  \eqsp.
\end{equation}
where $\estConstC{\chunku{x}{1}{N}}$ is the \IFIS\ estimator \eqref{eq:def_estimator_normal_const} of the normalizing constant.
Note that, by construction,
\begin{equation}
\label{eq:expression-marginal}
\sum_{i=1}^N  \bar{\proposal}_N(\chunku{x}{1}{N},i) = \proposal_N(\chunku{x}{1}{N})
\end{equation}
showing that this joint proposal can be sampled by drawing the proposals $\chunku{x}{1}{N} \sim \rho_N$, then sampling the path index $i \in [N]$ with probability proportional to $(\estConstC{x^i})_{i=1}^N$ (with $\estConstC{x}$ defined in \eqref{eq:def_estimator_normal_const}).
The ratio of \eqref{eq:extended-target} over \eqref{eq:proposal-extended} is
\begin{equation}
\label{eq:ratio-extended}
{\bar{\measpi}_N(\chunku{x}{1}{N},i)}\big/{\bar{\proposal}_N(\chunku{x}{1}{N},i)}= {\estConstC{\chunku{x}{1}{N}}}\big/{\const} \eqsp.
\end{equation}
Thus, we write the augmented ELBO   \eqref{eq:AVI_ELBO}
\begin{align}
 \label{eq:infine_elbo-alt}
\elboneq &= \int_{} \proposal_N( \chunku{x}{1}{N})   \log \estConstC{\chunku{x}{1}{N}} \rmd \chunku{x}{1}{N} = \log \const - \operatorname{KL}( \bar{\proposal}_N | \bar{\measpi}_N )\eqsp,
\end{align}
where we have used \eqref{eq:expression-marginal} and that the ratio ${\bar{\measpi}_N(\chunku{x}{1}{N},i)}\big/{\bar{\proposal}_N(\chunku{x}{1}{N},i)}$ does not depend on
the path index $i$. When $\varpi_k= \delta_{k,0}$, where $\delta_{i,j}$ is the Kronecker symbol, and $\proposal_N(\chunku{x}{1}{N})= \prod_{j=1}^N \proposal(x^j)$, we exactly retrieve the Importance Weighted AutoEncoder (IWAE); see e.g.  \cite{burda:grosse:2015} and in particular the interpretation in \cite{cremer2017reinterpreting}.

Choosing the conformal Hamiltonian introduced in \Cref{subsec:NISestimators} allows for a family of invertible flows that depends on the parameter $\theta$ which itself is directly linked to the target distribution.
\begin{table*}[!t]
\centering
\caption{Negative Log Likelihood estimates for VAE models for different latent space dimensions.}
\label{tab:vae_results2}
\begin{tabular}{c|c|c||c|c||c|c||c|c|}
\cline{2-9}
 & \multicolumn{2}{c||}{$d = 4$} & \multicolumn{2}{c||}{$d = 8$} & \multicolumn{2}{c||}{$d = 16$} & \multicolumn{2}{c|}{$d = 50$} \\ \hline
\multicolumn{1}{|c|}{model} & IS & \InFiNE  & IS & \InFiNE& IS  & \InFiNE & IS & \InFiNE \\ \hline
\multicolumn{1}{|c|}{VAE} & $115.01$&$113.49$&$97.96$&$97.64$&$90.52$&$90.42$&$88.22$&$88.36$\\
\multicolumn{1}{|c|}{IWAE, $N=5$} & $113.33$&$111.83$&$97.19$&$96.61$&$89.34$&$89.05$&$87.49$&$87.27$ \\ 
\multicolumn{1}{|c|}{IWAE, $N=30$} & $111.92$&$110.36$&$96.81$&$95.94$&$88.99$&$88.64$&$86.97$&$86.93$ \\ \hline
\multicolumn{1}{|c|}{\InFiNE\ VAE, $K=3$} & $109.14$&$107.47$&$94.50$&$94.26$&$89.03$&$88.92$&$88.14$&$88.16$ \\
\multicolumn{1}{|c|}{\InFiNE\ VAE, $K=10$} & $110.02$&$107.90$&$94.63$&$94.22$&$89.71$&$88.68$&$88.25$&$86.95$ \\ \hline
\end{tabular}
\end{table*}
Table~\ref{tab:vae_results2} displays the estimated NLL of all models provided by IS and the \InFiNE\ method. It is interesting to note here again that \InFiNE\ improves the training of the VAE when the dimension of the latent space is small to moderate.

\end{document}